\newcommand{\SNRP}{{\mathsf{SNR}}}
\newcommand{\INRP}{{\mathsf{INR}}}
\newcommand{\SNR}{\mathsf{SNR}}
\newcommand{\INR}{\mathsf{INR}}
\newcommand{\DOF}{\mathbf{GDoF}}
\newcommand{\E}{\mathbb{E}}
\newcommand{\bx}{\mathbf{x}}
\newcommand{\by}{\mathbf{y}}
\newcommand{\bz}{\mathbf{z}}
\newcommand{\bs}{\mathbf{s}}
\newcommand{\cM}{\mathcal{M}}
\newcommand{\cgf}[2]{\frac{1}{2}\log\left( \frac{#1}{#2}\right)}
\newcommand{\cg}[1]{\frac{1}{2}\log\left( #1 +1\right)}
\newcommand{\cgfp}[2]{\frac{1}{2}\log^+\left( \frac{#1}{#2}\right)}
\newtheorem{Theo}{Theorem}
\newtheorem{Rem}{Remark}
\newtheorem{Cor}{Corollary}
\newtheorem{claim}{Claim}
\begin{document}

\title{On the Symmetric Feedback Capacity of the $K$-user Cyclic Z-Interference Channel}
\author{Ravi Tandon, \IEEEmembership{Member, IEEE},
Soheil~Mohajer, 
and 
H. Vincent Poor, \IEEEmembership{Fellow,  IEEE}
  \thanks{Manuscript received September 3, 2011; revised July 27, 2012;  accepted December 15, 2012. }
  \thanks{{\footnotesize

      Ravi Tandon  was with the Department of Electrical Engineering, 
      Princeton University, Princeton, NJ, USA. He is now with the
     Department of Electrical and Computer Engineering, Virginia Tech, 
     Blacksburg, VA, USA. (E-mail: tandonr@vt.edu).
            
      Soheil Mohajer  was with the Department of Electrical Engineering, 
      Princeton University, Princeton, NJ, USA. He is now with  the
      Department of Electrical Engineering and Computer Sciences,
      University of California at Berkeley, Berkeley, CA, USA. 
      (E-mail: mohajer@eecs.berkeley.edu).
           
H. Vincent Poor is with the Department of Electrical Engineering, Princeton
     University, Princeton, NJ, USA. (E-mail: poor@princeton.edu).
     
     The work was supported in part by the Air Force Office of Scientific 
     Research under MURI Grant FA-$9550$-$09$-$1$-$0643$ and in 
     part by the DTRA under Grant HDTRA-$07$-$1$-$0037$. The work 
     of Soheil Mohajer is partially supported by The Swiss National Science 
     Foundation under Grant PBELP2-$133369$.  This paper was presented in part at 49th 
     Annual Allerton Conference on Communications, Control and Computing, Monticello, IL, September 2011. }}  
     \thanks{Copyright (c) 2012 IEEE. Personal use of this material is permitted.  However, permission to use this material for any other purposes must be obtained from the IEEE by sending a request to pubs-permissions@ieee.org.}}

\markboth{IEEE Transactions on Information Theory,~Vol.~x, No.x~, Month~20xx}%
{Shell \MakeLowercase{\textit{et al.}}: Bare Demo of IEEEtran.cls for Journals}

\maketitle

\begin{abstract}
The $K$-user cyclic Z-interference channel models a situation in which the $k$th transmitter causes interference only to the $(k-1)$th receiver in a cyclic manner, e.g., the first transmitter causes interference only to the $K$th receiver.
The impact of noiseless feedback on the capacity of this channel is studied by focusing on the Gaussian cyclic Z-interference channel.
To this end, the symmetric feedback capacity of the linear shift deterministic cyclic Z-interference channel (LD-CZIC) is completely characterized for all interference regimes. Using insights from the linear deterministic channel model, the symmetric feedback capacity of the Gaussian cyclic Z-interference channel is characterized up to within a constant number of bits. As a byproduct of the constant gap result, the symmetric generalized degrees of freedom with feedback for the Gaussian cyclic Z-interference channel are also characterized. These results highlight that the symmetric feedback capacities for both linear and Gaussian channel models are in general functions of $K$, the number of users. Furthermore, the capacity gain obtained due to feedback decreases as $K$ increases. 
\end{abstract}

\section{Introduction}
Managing the effects of interference is a key issue in currently deployed wireless networks. Among several ways to mitigate or
perhaps constructively using interference is to make use of cooperation amongst interfering users. In this paper, we focus on one
such important issue by studying the impact of noiseless receiver-to-transmitter feedback on the capacity of the $K$-user cyclic Z-interference
channel (CZIC). In this model, $K$ transmitters intend to transmit independent messages to $K$ respective receivers and the $k$th
transmitter causes interference to the $(k-1)$th receiver in a cyclic manner. The motivation for studying the cyclic Z-interference channel comes from the modified Wyner model \cite{SomekhWyner2007}, which describes the soft handoff scenario of a cellular network. In the original Wyner model \cite{WynerModel1994},
each receiver can suffer interference from its adjacent transmitters. In the modified Wyner model, one can assume that
the terminals are situated along a circular array (see Figure \ref{figurewynermodel}). If in addition, we assume that the
mobile communicates with the intended base-station on its left (or right), while suffering interference due to the BS
to its right (or left), then the resulting channel model is the $K$-user CZIC, which is considered in this paper. The $K$-user Gaussian CZIC (G-CZIC) \emph{without feedback} was recently investigated in \cite{WeiYuCIC}, where it was
shown that the generalized degrees-of-freedom of the symmetric $K$-user G-CZIC are the same as for the $2$-user Gaussian
interference channel. By an interesting generalization of the results of Etkin, Tse and Wang \cite{ETW2008}, the approximate
symmetric capacity was characterized for the weak interference regime and the exact capacity region was characterized for the
strong interference regime. A simpler variation of the Gaussian $K$-user CZIC has been studied in \cite{ErkipCIC}, where the
results of \cite{WeiYuCIC} are strengthened for the $3$-user case. It is shown in \cite{ErkipCIC} that a generalization of
the Han-Kobayashi \cite{Han:1981} scheme can achieve sum-capacity for some interference regimes.

In this paper we focus on the $K$-user CZIC \emph{with feedback}, i.e., we assume the presence of noiseless and causal feedback from the $k$th receiver to the $k$th transmitter. For $K=2$, this model reduces to the conventional $2$-user interference channel with feedback. For $K>2$, this model is a
special case of the general $K$-user interference channel with feedback (see Figure \ref{figure1model}).
The $2$-user interference channel with various forms of feedback has been investigated recently. Feedback coding schemes for $K$-user
Gaussian interference networks have been developed by Kramer in \cite{KramerIC:2002}. Outer bounds for the $2$-user interference channel
with generalized feedback have been derived in \cite{GK3:2006}, \cite{TuninettiOuter} and \cite{TandonUlukus:2011} (also see references therein).
The $2$-user Gaussian interference channel with noiseless (channel output) feedback was considered in \cite{SuhTseIT} and  the feedback
capacity region was characterized to within two bits. One of the main findings in \cite{SuhTseIT} is that feedback provides
multiplicative gain at high signal-to-noise ratio ($\SNR$) and the gain can be arbitrarily large for certain channel parameters.
The key insights that led to this result were obtained by characterizing the feedback capacity region of the linear
deterministic (LD) $2$-user interference channel. The linear deterministic model despite its simplicity can provide valuable
insights for the Gaussian channel model.
\begin{figure}[t]
\centering
\includegraphics[width=0.45\textwidth]{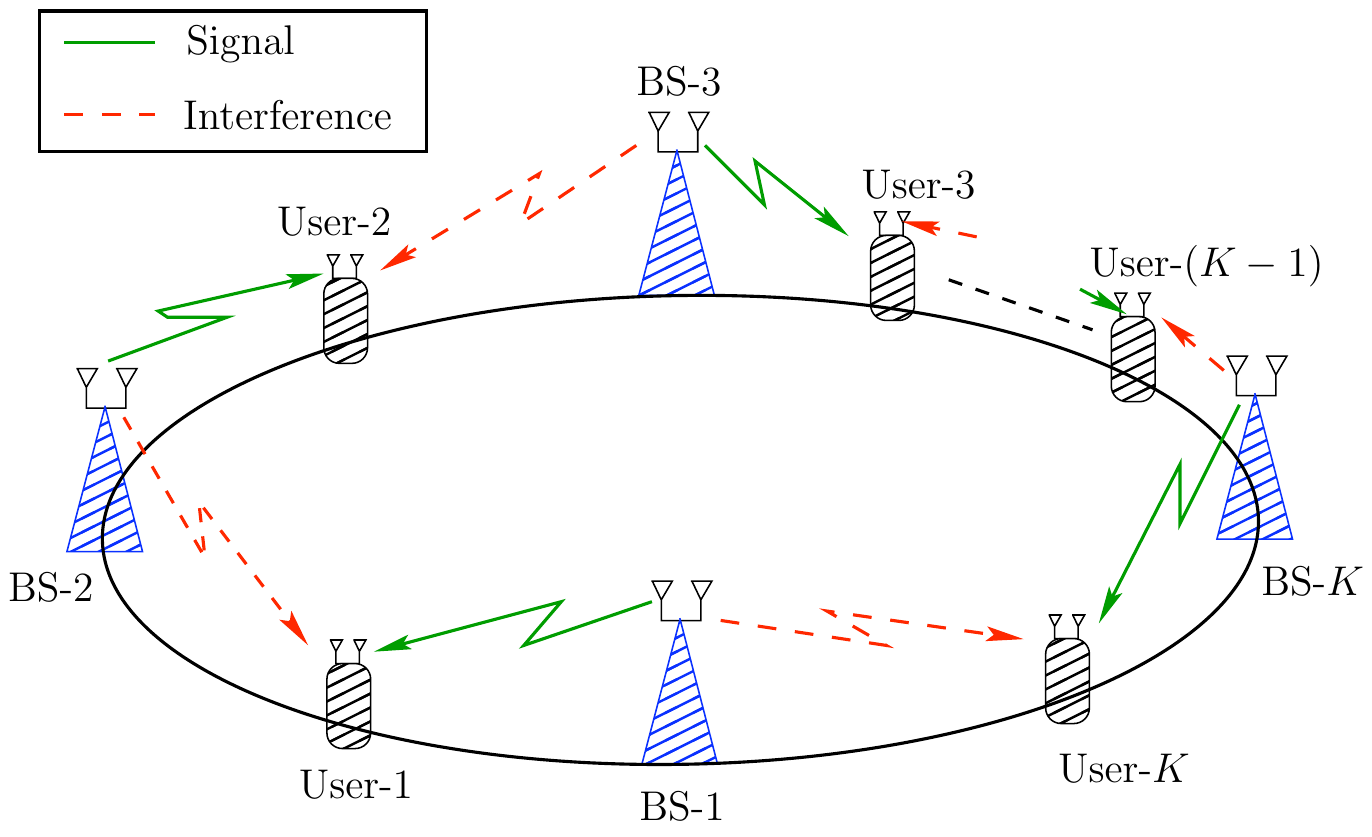}
  \caption{Modified Wyner Model.}\label{figurewynermodel}
  \vspace{-0.5cm}
\end{figure}

With this correspondence at hand, we first focus on the linear deterministic $K$-user
CZIC with feedback. We characterize the symmetric feedback capacity, $\mathcal{C}_{\mathrm{sym},\mathrm{LD}}^{\mathrm{FB}}$, which is
defined as the maximum $R$ such that the rate $K$-tuple
$(R,R,\ldots,R)$ is achievable with feedback. We use insights from the linear deterministic model to characterize the
symmetric feedback capacity of the $K$-user Gaussian CZIC within a constant number of bits (independent of the channel gains) for all
interference regimes. As a consequence of our constant gap results, we also establish the generalized  degrees of freedom of the Gaussian CZIC with feedback.
For the scope of this paper, we restrict our attention to the case of symmetric channel parameters. For instance, for the Gaussian CZIC
with feedback, we assume that the direct channel gain from the $k$th transmitter to the $k$th receiver is the same for all $k$, and that the interference
channel gain from the $k$th transmitter to the $(k-1)\hspace{-0.1in}\mod(K)$th receiver is the same for all $k$.

The symmetric $K$-user Gaussian CZIC is described by the pair $(\SNR,\INR)$, where $\SNR$ denotes the direct channel gain and $\INR$ denotes the
interference channel gain. The per-user generalized degrees of freedom ($\DOF$) of the Gaussian CZIC \emph{without feedback} is defined as
\begin{align}
{\DOF}(\alpha,K)&= \frac{1}{K}\lim_{\SNR\rightarrow \infty}\frac{\mathcal{C}_{\mathrm{sum},\mathrm{G}}(K)}{\frac{1}{2}\log(1+\SNR)},\label{DOFdefn1}
\end{align}
where $\mathcal{C}_{\mathrm{sum},\mathrm{G}}(K)$ is the sum-capacity \emph{without feedback}, and  $\alpha$ is the interference parameter, defined as  $\alpha\triangleq \frac{\log(\INR)}{\log(\SNR)}$. Analogously to (\ref{DOFdefn1}), we define the per-user generalized degrees of freedom of the Gaussian CZIC \emph{with feedback} as
\begin{align}
{\DOF^{\mathrm{FB}}}(\alpha,K)&= \frac{1}{K}\lim_{\SNR\rightarrow \infty}\frac{\mathcal{C}_{\mathrm{sum},\mathrm{G}}^{\mathrm{FB}}(K)}{\frac{1}{2}\log(1+\SNR)},
\end{align}
where $\mathcal{C}_{\mathrm{sum},\mathrm{G}}^{\mathrm{FB}}(K)$ is the sum-capacity \emph{with feedback}.

\begin{figure}[t]
\centering
\includegraphics[width=0.38\textwidth]{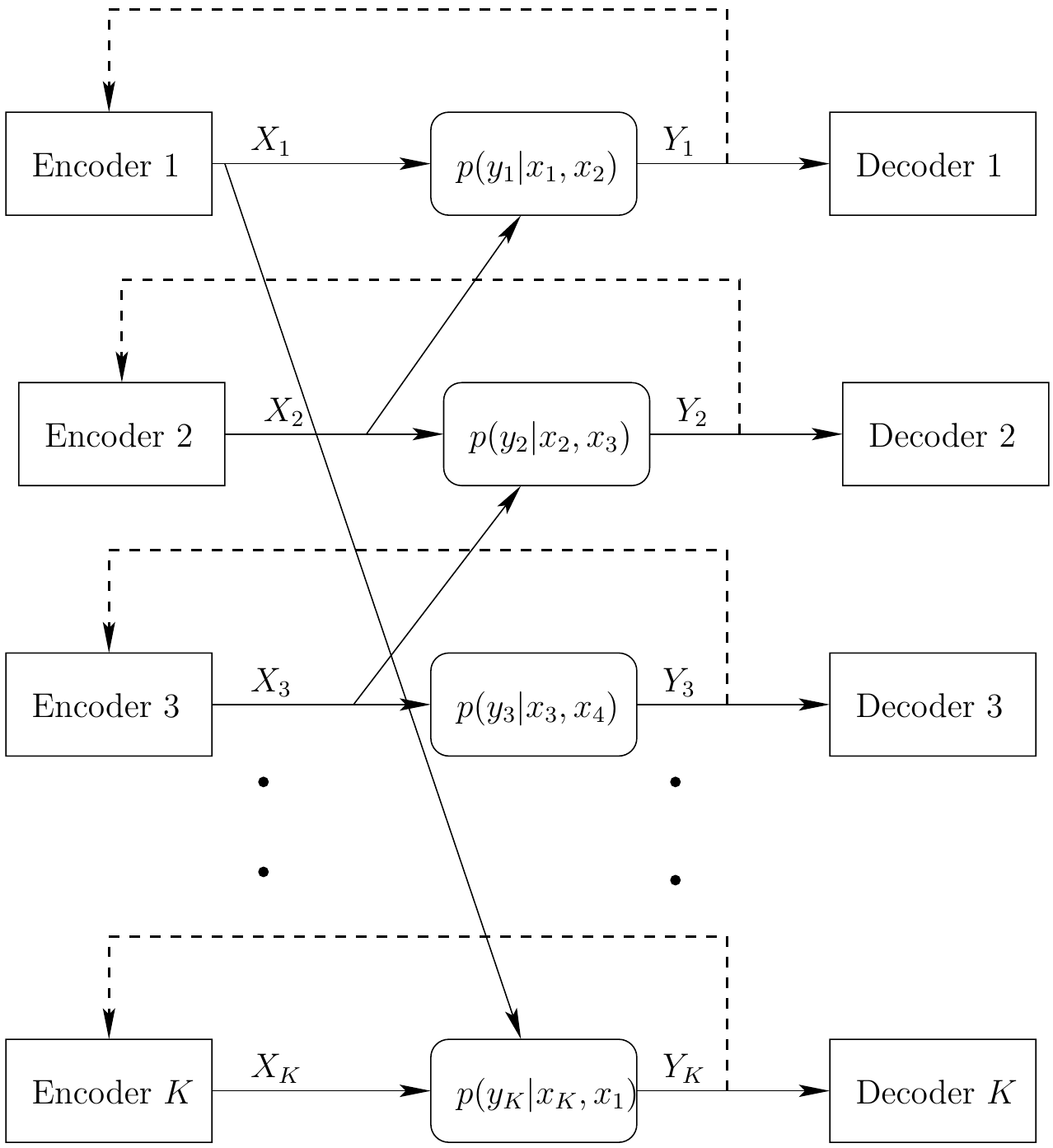}
  \caption{$K$-user cyclic Z-interference channel with feedback.}\label{figure1model}
  \vspace{-0.6cm}
\end{figure}
In the breakthrough work \cite{ETW2008, Salman2011}, several novel results were obtained for the $2$-user Gaussian interference channel. Among the results, is the
characterization of the $\DOF$. In particular, it was shown that $\DOF(\alpha,2)= \min(\max(1-\alpha,\alpha),1-\alpha/2)$.
On the other hand, the feedback $\DOF$ for $K=2$ was recently characterized in \cite{SuhTseIT} and is given as ${\DOF^{\mathrm{FB}}}(\alpha,2)=\max(1-\alpha/2,\alpha/2)$.
From the results of \cite{WeiYuCIC}, it is clear that the $\DOF$ without feedback for the
$K$-user Gaussian CZIC is the same for all $K\geq 2$, i.e.,
it is \emph{independent} of $K$, the number of users. It is natural to ask whether this equivalence
continues to hold in presence of feedback for $K>2$.

We answer this question in the negative by showing that
the feedback $\DOF$ of the $K$-user Gaussian CZIC is in general a function of $K$. Of particular interest
is the very strong interference regime, corresponding to $\alpha\geq 2$. In this regime the feedback
$\DOF$ for the $2$-user case is given as $\DOF^{\mathrm{FB}}(\alpha,2)=\alpha/2$.
This implies that the feedback gain can be unbounded as $\alpha$ increases.
For this regime, we show that the feedback $\DOF$ of the $K$-user Gaussian CZIC is given as $\DOF^{\mathrm{FB}}(\alpha,K)=1+ \frac{(\alpha-2)}{K}$.
This result shows that for a fixed $\alpha$, as the number of users increases, the feedback gain decreases and completely
vanishes in the limit $K\rightarrow \infty$. The outer bounds derived in this paper to establish capacity/constant bit gap results can be regarded as
genie aided bounds derived for the $2$-user case considered in \cite{SuhTseIT}.
However, as $K$, the number of users increases, selecting appropriate genies becomes prohibitively complex.
In particular, for the $K$-user CZIC, we have a total of $K!$ sum-rate upper bounds. Depending on the interference
parameter $\alpha$, we carefully select the best upper bound among the $K!$ upper bounds.

This paper is organized as follows. In Section \ref{Section:MODEL}, we describe the $K$-user cyclic Z-interference channel with feedback. In Section \ref{Section:mainresults} we describe our main results for
both linear deterministic and Gaussian $K$-user CZICs. We provide intuition as to why the feedback gain decreases
as the number of users increases. Proofs for the $K$-user linear deterministic CZIC are presented in Sections \ref{Section:Coding} and
\ref{Section:Upper}. Constant gap results for the feedback capacity of the $K$-user Gaussian CZIC are established in Section \ref{GaussianSection}. We conclude the paper in Section \ref{CONCLUSION}. Parts of this paper have appeared in \cite{TP:Allerton2011}.

\section{$K$-user Cyclic Z-IC with Feedback}\label{Section:MODEL}
The $K$-user cyclic Z-interference channel is described by $K$
conditional probabilities $\{p(y_{1}|x_{1},x_{2}),
\ldots,p(y_{K}|x_{K},x_{1})\}$. A
$(T,M_{1},\ldots,M_{K})$ feedback code for the CZIC consists of sequences of $K$ encoding functions
\begin{align}
f_{k,t}:\{1,\ldots,M_{k}\}\times\mathcal{Y}^{t-1}_{k}\rightarrow
\mathcal{X}_{k,t},\quad k=1,\ldots,K,\label{encodingfunction}
\end{align}
for $t= 1,\ldots,T$, and $K$ decoding functions
\begin{align}
g_{k}: \mathcal{Y}_{k}^{T}\rightarrow \{1,\ldots,M_{k}\},\quad k=1,\ldots,K.
\end{align}
 The probability of decoding error at decoder $k$ is denoted by $P_{k}$ and is defined as
$P_{k}=\mathbb{P}(g_{k}(Y_{k}^{T})\neq W_{k})$, where $W_{k}$ is the message of transmitter $k$.

A rate $K$-tuple $(R_{1},\ldots,R_{K})$ is
achievable for the $K$-user CZIC if there exists a \break
$(T,M_{1},\ldots,M_{K})$ feedback code such that $\log(M_{k})/T\leq
R_{k}-\epsilon_{k,T}$ and $P_{k}\leq \epsilon_{k,T}$, where
$\epsilon_{k,T}\rightarrow 0$ as $T\rightarrow \infty$ for all $k$. The feedback capacity region $\mathcal{C}^{\mathrm{FB}}(K)$
is the set of all achievable $K$-tuples.

In this paper, we focus on the symmetric feedback capacity of the $K$-user CZIC, denoted by $\mathcal{C}_{\mathrm{sym}}^{\mathrm{FB}}(K)$, which is defined as the maximum
$R$ such that $(R,\ldots,R)\in \mathcal{C}^{\mathrm{FB}}(K)$.

\subsection{Linear deterministic CZIC with Feedback}\label{Section:LDCZIC}
The symmetric linear deterministic CZIC is described by a pair of integers $(n,m)$,
where $n$ denotes the number of signal (direct) levels and $m$ denotes the number of interference
levels observed at the receivers.

The channel input of transmitter $k$, denoted by $X_{k}$, for $k=1,\ldots,K$, is assumed to be of length $\max(n,m)$.

When $n\geq m$, we denote
\begin{align}
U_{k}&: \mbox{ top-most } (n-m) \mbox{ bits of } X_{k}\nonumber\\
V_{k}&: \mbox{ top-most } m \mbox{ bits of } X_{k}\label{notationweak}\\
L_{k}&: \mbox{ lower-most } m \mbox{ bits of } X_{k}\nonumber.
\end{align}
With this notation, we can write the channel outputs for the $K$-user LD-CZIC as follows:
\begin{align}
Y_{k}&=(U_{k}, L_{k}\oplus V_{k+1}),
\end{align}
for $k=1,\ldots,K$.

When $n< m$, we denote
\begin{align}
U_{k}&: \mbox{ top-most } (m-n) \mbox{ bits of } X_{k}\nonumber\\
V_{k}&: \mbox{ top-most } n \mbox{ bits of } X_{k}\label{notationstrong}\\
L_{k}&: \mbox{ lower-most } n \mbox{ bits of } X_{k}\nonumber.
\end{align}
With this notation, we can write the channel outputs for the $K$-user LD-CZIC as follows:
\begin{align}
Y_{k}&=(U_{k+1}, L_{k+1}\oplus V_{k}),\hspace{0.1in}
\end{align}
for $k=1,\ldots,K$, where we define
\begin{align}
V_{K+1}\triangleq V_{1},\quad U_{K+1}\triangleq U_{1},\quad L_{K+1}\triangleq L_{1}
\end{align}
for consistency.

For instance, when $n\geq m$, the $3$-user LD-CZIC is described by the following input-output relationships:
\begin{align}
Y_{1}&=(U_{1},L_{1}\oplus V_{2})\nonumber\\
Y_{2}&=(U_{2},L_{2}\oplus V_{3})\nonumber\\
Y_{3}&=(U_{3},L_{3}\oplus V_{1})\nonumber.
\end{align}

\subsection{Gaussian $K$-user CZIC with Feedback}
To describe the $K$-user Gaussian CZIC, we denote\footnote{With slight abuse of notation, we use similar notation for both LD-CZIC and G-CZIC channel models. However, the corresponding notation should be clear from the context.} the signal transmitted by user $k$ as $X_k$.
We impose an average unit power constraint at each user; that is $\E[X_k^2]\leq 1$. The signal observed at receiver $k$ is obtained by
\begin{align}
Y_k= \sqrt{\SNR} X_k+ \sqrt{\INR} X_{k+1} +Z_k, \qquad k=1,2,\dots,K,
\end{align}
where we define $X_{K+1} \triangleq X_1$ for consistency, and the noise $Z_k$ at receiver $k$ is a zero mean Gaussian random variable with unit variance.
Moreover, the noises across the receivers, i.e., $Z_k$ and $Z_{k^{'}}$ are independent for $k\neq k^{'}$.

\section{Main Results}\label{Section:mainresults}
The results for the linear deterministic model are presented
in terms of the interference parameter $\alpha$, which is defined in this model
as the ratio of the number of interference levels to the number of signal levels, i.e.,
\begin{align}
\alpha&\triangleq \frac{m}{n}.
\end{align}
We define the normalized\footnote{The normalization is with respect to the number of direct levels, $n$.} symmetric feedback capacity {\emph{per-user} of the LD-CZIC as follows:
\begin{align}
\mathcal{C}^{\mathrm{FB}}_{\mathrm{sym},\mathrm{LD}}(\alpha,K)&\triangleq \frac{1}{K}\frac{\mathcal{C}^{\mathrm{FB}}_{\mathrm{sum},\mathrm{LD}}(K)}{n},
\end{align}
where $\mathcal{C}^{\mathrm{FB}}_{\mathrm{sum},\mathrm{LD}}(K)$ is the feedback sum-capacity of the $K$-user LD-CZIC.

We present our first result in the following theorem:
\begin{Theo}\label{Theorem1}
The normalized symmetric feedback capacity, $\mathcal{C}^{\mathrm{FB}}_{\mathrm{sym},\mathrm{LD}}(\alpha,K)$
of the $K$-user LD-CZIC is given by
\begin{align}
\mathcal{C}^{\mathrm{FB}}_{\mathrm{sym},\mathrm{LD}}(\alpha,K)=
\begin{cases}
(1-\alpha)+\frac{\alpha}{K}, &0\leq\alpha\leq 1/2\\
\alpha + \frac{(2-3\alpha)}{K}, &1/2\leq \alpha\leq 2/3\\
1-\frac{\alpha}{2}, &2/3\leq \alpha\leq 1\\
\frac{\alpha}{2}, &1\leq \alpha\leq 2\\
1+ \frac{(\alpha-2)}{K},&\alpha\geq 2.
\end{cases}
\end{align}
\end{Theo}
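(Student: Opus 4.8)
The plan is to prove Theorem~\ref{Theorem1} by establishing matching achievability and converse bounds on $\mathcal{C}^{\mathrm{FB}}_{\mathrm{sum},\mathrm{LD}}(K)$, handling the five $\alpha$--ranges separately and checking that the resulting piecewise-linear function is continuous at the breakpoints $\alpha\in\{1/2,2/3,1,2\}$, so that each boundary value is covered by one of the two adjacent arguments. Since the channel is invariant under a cyclic relabelling of the users, a scheme achieving sum-rate $R_{\mathrm{sum}}$ can be time-shared over its $K$ cyclic shifts to achieve the symmetric point $(R_{\mathrm{sum}}/K,\dots,R_{\mathrm{sum}}/K)$, so it suffices to work with the sum-rate throughout.

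For achievability I would construct, for each sub-regime, an explicit linear-deterministic scheme over a block of $T$ uses built from three primitives: (i) \emph{private levels} --- bits placed on the top $n-m$ levels of $X_k$ when $n\ge m$ (or on an analogous non-colliding band when $n<m$) that never interfere at receiver $k-1$ and hence always get through; (ii) a \emph{feedback interference-forwarding primitive} --- the feedback of $Y_k=(U_k,L_k\oplus V_{k+1})$ lets transmitter $k$ cancel its own $X_k$ and recover $V_{k+1}$ (essentially all of $X_{k+1}$ when $n<m$), which it can retransmit on otherwise-idle levels so that receiver $k$ cancels the stale interference and decodes the $m$ bits it had already observed but could not previously resolve; and (iii) \emph{round-robin scheduling} --- because the forwarding bits injected by transmitter $k$ reappear as fresh interference at receiver $k-1$, the forwarding opportunity cannot be exploited by all $K$ transmitters at once, so one cyclically rotates which transmitter is ``boosted'', splitting the single extra-throughput resource $K$ ways and producing exactly the $\tfrac{\alpha}{K}$, $\tfrac{2-3\alpha}{K}$, $\tfrac{\alpha-2}{K}$ correction terms. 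In the two $K$-independent ranges $2/3\le\alpha\le1$ and $1\le\alpha\le2$ the feedback gain is already saturated on each individual Z-link, so a scheme that acts locally on every link (transmitter $k+1$ into receiver $k$), mimicking the two-user feedback scheme of \cite{SuhTseIT}, attains $1-\alpha/2$ and $\alpha/2$; the remaining task there is only to verify that these local schemes remain mutually consistent when glued around the cycle.

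For the converse I would combine two families of outer bounds. The first consists of single-user / MAC-type bounds --- with feedback $R_k\le H(Y_k)\le\max(n,m)$, plus two-receiver bounds obtained by handing receiver $k$ the interfering message $W_{k+1}$. The second, and the family that ultimately determines the answer in every regime, is a set of genie-aided \emph{cyclic sum-rate bounds}: starting from Fano's inequality one gives receiver $k$ a genie made of $W_{k+1}$ together with an appropriately sized portion of $Y_{k+1}$ (the granularity depending on the sub-regime), uses it to strip the incoming interference, and then passes the residual observation on as side information to the next receiver in the cycle; iterating once around the $K$-cycle telescopes the entropy terms and yields a bound of the form $\tfrac1{Kn}\,\mathcal{C}^{\mathrm{FB}}_{\mathrm{sum},\mathrm{LD}}(K)\le$ (a linear function of $\alpha$). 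Because the cycle can be entered at any receiver and the genie portions can be chosen in $K!$ inequivalent ways, this produces $K!$ bounds; for each $\alpha$ I would exhibit the tight choice --- revealing full $V_{k+1}$'s for $\alpha\le1/2$, revealing only partial interference for $1/2\le\alpha\le2/3$, the ``two-user-like'' choice that cancels the $K$-dependence for $2/3\le\alpha\le2$, and the choice that leaves two direct links' worth of levels unassigned (giving $K-2+\alpha$) for $\alpha\ge2$ --- and then show that the minimum over the whole family coincides with the claimed piecewise formula.

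I expect the converse to be the main obstacle. Writing down one correct chain of inequalities around the cycle is routine; the real difficulty is organizational: among the $K!$ genie/ordering choices one must argue both (a) that the particular choice used in a given regime genuinely telescopes, with no residual positive entropy terms surviving, and (b) that it is the \emph{best} available bound there, so that the min of the family meets the scheme of the previous step. A secondary difficulty is the middle regime $1/2\le\alpha\le2/3$, where neither the ``weak'' nor the ``strong'' template applies cleanly: the achievability scheme must blend private bits, interference forwarding, and round-robin scheduling, and the converse must use a partial genie, and making the constants come out to $\alpha+\tfrac{2-3\alpha}{K}$ on both sides is where the bookkeeping is heaviest.
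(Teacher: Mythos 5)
Your overall architecture matches the paper's: explicit feedback-forwarding schemes for achievability, and a converse built from (i) pairwise bounds $R_j+R_{j+1}\le\max(2n-m,m)$ summed around the cycle and (ii) a family of $K!$ permutation-ordered genie bounds of the form $\sum_k H(Y_{\pi_k}\mid X_{\pi_1},Y_{\pi_1},\dots,X_{\pi_{k-1}},Y_{\pi_{k-1}})$, with the identity permutation evaluated for $\alpha\le 2/3$ and the reversed order $(1,K,K-1,\dots,2)$ for $\alpha\ge 2$. Two points in your plan deviate from what actually happens and would need repair. First, your ``round-robin'' explanation of the $\tfrac{1}{K}$ correction terms is not the right mechanism: in the working schemes \emph{all} $K$ transmitters exploit the forwarding primitive simultaneously in every slot (each transmitter $k$ relays the decoded top levels of transmitter $k+1$ on its own top levels, and receiver $k$ resolves its stale interference one slot later). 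The $\tfrac{1}{K}$ factor arises because the ``extra'' bits are injected only once per $K$-block and must circulate the full cycle of feedback/cross links before reaching their intended decoder (e.g., for $\alpha\ge 2$ the $(m-2n)$ bonus bits sent at $t=1$ arrive at the intended receiver only at $t=K$), not because a single boosting opportunity is time-shared among users. Your claim that the forwarding opportunity ``cannot be exploited by all $K$ transmitters at once'' is false for the correct scheme, and it is not established that a literal round-robin variant achieves the same counts, particularly in the regime $1/2\le\alpha\le 2/3$ where the input must be split into four bands with one band kept silent.

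Second, your assertion that the cyclic sum-rate family ``ultimately determines the answer in every regime'' is incorrect: for $2/3\le\alpha\le 2$ the tight bound is the pairwise bound $\tfrac{1}{n}(R_j+R_{j+1})\le\max(2-\alpha,\alpha)$ (your first family), since the permutation bounds evaluate to expressions like $\alpha+\tfrac{2-3\alpha}{K}$, which strictly exceed $1-\tfrac{\alpha}{2}$ for $\alpha\in(2/3,1]$ when $K\ge 3$. Since you do list the pairwise bounds, the toolkit is complete, but the regime assignment must be corrected. Beyond these, the proposal is an outline: the telescoping of the permutation bound (which follows from Fano plus cancellation of the negative term of each mutual information against the next positive term, together with the claim that $(X_{\pi_1 t},\dots,X_{\pi_{k}t})$ is determined by the conditioning messages and past outputs) and the level-counting evaluations $H(Y_k\mid X_{k-1},Y_{k-1})\le\max(0,n-2m)+m$ and $H(Y_K\mid X_1,Y_1,X_{K-1},Y_{K-1})\le n-m$ are precisely the steps you defer, and they are where the proof actually lives.
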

\vspace{0.1in}
Theorem \ref{Theorem1} is proved in two parts: feedback coding schemes are presented in Section \ref{Section:Coding} and corresponding
upper bounds for the normalized symmetric feedback capacity are
obtained in Section \ref{Section:Upper}.

The constant bit gap results for the Gaussian model are presented in terms of two parameters $(C_{\SNR},C_{\INR})$, defined as
\begin{align}
C_{\SNR}&\triangleq \frac{1}{2}\log(1+\SNR)\\
C_{\INR}&\triangleq \frac{1}{2}\log(1+\INR).
\end{align}
We also define the interference parameter for the Gaussian model as
\begin{align}
\alpha&=\frac{\log(\INR)}{\log(\SNR)}.
\end{align}

We further define the symmetric feedback capacity {\emph{per-user}} of the $K$-user Gaussian CZIC as follows:
\begin{align}
\mathcal{C}^{\mathrm{FB}}_{\mathrm{sym},\mathrm{G}}(K)&\triangleq \frac{\mathcal{C}^{\mathrm{FB}}_{\mathrm{sum},\mathrm{G}}(K)}{K},
\end{align}
where $\mathcal{C}^{\mathrm{FB}}_{\mathrm{sum},\mathrm{G}}(K)$ is the feedback sum-capacity of the $K$-user Gaussian CZIC.
We next define the feedback degrees of freedom {\emph{per-user}} for the $K$-user Gaussian CZIC as follows:
\begin{align}
\DOF^{\mathrm{FB}}(\alpha,K)&= \lim_{\SNR\rightarrow \infty}\frac{\mathcal{C}_{\mathrm{sym},\mathrm{G}}^{\mathrm{FB}}(K)}{\frac{1}{2}\log(1+\SNR)}.
\end{align}

We present our next result in the following theorem:
\begin{Theo}\label{TheoremGaussiangap}
The symmetric feedback capacity per user, $\mathcal{C}^{\mathrm{FB}}_{\mathrm{sym},\mathrm{G}}(K)$
of the $K$-user Gaussian CZIC satisfies
\begin{align}
\mathcal{C}^{\mathrm{FB}}_{\mathrm{sym},\mathrm{G}}(K)\backsimeq
\begin{cases}
C_{\SNR}-C_{\INR}+\frac{C_{\INR}}{K}, &0\leq\alpha\leq 1/2\\
C_{\INR} + \frac{(2C_{\SNR}-3C_{\INR})}{K}, &1/2\leq \alpha\leq 2/3\\
C_{\SNR}-\frac{C_{\INR}}{2}, &2/3\leq \alpha\leq 1\\
\frac{C_{\INR}}{2}, &1\leq \alpha\leq 2\\
C_{\SNR}+ \frac{(C_{\SNR}-2C_{\INR})}{K},&\alpha\geq 2,
\end{cases}
\end{align}
where the notation $A\backsimeq B$ implies that $(A-B)\leq 3$, i.e., the worst case gap (for all interference regimes) between the upper and lower bounds is at most $3$ bits/user.
\end{Theo}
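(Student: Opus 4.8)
The plan is to leverage Theorem~\ref{Theorem1} and reduce the Gaussian problem to its linear deterministic counterpart through the standard machinery: match each achievability scheme from the LD-CZIC with a Gaussian lattice/Han--Kobayashi-with-feedback scheme, and match each LD-CZIC outer bound with a genie-aided Gaussian outer bound, then control the additive gap term by term. Concretely, one first observes that setting $n = \lfloor C_{\SNR} \rfloor$ (roughly) and $m = \lfloor C_{\INR} \rfloor$ makes each of the five branches of Theorem~\ref{Theorem1}, after multiplying by $n$ and identifying $n \leftrightarrow C_{\SNR}$, $m \leftrightarrow C_{\INR}$, coincide with the five branches claimed here. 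So the content of the theorem is really: (i) the LD achievable scheme lifts to a Gaussian scheme losing only $O(1)$ bits per user, and (ii) the LD converse lifts to a Gaussian converse losing only $O(1)$ bits per user, with the constants adding up to at most $3$.

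For the achievability side, I would take the level-assignment / superposition structure used in Section~\ref{Section:Coding} to prove the lower bound of Theorem~\ref{Theorem1} and translate each layer into a Gaussian codeword with appropriately scaled power, exactly as in \cite{ETW2008} and \cite{SuhTseIT}: private messages are sent at power $\approx 1/\INR$ so they arrive at the noise floor of the unintended receiver, common/feedback-routed messages are sent at the remaining power, and the feedback is used to resolve (via block-Markov / backward decoding) the interference that was overheard. The cyclic structure means the feedback-aided refinement propagates around the ring of $K$ users; each receiver still only needs to decode its own private stream plus a bounded number of common streams, so the per-link rate loss relative to the LD value is a universal constant (a few bits), independent of $\SNR,\INR$ and of $K$. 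Summing over $K$ users and renormalizing by $K$ keeps the per-user gap constant.

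For the converse, the key step is the observation already flagged in the introduction: among the $K!$ sum-rate genie bounds, one picks, depending on which of the five $\alpha$-regimes we are in, the same ``cut'' that was used in Section~\ref{Section:Upper} for the LD model, but now instantiated with differential entropies and Gaussian genie signals. Each such bound reduces, after bounding $h(\cdot)$ terms by Gaussian maxima and lower-bounding the relevant conditional entropies by the noise entropy, to an inequality of the form $K \mathcal{C}^{\mathrm{FB}}_{\mathrm{sym},\mathrm{G}}(K) \le (\text{LD expression with } n,m \to C_{\SNR},C_{\INR}) + (\text{const})\cdot K$. I would carry this out regime by regime, reusing verbatim the combinatorial choice of which users' signals to hand as genie information and in which order to apply the chain rule, so that the only new work is the entropy-to-log-SNR bookkeeping.

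The main obstacle I anticipate is the converse in the intermediate regime $1/2 \le \alpha \le 2/3$, where the LD bound $\alpha + (2-3\alpha)/K$ is neither the trivial interference-free bound nor a clean two-user-type bound; extracting it in the Gaussian setting requires a genie that reveals a carefully truncated part of each transmitted signal around the ring, and showing that the ``leftover'' entropy terms telescope cleanly (so that the $K!$ choices really do collapse to one good bound) is where the bit-counting is most delicate and where the constant $3$ is actually consumed. A secondary technical point is ensuring the additive constants do not grow with $K$ — this needs the genie signals and the decoded common layers at each receiver to be of bounded ``complexity'' regardless of $K$, which follows from the locality of the cyclic interference (each receiver sees only one interferer) but should be stated carefully. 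Once these two points are handled, the remaining regimes ($0\le\alpha\le1/2$, $2/3\le\alpha\le1$, $1\le\alpha\le2$, $\alpha\ge2$) follow by the same template with smaller constants, and taking the worst case over regimes gives the claimed $3$-bit gap.
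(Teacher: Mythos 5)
Your proposal follows essentially the same route as the paper: each LD layering scheme is lifted to a Gaussian superposition/block-Markov scheme with private layers at power $1/\INR$ so they land at the unintended receiver's noise floor, and the converse selects, per regime, either the pairwise (type-I) genie bound or one of the $K!$ permutation (type-II) bounds matching the LD cut and evaluates it with Gaussian maximum-entropy estimates. One small correction: the weak-interference converse ($1/2\le\alpha\le 2/3$) does not require a new truncation genie --- the paper simply reuses the identity-permutation bound from the very-weak regime with different algebraic simplifications of the constants --- though your guess that this regime consumes the worst-case $3$-bit gap is right.
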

Theorem \ref{TheoremGaussiangap} is proved in Section \ref{GaussianSection}, where we use key insights from the
linear deterministic model to construct feedback coding schemes and corresponding upper bounds on the feedback sum capacity.
Further analysis of these bounds shows that they differ by a constant number of bits, which is independent of $(\SNR,\INR)$.
We note here that the worst case gap of $3$ can be reduced depending on the interference regime.
For instance, in our proof of Theorem \ref{TheoremGaussiangap}, we show that the gap for the case when $\alpha>2$ is at most $2$ bits.

As a consequence of Theorem \ref{TheoremGaussiangap}, we have the following corollary:

\begin{Cor}\label{corollary1}
The feedback degrees of freedom per-user of the $K$-user Gaussian CZIC is given by
\begin{align}
\DOF^{\mathrm{FB}}(\alpha,K)=
\begin{cases}
(1-\alpha)+\frac{\alpha}{K}, &0\leq \alpha\leq 1/2;\\
\alpha + \frac{(2-3\alpha)}{K}, &1/2\leq \alpha\leq 2/3;\\
1-\frac{\alpha}{2}, &2/3\leq \alpha\leq 1;\\
\frac{\alpha}{2}, &1\leq \alpha\leq 2;\\
1+ \frac{(\alpha-2)}{K}, &\alpha\geq 2.
\end{cases}
\end{align}
\end{Cor}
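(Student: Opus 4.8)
The plan is to read Corollary \ref{corollary1} off Theorem \ref{TheoremGaussiangap} by a high-$\SNR$ limiting argument, with essentially no new work beyond verifying two elementary limits. The structural point is that the denominator $\tfrac{1}{2}\log(1+\SNR)$ in the definition of $\DOF^{\mathrm{FB}}(\alpha,K)$ is exactly $C_{\SNR}$, and that Theorem \ref{TheoremGaussiangap}, together with the achievable schemes constructed in its proof, pins $\mathcal{C}^{\mathrm{FB}}_{\mathrm{sym},\mathrm{G}}(K)$ to within an additive constant of the piecewise expression appearing in that theorem, the constant being \emph{independent of} $(\SNR,\INR)$. Denoting that piecewise expression by $B(C_{\SNR},C_{\INR})$, there are nonnegative constants $c_1,c_2$ with $c_1+c_2\le 3$ such that
\begin{align}
\frac{B(C_{\SNR},C_{\INR})-c_1}{C_{\SNR}}\;\leq\;\frac{\mathcal{C}^{\mathrm{FB}}_{\mathrm{sym},\mathrm{G}}(K)}{C_{\SNR}}\;\leq\;\frac{B(C_{\SNR},C_{\INR})+c_2}{C_{\SNR}}.\nonumber
\end{align}
Since $C_{\SNR}\to\infty$ as $\SNR\to\infty$, the constants wash out after normalization, so by the squeeze theorem the limit defining $\DOF^{\mathrm{FB}}(\alpha,K)$ exists and equals $\lim_{\SNR\to\infty} B(C_{\SNR},C_{\INR})/C_{\SNR}$.

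It then remains only to evaluate this last limit branch by branch. First I would record the two limits that drive everything: trivially $C_{\SNR}/C_{\SNR}=1$, and, using $\INR=\SNR^{\alpha}$ (the definition $\alpha=\log(\INR)/\log(\SNR)$ rearranged),
\begin{align}
\lim_{\SNR\to\infty}\frac{C_{\INR}}{C_{\SNR}}=\lim_{\SNR\to\infty}\frac{\log(1+\SNR^{\alpha})}{\log(1+\SNR)}=\alpha,\nonumber
\end{align}
which holds also at $\alpha=0$ because then $C_{\INR}=\tfrac12\log 2$ is bounded. Substituting these into the five branches of $B$ is immediate: for $0\le\alpha\le 1/2$, $(C_{\SNR}-C_{\INR}+C_{\INR}/K)/C_{\SNR}\to(1-\alpha)+\alpha/K$; for $1/2\le\alpha\le 2/3$, $(C_{\INR}+(2C_{\SNR}-3C_{\INR})/K)/C_{\SNR}\to\alpha+(2-3\alpha)/K$; for $2/3\le\alpha\le 1$, $(C_{\SNR}-C_{\INR}/2)/C_{\SNR}\to 1-\alpha/2$; for $1\le\alpha\le 2$, $(C_{\INR}/2)/C_{\SNR}\to\alpha/2$; and the very-strong regime $\alpha\ge 2$ is handled identically and yields $1+(\alpha-2)/K$. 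These are precisely the five cases claimed in Corollary \ref{corollary1}.

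As a sanity check I would verify that the resulting function of $\alpha$ is continuous at the breakpoints $\alpha\in\{1/2,2/3,1,2\}$; the two adjacent expressions agree there, with common values $1/2+1/(2K)$, $2/3$, $1/2$ and $1$ respectively, consistent with the fact that $\mathcal{C}^{\mathrm{FB}}_{\mathrm{sym},\mathrm{G}}(K)$ is itself continuous in the channel parameters. I do not anticipate any genuine obstacle: the only substantive ingredient is the one invoked above, namely that the gap in Theorem \ref{TheoremGaussiangap} is a fixed constant rather than a quantity growing with $\SNR$, so that it is annihilated upon dividing by $C_{\SNR}$. Granting that, Corollary \ref{corollary1} follows at once.
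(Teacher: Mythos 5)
Your proposal is correct and matches the paper's route exactly: the paper states Corollary \ref{corollary1} as an immediate consequence of Theorem \ref{TheoremGaussiangap}, with the two-sided constant-gap sandwiches established regime by regime in Section \ref{GaussianSection} and the normalization by $C_{\SNR}$ killing the additive constants in the limit. Your explicit verification that $C_{\INR}/C_{\SNR}\to\alpha$ and the continuity check at the breakpoints are details the paper leaves implicit, but nothing in your argument departs from its approach.
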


We recall the no-feedback degrees of freedom per-user of the $K$-user Gaussian CZIC \cite{WeiYuCIC}:
\begin{align}\label{nofeedback}
\DOF(\alpha,K)=
\begin{cases}
(1-\alpha), &0\leq\alpha\leq 1/2\\
\alpha, &1/2\leq \alpha\leq 2/3\\
1-\frac{\alpha}{2}, &2/3\leq \alpha\leq 1\\
\frac{\alpha}{2}, &1\leq \alpha\leq 2\\
1,&2\leq\alpha.
\end{cases}
\end{align}
Note that $\DOF(\alpha,K)$ is \emph{independent} of $K$,
i.e. $\DOF(\alpha,K)=\DOF(\alpha,2)$, for all $K$.
This implies that from the $\DOF$ point of view, the behavior of the $K$-user
system is similar to the $K=2$ user system in the \emph{absence} of feedback.
\begin{figure}[t]
\centering
\includegraphics[width=0.52\textwidth]{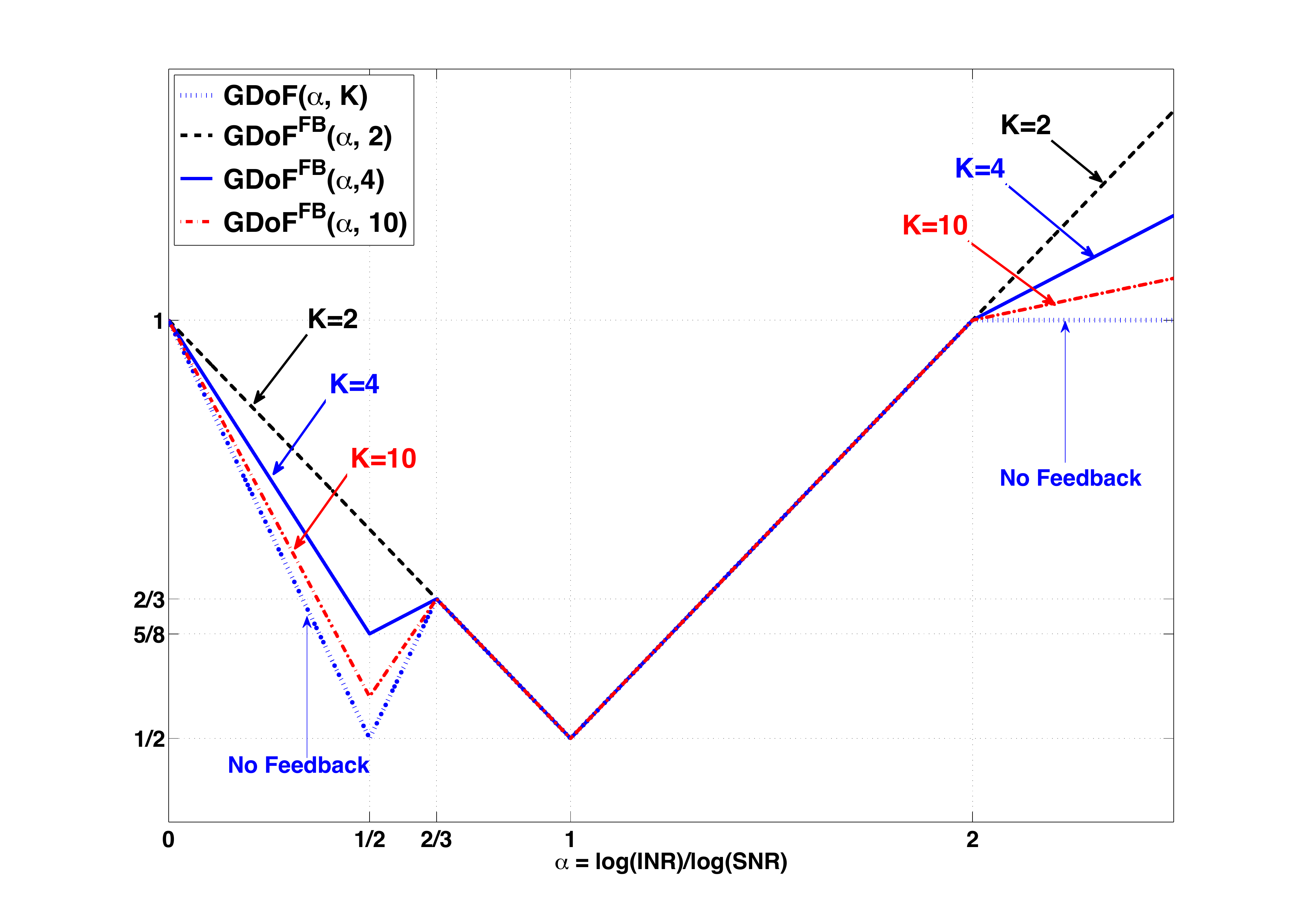}
  \vspace{-0.7cm}
  \caption{ $\DOF$ of the $K$-user Gaussian CZIC with and without feedback.}\label{figurecapacityK}
  \vspace{-0.6cm}
\end{figure}

On the other hand, we note that the feedback $\DOF$ for $K=2$ is given by \cite{SuhTseIT}
\begin{align}
\DOF^{\mathrm{FB}}(\alpha,2)=
\begin{cases}
1-\frac{\alpha}{2}, &\alpha\leq 1\\
\frac{\alpha}{2}, &\alpha\geq 1.
\end{cases}
\end{align}
In the light of above observations, it is natural to ask whether the
behavior of the $K$-user Gaussian CZIC mimics the behavior of the $K=2$ system in
the presence of feedback. Corollary \ref{corollary1} answers this question in the negative by
showing that the feedback $\DOF$ per-user for $K>2$ is in general a function of $K$. Moreover, the feedback $\DOF$ of
$K=2$ always serves as an upper bound for the feedback $\DOF$ for $K>2$ users.

In Figure \ref{figurecapacityK}, the feedback $\DOF$s are shown for the $K$-user Gaussian CZIC, when $K=2,4$ and $10$.

\begin{Rem}
Corollary \ref{corollary1} also shows that $\DOF^{\mathrm{FB}}(\alpha,K)$ can
be strictly less than $\DOF^{\mathrm{FB}}(\alpha,2)$ (see Figure \ref{figurecapacityK}).
Secondly, it also shows that $\DOF^{\mathrm{FB}}(\alpha,K)$ is monotonically decreasing in
$K$. Hence, as the number of users in the system increases,
the $\DOF$ gain obtained via feedback decreases. Furthermore, in the limit
$K\rightarrow \infty$, the feedback gain vanishes, i.e., we have
\begin{align}
\lim_{K\rightarrow \infty}\DOF^{\mathrm{FB}}(\alpha,K)= \DOF(\alpha,K).
\end{align}
\end{Rem}

The results presented so far show that the capacity gain provided by feedback decreases as $K$ increases. We should remark here that
this behavior of capacity is not a universal phenomenon and is not
necessarily dependent on the cyclic network topology. We show in the next section that this phenomenon is an artifact of the \emph{local}
feedback assumption. Under the local feedback assumption, receiver $k$ feeds
back its channel output \emph{only} to transmitter $k$. To avoid any confusion, whenever we refer to feedback, it should be clear that we
are referring to  \emph{local} feedback.
\begin{figure*}[t]
\centering
\includegraphics[width=0.65\textwidth]{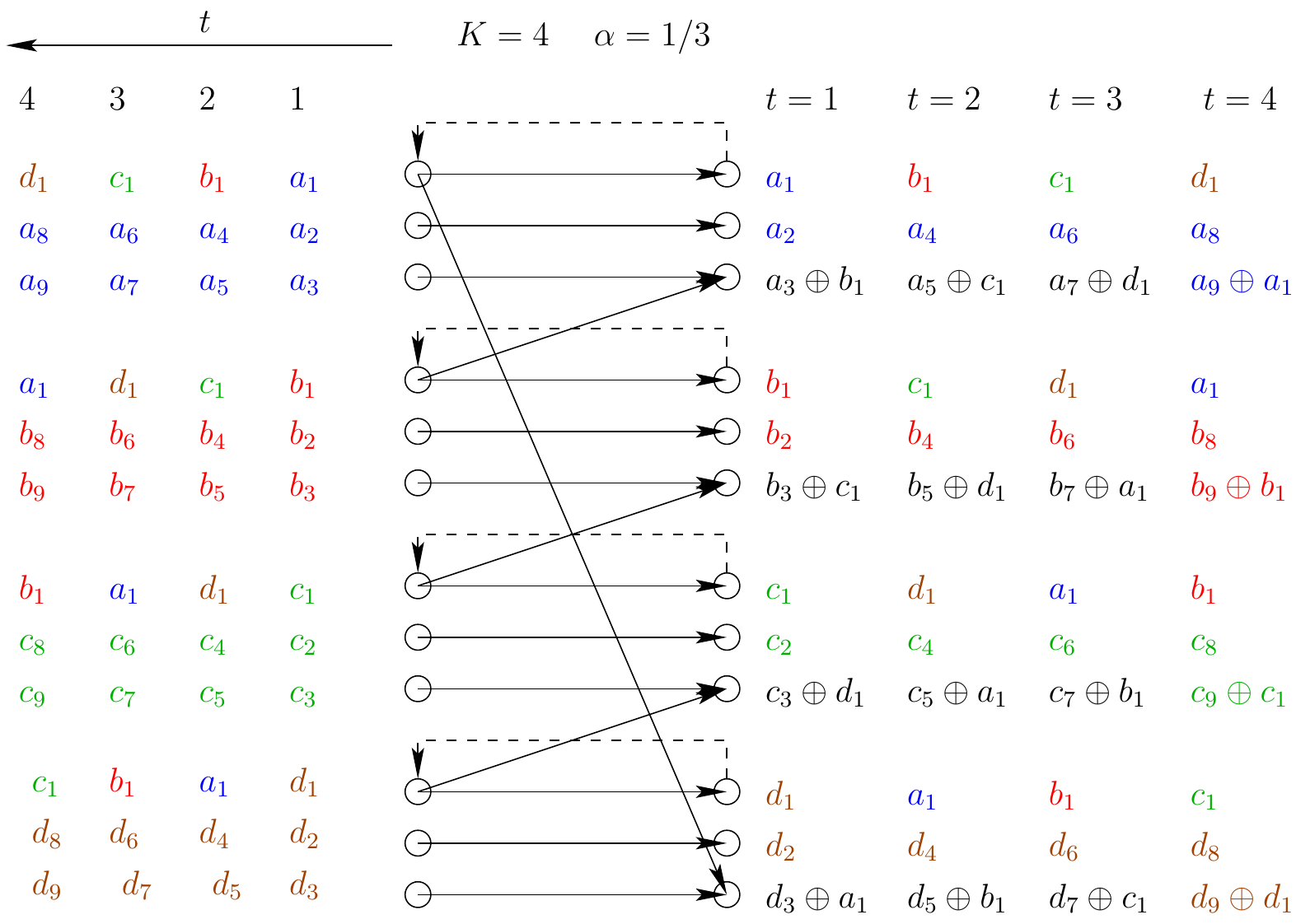}
  \caption{Feedback coding scheme for $\alpha=1/3$, $K=4$.}\label{SchemeKevenveryweak}
  \vspace{-0.5cm}
\end{figure*}

On the contrary,  under the stronger model of global feedback, i.e., a
model in which all receivers feed their channel outputs back to all the
transmitters, the feedback gain is \emph{independent} of $K$. We present the sum
capacity of the LD-CZIC with global feedback in the following theorem.
\begin{Theo}\label{theoremGlobalFB}
The normalized symmetric feedback capacity of the $K$-user LD-CZIC
with global feedback is given by
\begin{align}
\mathcal{C}^{\mathrm{FB}}_{\mathrm{sym},\mathrm{LD}}(\alpha,K)&= \max\left(1-\frac{\alpha}{2},\frac{\alpha}{2}\right).
\end{align}
\end{Theo}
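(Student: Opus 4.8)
The plan is to prove Theorem~\ref{theoremGlobalFB} by establishing a matching achievability scheme and converse for the $K$-user LD-CZIC under global feedback, and crucially, to reduce both directions to the two-user analysis of \cite{SuhTseIT}. The key structural observation is that with global feedback every transmitter eventually learns every receiver's channel output, so the cyclic coupling of the interference graph can be "unrolled": the relevant information flows are no longer bottlenecked by the chain $k \to k-1$, and the symmetric per-user rate cannot exceed what is achievable in an isolated two-user link. Concretely, for the converse I would take any two consecutive users, say users $k$ and $k+1$ (with $X_k$ interfering at receiver $k-1$ and $X_{k+1}$ interfering at receiver $k$), and give a genie that supplies to this pair all the messages $W_j$ and all the channel outputs $Y_j^T$ of the remaining $K-2$ users. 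Once this side information is provided, the remaining signals $X_j$ for $j \neq k, k+1$ are deterministic functions of known quantities and can be stripped from $Y_{k-1}$, $Y_k$, $Y_{k+1}$; what is left is exactly a two-user LD interference channel with feedback formed by users $k$ and $k+1$ (the extra received signal $Y_{k+1}$ only helps). Applying the two-user converse of \cite{SuhTseIT} to this sub-channel gives $R_k + R_{k+1} \le 2n\max(1-\alpha/2,\alpha/2)$ after normalization, and by symmetry $R = R_k = R_{k+1} \le n\max(1-\alpha/2,\alpha/2)$, which is the claimed upper bound.

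For achievability I would show that each user can in fact attain the two-user optimal rate $n\max(1-\alpha/2,\alpha/2)$ \emph{simultaneously}. The idea is that the optimal two-user feedback scheme of \cite{SuhTseIT} is essentially a point-to-point-with-side-information scheme in which the feedback link lets a transmitter resend the part of its own signal that collided with interference at its receiver; under global feedback, transmitter $k+1$ (the interferer seen by receiver $k$) learns $Y_k^T$ directly, so it can cooperatively resend the needed refinement bits on behalf of receiver $k$ exactly as in the two-user protocol, without this ever colliding with the protocol running for the pair $(k-1,k)$ because the level allocations in the two-user scheme leave the appropriate sub-levels free. I would run $K$ copies of the two-user Suh--Tse scheme in parallel, one for each ordered pair $(k,k+1)$, and verify via the level diagram (splitting $X_k$ into its $U_k,V_k,L_k$ components as in~\eqref{notationweak}--\eqref{notationstrong}) that the bit assignments are consistent and non-overlapping across the $K$ copies, so each user sees precisely a two-user channel and achieves $\max(1-\alpha/2,\alpha/2)$ per level. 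Both the weak regime $\alpha \le 1$ (where the bound is $1-\alpha/2$) and the strong regime $\alpha \ge 1$ (where it is $\alpha/2$, including the very strong sub-regime $\alpha \ge 2$) are handled by the corresponding case of the two-user scheme.

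I would organize the write-up as: (i) state the reduction lemma that under global feedback the sum of any two consecutive users' rates is upper-bounded by the two-user feedback sum-capacity, proved by the genie argument above; (ii) conclude the converse by symmetry; (iii) present the parallel-scheme achievability with an explicit level diagram for a representative regime (e.g.\ $\alpha = 1/2$ and $\alpha = 3$) and remark that the general case follows by the same bookkeeping. The main obstacle I anticipate is the achievability bookkeeping: one must check that the $K$ simultaneously running two-user protocols do not contend for the same signal levels at any transmitter or create uncancellable interference at any receiver, and that the feedback-aided retransmissions for pair $(k,k+1)$ are "invisible" to pair $(k-1,k)$ at receiver $k$ --- this is where the cyclic structure could in principle bite, and verifying it carefully (rather than the converse, which is a routine genie argument) is the crux. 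A secondary subtlety is making sure the normalization by $n$ and the $\max$ with $\alpha/2$ versus $1-\alpha/2$ are pinned down at the crossover $\alpha = 1$, but this is inherited directly from \cite{SuhTseIT}.
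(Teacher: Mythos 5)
Your converse is essentially the paper's: the pairwise bound $R_j+R_{j+1}\le \max(2n-m,m)$ of Theorem~\ref{theoremK2} is argued in the paper to survive global feedback by exactly the genie you describe (supplying the remaining $K-2$ users' messages and outputs to the pair), so that direction is fine --- modulo the small imprecision that after stripping the genie information the pair $(k,k+1)$ forms a one-sided (Z) sub-channel rather than the symmetric two-user IC of \cite{SuhTseIT}, so you cannot cite the two-sided converse as a black box; you should rederive the bound $H(Y_k)+H(Y_{k+1}\mid Y_k,W_k,\ldots)\le \max(n,m)+(n-m)^{+}$ directly, as the paper does, noting that it only uses the single cross-link into receiver $k$.

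The genuine gap is in achievability. You propose running $K$ parallel copies of the two-user Suh--Tse scheme and defer to ``bookkeeping'' the check that the copies do not collide; but that check is precisely where the naive plan fails, and the paper's scheme contains an ingredient your proposal omits. In the two-user IC, the phase-2 retransmission of the decoded interference by transmitter $1$ lands at receiver $2$ as interference that receiver $2$ \emph{already knows}, so it is harmless. In the cyclic channel this self-cancellation breaks: the bits transmitter $k+1$ sends in block $2$ on behalf of receiver $k+1$ are functions of transmitter $k+2$'s block-1 data, which receiver $k$ does not know, so they would irrecoverably corrupt receiver $k$'s block-2 fresh bits (there is no block $3$). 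The paper's fix is dirty-paper-style pre-cancellation: because global feedback makes all of block $1$ common knowledge at every transmitter, transmitter $k$ knows exactly what transmitter $k+1$ will send in block $2$ and pre-subtracts it from its own lower levels (e.g.\ sending $a_{5}\oplus b_{1}$ in Figure~\ref{FullFBweak}). This transmitter-side cancellation, rather than a tiling of two-user protocols, is what achieves $\max(2n-m,m)/2$ per user in two blocks; without it your scheme does not close.
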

We present the coding scheme with global feedback in Section \ref{coding:globalFB} and mention the converse after Theorem \ref{theoremK2}.
The setting of symmetric $K$-user fully connected LD-IC (and extensions to the Gaussian model) have also been considered in the literature and the interested reader is referred to \cite{MTP:ITJournal}.

\section{Feedback Coding Schemes for LD-CZIC}\label{Section:Coding}
\subsection{Very-weak interference: $0\leq \alpha\leq 1/2$}
\label{sec:LD-VW}
In this regime, we show that $K(n-m)+ m$ bits per user can be reliably sent
in $K$ channel uses.

As an example, we start with the case in which $K=4$,  $m=1$ and $n=3$, so
that $\alpha=1/3$. To achieve $9$ bits per user in $4$ channel uses, the
following coding scheme is used (see Figure \ref{SchemeKevenveryweak}):

\begin{itemize}
\item In the first channel use, each encoder transmits fresh bits on all
  levels (for example, encoder $1$ sends $a_{1},a_{2},a_{3}$, encoder $2$ sends $b_{1}, b_{2}, b_{3}$ etc.).

\item Upon receiving feedback, each encoder can decode the upper most
  bit of the next encoder (encoder $1$ decodes $b_{1}$, encoder $2$
   decodes $c_{1}$, etc.).

\item In all subsequent channel uses, each encoder
   transmits the  previously decoded bit on the top most level and
fresh information bits in the remaining lower two levels (at $t=2$ encoder
   $1$ transmits $b_{1}$ on the top level and $a_{4},a_{5}$ on the two
   lower levels).

\item  From Figure \ref{SchemeKevenveryweak} it is clear that each
  user can reliably transmit $9$ bits to its decoder in $4$ channel
  uses.  Hence, this scheme yields a normalized symmetric rate of
  $(9/4)\times (1/3)= 3/4$.
\end{itemize}

This scheme can be readily generalized for \emph{arbitrary} numbers of users
$K$ and for any $\alpha\in [0,1/2]$ as follows: each transmit signal can split its signal into
three mutually exclusive sets of levels as $X_k(t)=(X_{k,1}(t),X_{k,2}(t),X_{k,3}(t))$, where the number of bits
in $X_{k,1}(t)$, $X_{k,2}(t)$, and $X_{k,3}(t)$ are $m$, $(n-2m)$, and $m$ respectively. At $t=1$, each
encoder transmits $n$ fresh information bits over all its transmit levels.  Using
feedback, at the end of slot time $t-1$ ($1<t\leq K$), the
$k$th encoder decodes the upper most $m$ bits transmitted by the $(k+1)$th
encoder, that is $X_{k+1,1}(t-1)$. For all the subsequent channel uses
the $k$th encoder transmits the previously decoded $m$ bits on its top
most $m$ levels, and transmits fresh information in the lower $(n-m)$
levels. More precisely, $X_{k,1}(t)=X_{k+1,1}(t-1)$, and $X_{k,2}(t)$ and  $X_{k,3}(t)$ consist of $(n-m)$
fresh symbols intended for the $k$th receiver. \\
At the end time slot $t$, the $k$th receiver can decode $X_{k,1}(t)$ and $X_{k,2}(t)$ which are received cleanly. The remaining part of the received signal would be $X_{k,3}(t)\oplus X_{k+1,1}(t)$, $X_{k,3}(t)$, and so $X_{k,3}(t)$ cannot be decoded right away because it is corrupted by interference. However, note that this interference will be retransmitted by the $k$th encoder in the next time slot. Hence,  once $X_{k,1}(t+1)=X_{k+1,1}(t)$ is decoded in the next block, receiver $k$ can remove it from the interfered signal, and determine $X_{k,3}(t)$.
This scheme achieves $K(n-m) + m$ bits per user in
$K$ channel uses and the achievable rate is $(n-m)+m/K$. Hence for $\alpha\in [0,1/2]$, we have
\begin{align}
\mathcal{C}^{\mathrm{FB}}_{\mathrm{sym},\mathrm{LD}}(\alpha,K)&\geq (1-\alpha) + \frac{\alpha}{K}.
\end{align}

\subsection{Weak interference: $1/2\leq \alpha\leq 2/3$}
For this regime, we present a feedback coding scheme that achieves $Km+ (2n-3m)$ bits
per user in $K$ channel uses. We break the channel input of encoder $k$ into four mutually exclusive sets of levels as follows: $X_{k}(t)=(X_{k,1}(t),X_{k,2}(t),X_{k,3}(t),X_{k,4}(t))$,
where the number of bits in $X_{k,r}(t)$ are $(2m-n),(2n-3m),(2m-n)$ and $(n-m)$, for $r=1,2,3$ and $4$, respectively. Note that condition 
$1/2 \leq \alpha \leq 2/3$ ensures that the number of levels in each partition is non-negative. 
\begin{figure*}[t]
\centering
\includegraphics[width=0.65\textwidth]{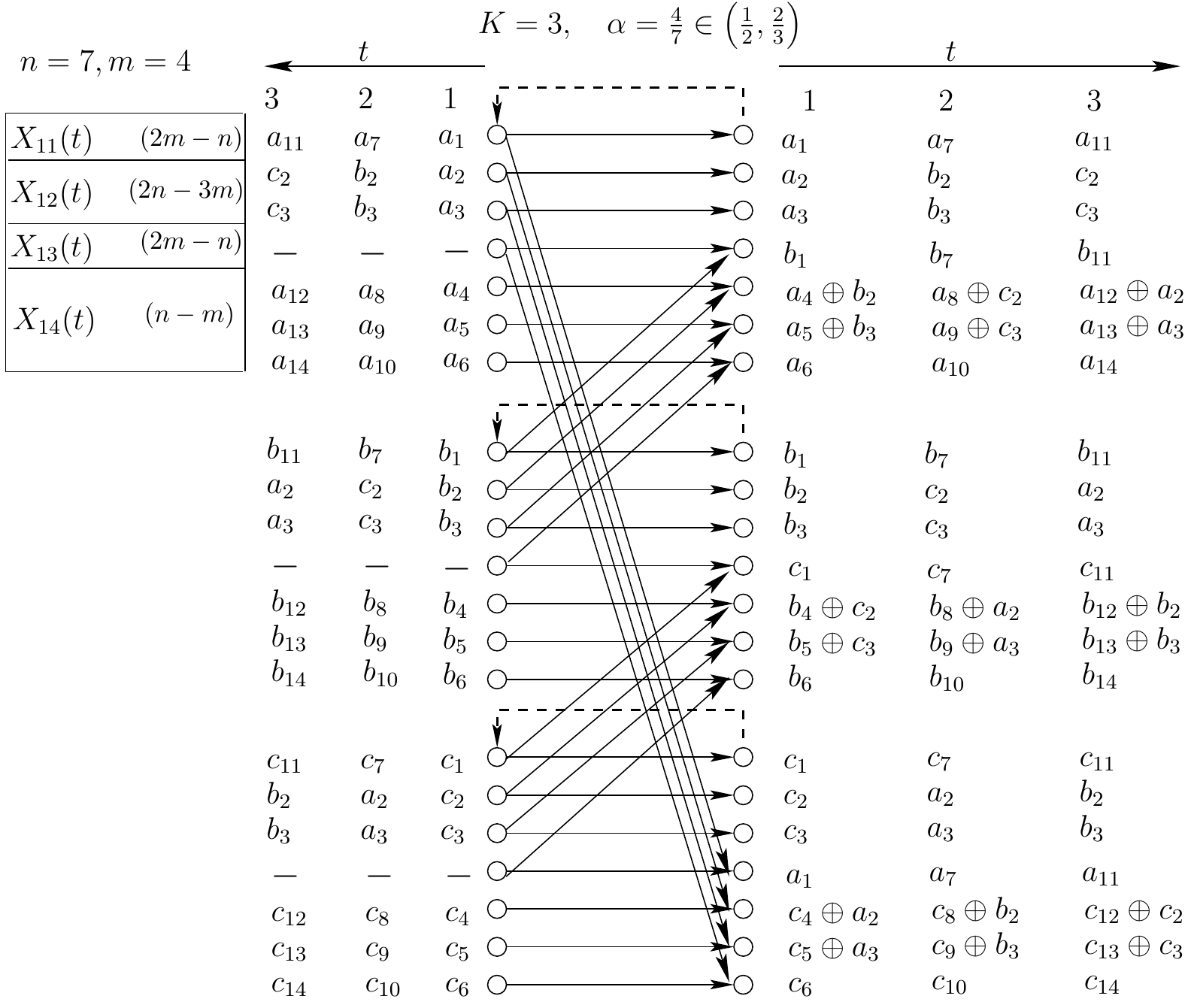}
  \caption{Feedback coding scheme for $\alpha=4/7$, $K=3$.}\label{Schemeweak}
\vspace{-0.5cm}
\end{figure*}

At $t=1$, each encoder transmits fresh information bits on $X_{k,1}(1),X_{k,2}(1)$ and $X_{k,4}(1)$ levels.
For all $t\in \{1,\ldots,K\}$, all encoders remains silent in the $X_{k,3}(t)$ level. At $t$, due to feedback, encoder $k$
can decode the bits transmitted by the encoder $(k+1)$ over its top $m$ levels, including $X_{(k+2),2}$, since $(2m-n) + (2n-3m) \leq m$. . For all $1<t\leq K$, encoder $k$ transmits
\begin{align}
X_{k}(t)=(X_{k,1}(t),X_{(k+1),2}(t-1),\phi,X_{k,4}(t))\nonumber,
\end{align}
where $X_{k,1}(t)$ and $X_{k,4}(t)$ consist of fresh information bits. It is clear that $Km$ bits are achievable from the levels $1$ and $4$. A gain of $(2n-3m)$ bits is provided by feedback in $K$ uses of the channel. It can be easily verified that this coding scheme yields $Km+ (2n-3m)$ bits
per user in $K$ channel uses.
Hence, we have
\begin{align}
\mathcal{C}^{\mathrm{FB}}_{\mathrm{sym},\mathrm{LD}}(\alpha,K)&\geq \alpha + \frac{(2-3\alpha)}{K}.
\end{align}
This scheme is illustrated through an example in Figure \ref{Schemeweak} for the case in which $K=3$ and $n=7,m=4$, so that $\alpha=4/7$.
The partition of the input at each user into four mutually exclusive levels can be readily understood through this example.
It is clear that for these parameters, the scheme achieves $Km+ (2n-3m) =14$ bits per user in $K=3$ channel uses.

\subsection{Moderate-strong interference: $2/3\leq \alpha\leq 2$}
In this regime, Theorem \ref{Theorem1} shows that feedback does not increase the normalized symmetric capacity and
hence the no-feedback coding scheme in \cite{WeiYuCIC} suffices.

\subsection{Very-strong interference: $ \alpha\geq 2$}
In this regime, we will show that $(K-2)n+m$ bits per user are
achievable in $K$ channel uses.

As an example, we start with the case in which $K=4$, $m=3$ and $n=1$, so
that $\alpha=3$. To achieve $5$ bits per user in $4$ channel uses, the
following coding scheme is used (see Figure \ref{SchemeKeven3}):

\begin{itemize}

\item  In all channel uses, each encoder remains silent in the lower-most
bit. In the first channel use, each encoder transmits $2$ fresh
bits (for instance, encoder $1$ sends $a_{1},a_{2}$ and encoder $2$ sends
$b_{1},b_{2}$). Using feedback, each encoder can decode the second bit
transmitted by the encoder interfering with its decoder (encoder $1$
decodes $b_{2}$, encoder $2$ decodes $c_{2}$, etc.).

\item In all subsequent channel uses, each encoder transmits a fresh
information bit in the top-most level and the previously decoded bit
in the second level (for instance, at $t=2$, encoder $2$ sends the
fresh bit $b_{3}$ in the top-most level and the decoded bit $c_{2}$ in
the second level).

\item From Figure \ref{SchemeKeven3}, it is clear that in
$4$ channel uses, using the top-most level, each decoder receives $4$
bits. One more bit is received from the interfering user in the final
channel use (for instance, the bit $a_{2}$ is received at decoder $1$
in a delayed manner). This scheme yields a rate of $5/4$ per
user.
\end{itemize}
\begin{figure*}[t]
\centering
\includegraphics[width=0.65\textwidth]{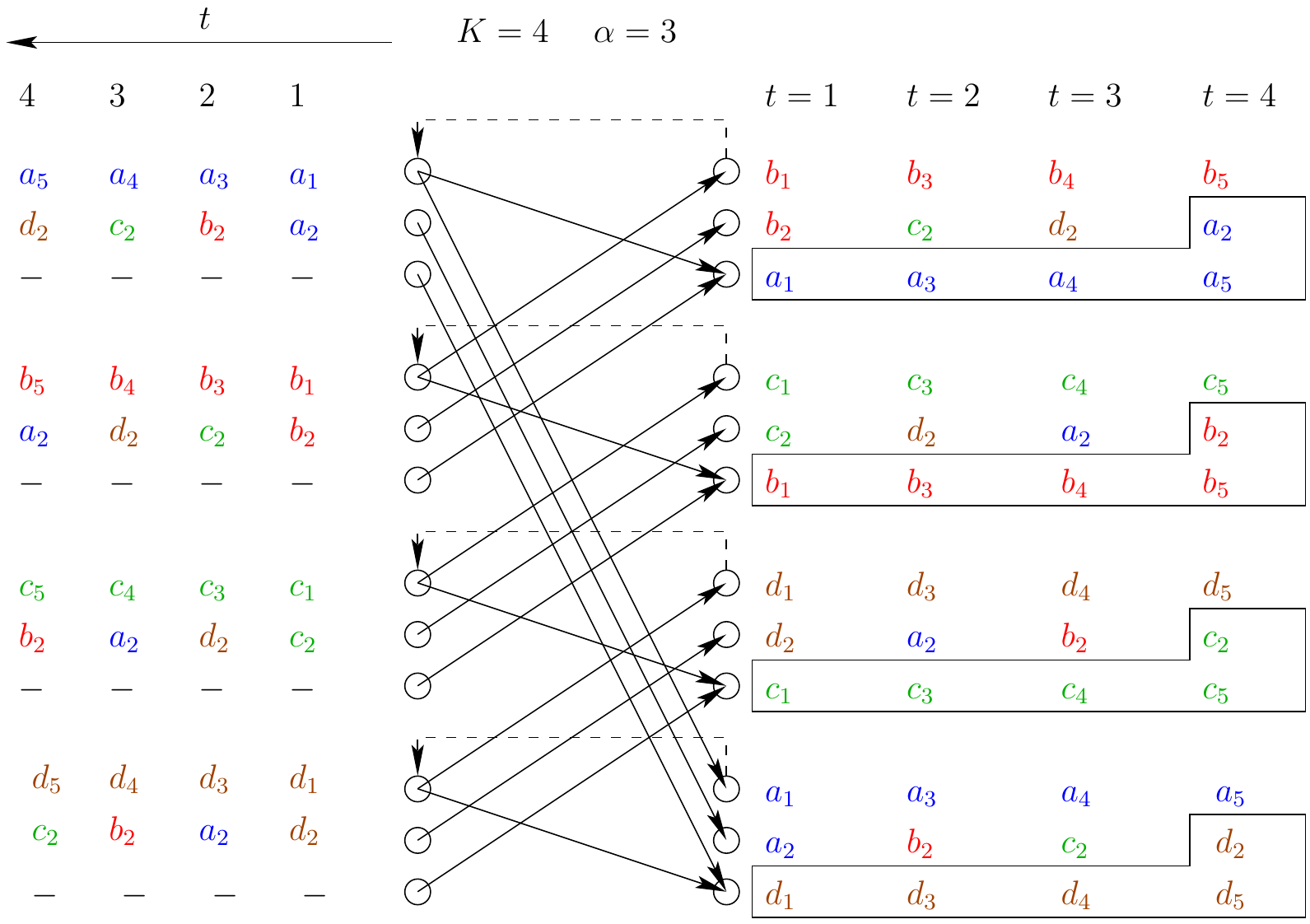}
  \caption{Feedback coding scheme for $\alpha=3$, $K=4$.}\label{SchemeKeven3}
\vspace{-0.5cm}
\end{figure*}

This scheme can be readily generalized for \emph{arbitrary} numbers of users
$K$ and for any $\alpha\geq 2$ as follows: for any $1\leq t\leq K$,
all encoders  transmit no information in the lower-most $n$ levels.
At $t=1$, the $k$th encoder transmits $(m-n)$ fresh information bits
in the top $(m-n)$ levels. Using feedback, it decodes the $(m-n)$ bits
transmitted by the $(k+1)$th encoder. For any $1<t\leq K$,
the $k$th encoder transmits fresh information on the top-most $n$
levels and in the remaining $(m-2n)$ levels, it transmits the lower
$(m-2n)$ bits decoded at $(t-1)$.  This scheme achieves $Kn + (m-2n)$
bits per user in $K$ channel uses. Hence for $\alpha\geq 2$, we have
\begin{align}
\mathcal{C}^{\mathrm{FB}}_{\mathrm{sym},\mathrm{LD}}(\alpha,K)&\geq 1 + \frac{(\alpha-2)}{K}.
\end{align}
Theorem \ref{Theorem1} shows that the feedback coding scheme presented above is optimal.
Therefore, it is clear that the gain obtained via
feedback should decrease as the number of users increases. To substantiate this claim, we note that
when $\alpha\geq 2$ (corresponding to $m\geq 2n$), a normalized per-user rate of $1$ can always be achieved without feedback
by remaining silent on the lower most $(m-n)$ levels and sending fresh information in the top-most $n$ levels.
However, with feedback, each user can send additional information in the middle $(m-2n)$ levels in the first channel use. This additional
information can eventually reach the intended decoder via the delayed feedback path in $K$ channel uses. For instance, in Figure \ref{SchemeKeven3},
the bit $a_{2}$ is eventually received at decoder $1$ in the last channel use.
Therefore feedback can boost the reliable transmission from $Kn$ bits to $Kn+ (m-2n)$ bits in $K$ uses of the channel. Thus, the
rate gain obtained via feedback is $(m-2n)/K$ which decreases as $K$ increases.

\subsection{Coding for LD-CZIC with Global Feedback}\label{coding:globalFB}
We presented the effects of local feedback on the DoF in the previous section, which vanish as K grows for the cyclic network. However, we do not claim the same behavior for a general topology and feedback model. In this section we study the cyclic network under the deterministic model with global feedback; that is each transmitter receives the output signal of \emph{all} the receivers with a unit delay. We will show that the V-curve for the 2-user channel is still a valid characterization for the sum capacity of this network.   

We present the encoding scheme with global feedback for $K=3$.  To this end, we will show that the following rate-triple  is achievable with global feedback:
\begin{align}
(R_{1},R_{2},R_{3})&= \left(R^{*},R^{*},R^{*}\right),
\end{align} 
where
\begin{align}
R^{*}&= \frac{\max(2n-m,m)}{2}.
\end{align}

In the following subsections, we describe the scheme for specific values of $(n,m)$. The generalization to arbitrary $K$ and arbitrary $(n,m)$ is straightforward.

\subsubsection{Weak Interference ($n\geq m$) }
We illustrate the basic idea behind the scheme when $n=3$ and $m=1$ so that $\alpha=1/3$. Figure \ref{FullFBweak} shows a scheme that 
achieves a rate of $5/2$ per user via global feedback.  At $t=1$, all transmitters send fresh information on all $n=3$ levels.
The least significant $m=1$ bits suffer interference from the adjacent
transmitters. Upon receiving global feedback, all $Kn=3n=9$ bits can be
obtained at all the transmitters. This is the additional benefit of
global feedback. Subsequently at $t=2$, each transmitter sends the
bits $a_{3}, b_{3}$ and $c_{3}$ respectively on the top $m=1$ levels. In the remaining lower $(n-m)$ levels, transmitter $j$ sends fresh information bits.
However, in the lower-most $m=1$ level, transmitter $j$ cancels the interference that receiver $j$ will suffer from transmitter $j+1$ at $t=2$ (since it is known via global feedback). In particular transmitter $1$ sends $a_{5}\oplus b_{1}$ in the lowest level at $t=2$ etc. It is clear that $5$ bits are reliably transmitted to each receiver in $2$ channel uses, thus achieving the rate of $(2m-m)/2= 5/2$ bits/channel-use per user.

\begin{figure*}[t]
\centering
\subfigure{
\includegraphics[width=0.44\textwidth]{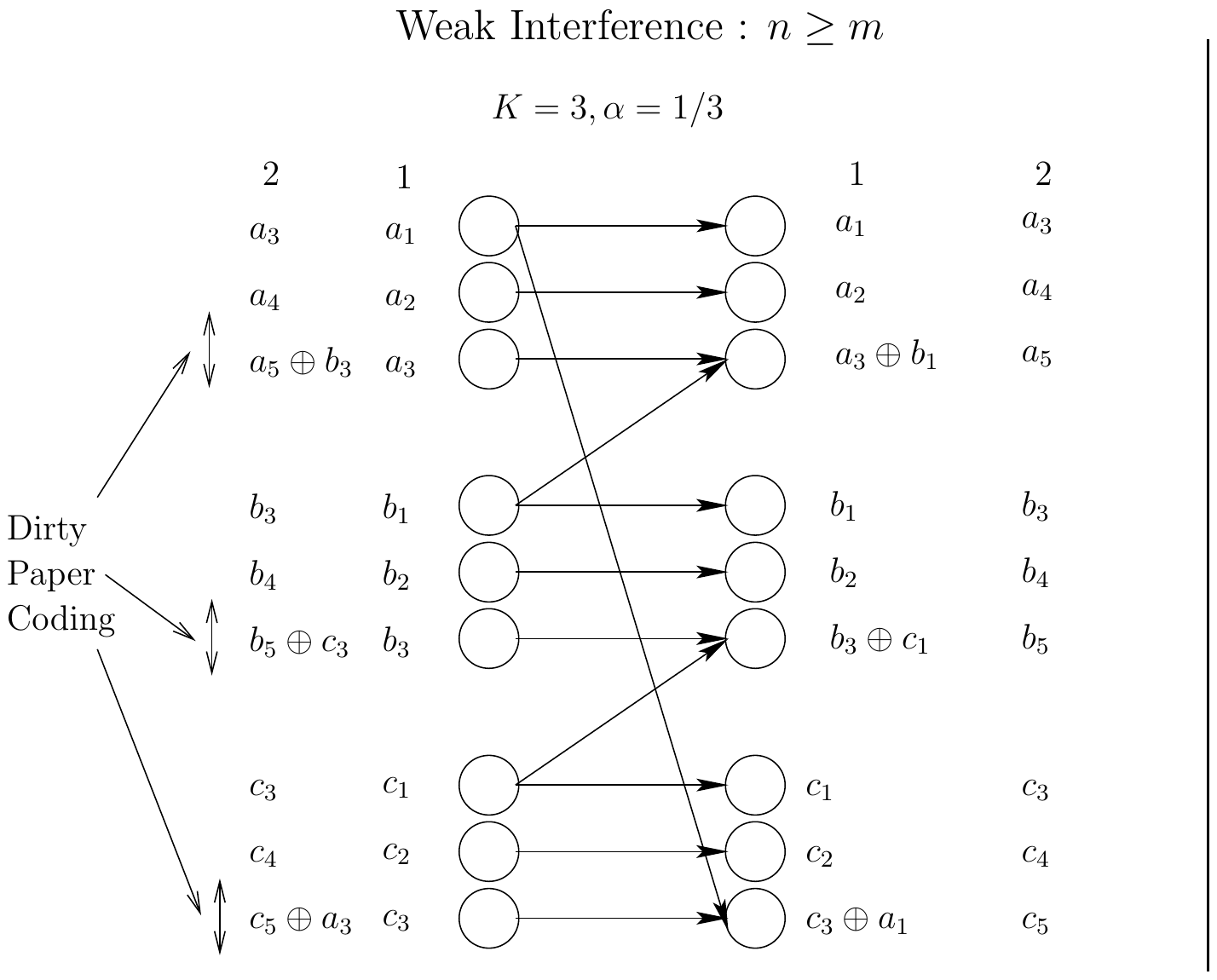}
}
\subfigure{
\includegraphics[width=0.44\textwidth]{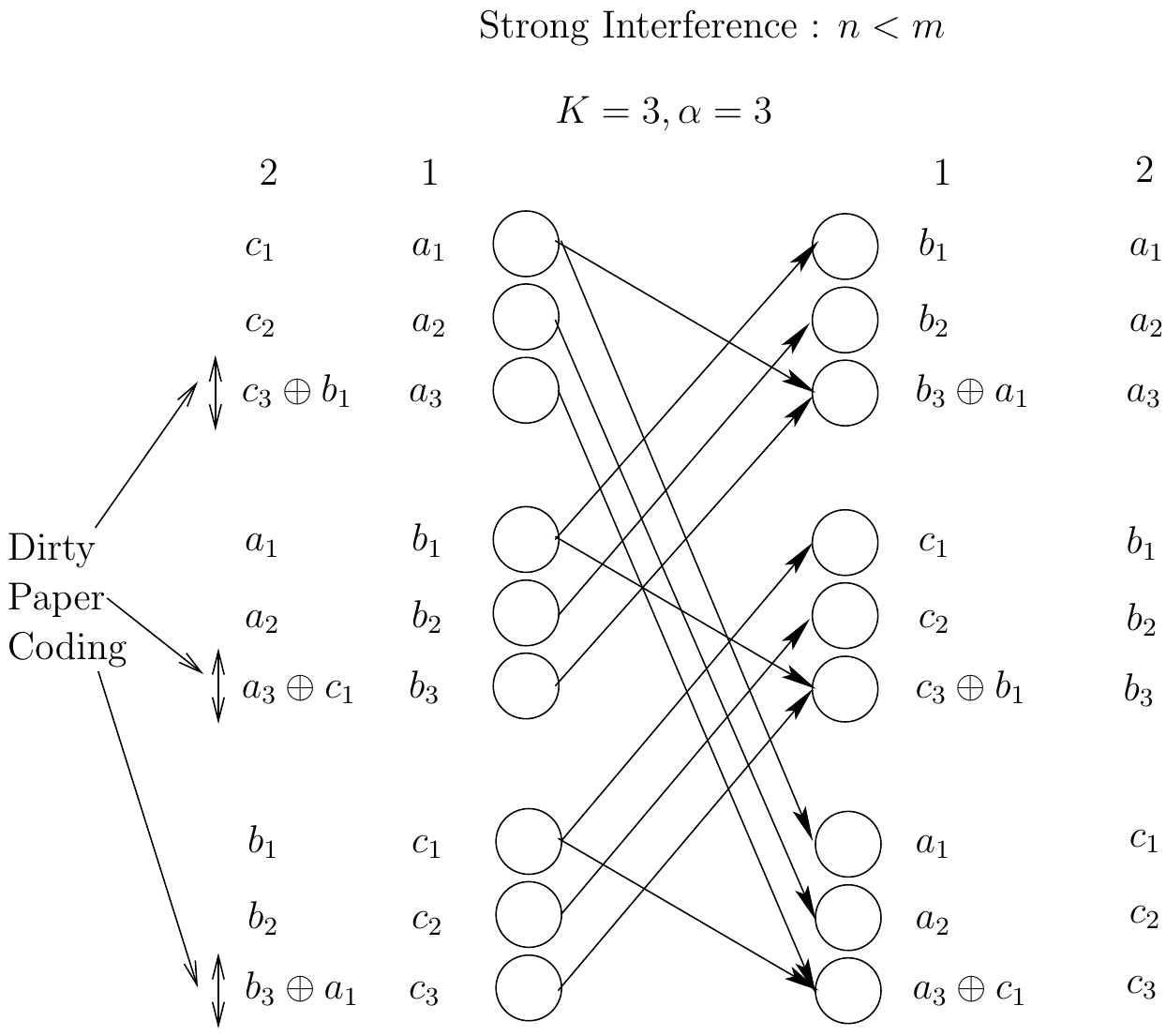}
}
 % \vspace{-0.9cm}
  \caption{Coding for LD-CZIC with global feedback.}\label{FullFBweak}
  \vspace{-0.6cm}
\end{figure*}
\subsubsection{Strong Interference ($n< m$) }
For the strong interference case, we focus on $n=1$ and $m=3$ so that $\alpha=3$.
Figure \ref{FullFBweak} shows a scheme that achieves a rate of $3/2$ per user via global feedback.
At $t=1$, all transmitters send fresh information on all $m=3$ levels.
transmitters. Upon receiving global feedback, all $Km=3m=9$ bits can be
obtained at all the transmitters. This is the additional benefit of
global feedback. Subsequently at $t=2$, transmitter $j$ sends the bits required by receiver $(j-1)\mod K$.
In addition, in the lower-most $n=1$ level, it cancels the interference that receiver $(j-1)\mod K$ will face at $t=2$ from its own transmitter. 
It is clear that $m=3$ bits are reliably transmitted to
each receiver in $2$ channel uses, thus achieving the rate of $m/2= 3/2$ bits/channel-use per user.

From these schemes, it becomes clear that it is the local feedback constraint that causes the gain due to feedback to decrease
as $K$ increases. As we have shown, under the global feedback assumption, the idea of canceling known interference (commonly referred to as
dirty paper coding) can be employed to obtain the $V$-curve. 

\section{Upper bounds on the Feedback Sum-Capacity}\label{Section:Upper}
In this section, we present two types of upper bounds on the sum-capacity of the
$K$-user LD-CZIC. The type-I upper bound allows us to show that the normalized symmetric feedback capacity
for the $K$-user LD-CZIC is always upper bounded by the symmetric feedback capacity of the $2$-user system.
The type-II upper bound is in fact a set of $K!$ genie-aided upper bounds,
in which each upper bound corresponds to a permutation of $K$ users. These type-II upper bounds are in fact
valid for the general $K$-user interference channel with noiseless channel output feedback, i.e., they are not specifically derived
for the cyclic interference channel.

We present the type-I upper bound in the following theorem:
\begin{Theo}\label{theoremK2}
The normalized symmetric feedback capacity of the $K$-user LD-CZIC satisfies
\begin{align}
\mathcal{C}^{\mathrm{FB}}_{\mathrm{sym},\mathrm{LD}}(\alpha,K)&\leq \max\left(1-\frac{\alpha}{2},\frac{\alpha}{2}\right).
\end{align}
\end{Theo}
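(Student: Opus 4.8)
The plan is to reduce the $K$-user LD-CZIC to a $2$-user feedback interference channel by a genie argument, so that the known $2$-user feedback bound of \cite{SuhTseIT} applies. Concretely, I would fix an arbitrary pair of cyclically adjacent users, say user $1$ (whose signal interferes at receiver $K$) and... no: in this cyclic model the relevant interacting pair is $(k,k+1)$, since $X_{k+1}$ appears at $Y_k$. So fix the pair $(1,2)$: transmitter $2$ interferes only at receiver $1$, and within this pair the channel looks exactly like a $2$-user one-sided (Z) interference channel. I would then give to this pair, as genie side information, everything needed to decouple it from users $3,\dots,K$. Since the cyclic structure means $Y_2$ is corrupted only by $X_3$, handing $X_3^T$ (equivalently the messages $W_3,\dots,W_K$) to receiver~$2$ as a genie makes $Y_2$ a clean function of $X_2$ (and feedback-reconstructible quantities). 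Likewise receiver~$1$ already only sees $X_1,X_2$. The upshot is that $(R_1,R_2)$ must lie in the feedback capacity region of an embedded $2$-user channel, whose symmetric point is at most $\max(1-\alpha/2,\alpha/2)$ per the linear-deterministic specialization of \cite{SuhTseIT}. Setting $R_1=\dots=R_K=R$ gives $R\le\max(1-\alpha/2,\alpha/2)$ after normalization by $n$.

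The steps, in order, are: (i) recall/restate the $2$-user LD-IC-with-feedback symmetric capacity $\max(1-\alpha/2,\alpha/2)$ in the normalized form used here, either by citing \cite{SuhTseIT} directly or by redoing the two-user converse with Fano and the standard feedback entropy expansion $nR_1+nR_2 \le I(W_1;Y_1^T)+I(W_2;Y_2^T)+n\epsilon_T$; (ii) pick the adjacent pair $(1,2)$ and identify the genie $G = (W_3,\dots,W_K)$, or equivalently the input sequences $X_3^T,\dots,X_K^T$; (iii) verify that conditioned on $G$, the pair of channels $(X_1,X_2)\mapsto Y_1$ and $(X_2,X_3)\mapsto Y_2$ — with $X_3$ now a known function of $G$ and past outputs — is stochastically a $2$-user Z-IC with feedback, in which the feedback links of users $1,2$ are at least as good as in the genie-free system (giving a genie extra information only helps the converse); (iv) conclude $R_1+R_2 \le 2\max(1-\alpha/2,\alpha/2)\cdot n$ and hence by symmetry $R\le\max(1-\alpha/2,\alpha/2)$.

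The main obstacle is step~(iii): making the decoupling rigorous in the presence of feedback. The feedback signal fed back to transmitter~$2$ is $Y_2^{t-1}$, which depends on $X_3^{t-1}$; once the genie supplies $X_3^T$ this dependence is benign, but I must be careful that the encoding functions $f_{k,t}$ of users $1$ and $2$ in the original code are still valid encoding functions in the reduced $2$-user channel after conditioning — i.e., that no "extra" correlation is introduced that a genuine $2$-user code could not also produce. The clean way around this is to not literally build a new code but instead to run the mutual-information converse directly on the original $K$-user code, conditioning the relevant entropy terms on $G$ and then bounding exactly as in the $2$-user proof; the genie terms $I(G;\cdot)$ and $H(G)$ either cancel or are absorbed because $W_3,\dots,W_K$ are independent of $W_1,W_2$. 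A secondary, purely bookkeeping point is that the LD-CZIC has two output descriptions depending on whether $n\ge m$ or $n<m$ (the very-strong regime $\alpha\ge 2$ included), so the argument must be stated uniformly enough — or case-split — to cover both, but in each case the adjacency structure $Y_k=$ (function of $X_k,X_{k+1}$) is all that is used, so the reduction goes through verbatim.
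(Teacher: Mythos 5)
Your proposal follows essentially the same route as the paper: condition on the genie $(W_1,W_3,\dots,W_K)$ and $Y_1^T$ at receiver $2$, run the Fano/entropy converse directly on the original $K$-user code, observe (by unrolling the feedback recursions around the cycle, which is the paper's Claim~\ref{claim1}) that $(X_{1t},X_{3t},\dots,X_{Kt})$ is determined by $(Y_1^{t-1},W_1,W_3,\dots,W_K)$, deduce the pairwise bound $R_j+R_{j+1}\leq \max(2n-m,m)$, and sum over the $K$ cyclic pairs. The only caution is that you should not literally \emph{cite} the $2$-user IC feedback capacity of \cite{SuhTseIT} for the embedded pair (which is a one-sided Z-channel and could a priori support a larger sum rate); the direct rederivation you describe in step (iii) — which uses only the single cross-link from transmitter $2$ to receiver $1$ — is the correct and is in fact the paper's argument.
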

The proof of Theorem \ref{theoremK2} is given in the appendix. The main idea behind this upper bound is to show that
\begin{align}
R_{j}+R_{(j+1)}\leq \max(2n-m,m),
\end{align}
for $j=1,\ldots,K$. By adding all such $K$ upper bounds and normalizing by $2nK$, we obtain the
desired bound stated in Theorem  \ref{theoremK2}.
Theorem \ref{theoremK2} along with (\ref{nofeedback}) leads to the conclusion that feedback does
not increase the symmetric capacity of the $K$-user LD-CZIC in the regime $\alpha\in [2/3,2]$. We also remark here 
that this upper bound also holds under the global feedback assumption. The intuition behind this can be seen as follows: the pairwise upper bounds
can be regarded as genie aided bounds in which the messages of the remaining $(K-2)$ users are supplied to both receivers and both transmitters,
which is tantamount to global feedback.

We next present the type-II upper bound:
\begin{Theo}\label{theoremSUMUB}
Fix a permutation order $\pi=\{\pi_{1},\ldots,\pi_{K}\}$. Then the feedback
sum-capacity of the general $K$-user interference channel is upper bounded as follows:
\begin{align}
&\mathcal{C}_{\mathrm{sum}}^{\mathrm{FB}}(K)\nonumber\\
&\leq \max_{p(x_{1},\ldots,x_{K})}\sum_{k=1}^{K}\bigg[H(Y_{\pi_{k}}|X_{\pi_{1}},Y_{\pi_{1}},\ldots,X_{\pi_{k-1}},Y_{\pi_{k-1}}) \nonumber\\
&\hspace{3cm}- H(Y_{1},\ldots,Y_{K}|X_{1},\ldots,X_{K})\bigg].\nonumber
\end{align}
\end{Theo}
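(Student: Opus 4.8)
The plan is a genie-aided converse in which the $K$ messages are processed one at a time in the order fixed by $\pi$; write $\pi_{<k}=\{\pi_1,\dots,\pi_{k-1}\}$ and let $X_{\pi_{<k}}^T$, $Y_{\pi_{<k}}^T$ denote the corresponding blocks of input and output sequences. From Fano's inequality, any reliable $(T,M_1,\dots,M_K)$ feedback code obeys $TR_{\pi_k}\le I(W_{\pi_k};Y_{\pi_k}^T)+T\epsilon_{k,T}$ with $\epsilon_{k,T}\to 0$, hence
\begin{align}
T\sum_{k=1}^{K}R_k\ \le\ \sum_{k=1}^{K} I(W_{\pi_k};Y_{\pi_k}^T)+T\epsilon_T ,\qquad \epsilon_T\to 0. \nonumber
\end{align}
The task is to bound the sum $\sum_k I(W_{\pi_k};Y_{\pi_k}^T)$ by the stated (multi-letter) quantity and then single-letterize.

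For a fixed $k$ I would first enlarge the conditioning and the observation harmlessly. Since the messages are mutually independent, $I(W_{\pi_k};Y_{\pi_k}^T)\le I(W_{\pi_k};Y_{\pi_k}^T\mid W_{\pi_{<k}})\le I(W_{\pi_k};Y_{\pi_1}^T,\dots,Y_{\pi_k}^T\mid W_{\pi_{<k}})$. Introduce $D_k:=H(Y_{\pi_1}^T,\dots,Y_{\pi_k}^T\mid W_{\pi_1},\dots,W_{\pi_k})$ with $D_0:=0$; expanding the last mutual information and applying the chain rule gives $D_{k-1}+H(Y_{\pi_k}^T\mid Y_{\pi_{<k}}^T,W_{\pi_{<k}})-D_k$. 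Now the feedback encoding rule (\ref{encodingfunction}) enters: $X_{\pi_j,t}$ is a deterministic function of $(W_{\pi_j},Y_{\pi_j}^{t-1})$, so conditioning on $(W_{\pi_j},Y_{\pi_j}^T)$ already pins down $X_{\pi_j}^T$. Inserting $X_{\pi_{<k}}^T$ into the conditioning for free and then discarding $W_{\pi_{<k}}$ yields
\begin{align}
I(W_{\pi_k};Y_{\pi_k}^T)\ \le\ D_{k-1}-D_k+H(Y_{\pi_k}^T\mid X_{\pi_1}^T,Y_{\pi_1}^T,\dots,X_{\pi_{k-1}}^T,Y_{\pi_{k-1}}^T). \nonumber
\end{align}

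Summing over $k$, the terms $D_{k-1}-D_k$ telescope to $-D_K$. A second use of the feedback structure identifies $D_K$: conditioned on $W_1,\dots,W_K$, each input vector $X_t=(X_{1,t},\dots,X_{K,t})$ is a deterministic function of the past outputs, so memorylessness gives $D_K=H(Y_1^T,\dots,Y_K^T\mid W_1,\dots,W_K)=\sum_t H(Y_{1,t},\dots,Y_{K,t}\mid X_{1,t},\dots,X_{K,t})=H(Y_1^T,\dots,Y_K^T\mid X_1^T,\dots,X_K^T)$ --- the multi-letter form of the negative term (which is zero for the linear deterministic channel and a fixed noise-entropy constant for the Gaussian channel). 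It remains to single-letterize: bound each $H(Y_{\pi_k}^T\mid X_{\pi_{<k}}^T,Y_{\pi_{<k}}^T)$ by $\sum_t H(Y_{\pi_k,t}\mid X_{\pi_1,t},Y_{\pi_1,t},\dots,X_{\pi_{k-1},t},Y_{\pi_{k-1},t})$, observe that $H(Y_1^T,\dots,Y_K^T\mid X_1^T,\dots,X_K^T)$ is already additive over time, and note that for each time slot the bracketed single-letter expression is at most its maximum over input laws; a Cesàro average over $t$ (equivalently, a time-sharing variable) together with $T\to\infty$ then give exactly $\max_{p(x_1,\dots,x_K)}\sum_k [H(Y_{\pi_k}\mid X_{\pi_1},Y_{\pi_1},\dots,X_{\pi_{k-1}},Y_{\pi_{k-1}})-H(Y_1,\dots,Y_K\mid X_1,\dots,X_K)]$, as claimed.

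I expect the main obstacle to be the entropy bookkeeping in the middle step: arranging the chain-rule splits and the ``replace messages by inputs'' moves so that every inequality points the correct way, the positive contributions collapse to precisely $H(Y_{\pi_k}^T\mid X_{\pi_1}^T,Y_{\pi_1}^T,\dots,X_{\pi_{k-1}}^T,Y_{\pi_{k-1}}^T)$, and the telescoped remainder is correctly identified with $H(Y_1^T,\dots,Y_K^T\mid X_1^T,\dots,X_K^T)$; the feedback rule (\ref{encodingfunction}) is exactly what makes those replacements legitimate. Note that no property of the cyclic topology is used here, which is why this bound is only one ingredient of the converse: one still has to select, among the $K!$ permutations $\pi$, the tightest bound for each interference regime, and that optimization --- carried out later and using the cyclic structure --- is where the real effort lies.
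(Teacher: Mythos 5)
Your proposal is correct and follows essentially the same route as the paper's proof: Fano plus message independence to get $\sum_k I(W_{\pi_k};Y_{\pi_1}^T,\dots,Y_{\pi_k}^T\mid W_{\pi_{<k}})$, telescoping of the conditional output entropies (your $D_k$ bookkeeping is exactly the paper's remark that the negative term of the $k$th mutual information cancels against part of the positive term of the $(k+1)$th), substitution of inputs for messages via the feedback encoding rule, and single-letterization via the chain rule, memorylessness, and a maximization over input laws. The only cosmetic discrepancy is that your derivation (like the paper's appendix proof and its $K=3$ example) produces a single copy of the negative term $H(Y_1,\dots,Y_K\mid X_1,\dots,X_K)$ rather than the $K$ copies that a literal reading of the displayed theorem would suggest; that is an artifact of the theorem's typesetting, not a gap in your argument.
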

To illustrate  by an example, consider the case in which $K=3$, for which Theorem \ref{theoremSUMUB} yields $6$ upper bounds on the feedback sum capacity:
\begin{align}
&\max_{p(x_{1},x_{2},x_{3})}\Big[H(Y_{1})+ H(Y_{2}|X_{1},Y_{1}) +
H(Y_{3}|X_{1},X_{2},Y_{1},Y_{2})\nonumber\\&\hspace{2cm}- H(Y_{1},Y_{2},Y_{3}|X_{1},X_{2},X_{3})\Big]\nonumber\\
&\max_{p(x_{1},x_{2},x_{3})}\Big[H(Y_{1})+ H(Y_{3}|X_{1},Y_{1}) +
H(Y_{2}|X_{1},X_{3},Y_{1},Y_{3})\nonumber\\&\hspace{2cm}- H(Y_{1},Y_{2},Y_{3}|X_{1},X_{2},X_{3})\Big]\nonumber\\
&\max_{p(x_{1},x_{2},x_{3})}\Big[H(Y_{2})+ H(Y_{1}|X_{2},Y_{2}) +
H(Y_{3}|X_{1},X_{2},Y_{1},Y_{2})\nonumber\\&\hspace{2cm}- H(Y_{1},Y_{2},Y_{3}|X_{1},X_{2},X_{3})\Big]\nonumber\\
&\max_{p(x_{1},x_{2},x_{3})}\Big[H(Y_{2})+ H(Y_{3}|X_{2},Y_{2}) +
H(Y_{1}|X_{2},X_{3},Y_{2},Y_{3})\nonumber\\&\hspace{2cm}- H(Y_{1},Y_{2},Y_{3}|X_{1},X_{2},X_{3})\Big]\nonumber
\end{align}
\begin{align}
&\max_{p(x_{1},x_{2},x_{3})}\Big[H(Y_{3})+ H(Y_{1}|X_{3},Y_{3}) +
H(Y_{2}|X_{1},X_{3},Y_{1},Y_{3})\nonumber\\&\hspace{2cm}- H(Y_{1},Y_{2},Y_{3}|X_{1},X_{2},X_{3})\Big]\nonumber\\
&\max_{p(x_{1},x_{2},x_{3})}\Big[H(Y_{3})+ H(Y_{2}|X_{3},Y_{3}) +
H(Y_{1}|X_{2},X_{3},Y_{2},Y_{3})\nonumber\\&\hspace{2cm}- H(Y_{1},Y_{2},Y_{3}|X_{1},X_{2},X_{3})\Big]\nonumber.
\end{align}
For an arbitrary $K$, Theorem \ref{theoremSUMUB} gives a total of $K!$
upper bounds. Optimization of these bounds for the general $K$ user
case and asymmetric channel gains is prohibitively
complex. For the scope of this paper, we are interested in the case of CZIC with symmetric channel parameters.
Depending on the range of the interference parameter $\alpha$, we carefully select one of the type-II
bounds and evaluate it to obtain the desired converse result as stated in Theorem \ref{Theorem1}.

\subsection{Very Weak and Weak interference regimes: $0\leq \alpha \leq 2/3$}
In this regime, we select the type-II upper bound corresponding to the identical permutation order:
\begin{align}
\pi&= (1,2,\ldots,K).
\end{align}
Theorem \ref{theoremSUMUB} yields the following  bound on the
sum-capacity:
\begin{align}
&\mathcal{C}_{\mathrm{sum},\mathrm{LD}}^{\mathrm{FB}}(K)\nonumber\\
&\leq
\max_{p(x_{1},\ldots,x_{K})}\bigg[\sum_{k=1}^{K}H(Y_{k}|X_{1},Y_{1},\ldots,X_{k-1},Y_{k-1}) \nonumber\\&\hspace{2cm}- H(Y_{1},\ldots,Y_{K}|X_{1},\ldots,X_{K})\bigg]\\
&=
\max_{p(x_{1},\ldots,x_{K})}\sum_{k=1}^{K}H(Y_{k}|X_{1},Y_{1},\ldots,X_{k-1},Y_{k-1})\label{eqna2}\\
&= \max_{p(x_{1},\ldots,x_{K})} \bigg[ H(Y_{1}) + H(Y_{2}|X_{1},Y_{1}) +\ldots\nonumber\\&\hspace{2cm}+
H(Y_{K}|X_{1},Y_{1},\ldots,X_{K-1},Y_{K-1})\bigg]\label{eqna3}\\
&\leq n + \max_{p(x_{1},\ldots,x_{K})}\sum_{k=2}^{K-1}H(Y_{k}|X_{k-1},Y_{k-1}) \nonumber\\&\hspace{2cm}+\max_{p(x_{1},\ldots,x_{K})}H(Y_{K}|X_{1},Y_{1},X_{K-1},Y_{K-1})\label{eqna4},
\end{align}
where (\ref{eqna2}) follows from the fact that the channel outputs $(Y_{1},\ldots,Y_{K})$ are deterministic functions of the channel inputs $(X_{1},\ldots,X_{K})$,
and (\ref{eqna4}) follows from the fact that $H(Y_{1})\leq \max(m,n)=n$.

To further upper bound (\ref{eqna4}), we first recall the notation used for $n\geq m$ in (\ref{notationweak}):
\begin{align}
U_{k}&: \mbox{ top-most } (n-m) \mbox{ bits of } X_{k}\nonumber\\
V_{k}&: \mbox{ top-most } m \mbox{ bits of } X_{k}\nonumber\\
L_{k}&: \mbox{ lower-most } m \mbox{ bits of } X_{k}\nonumber.
\end{align}
For any $2\leq k\leq (K-1)$, we have the following sequence of inequalities:
\begin{align}
H(Y_{k}|X_{k-1},Y_{k-1})
&= H(Y_{k}|X_{k-1},Y_{k-1},V_{k})\label{eqnb1}\\
&= H(U_{k},L_{k}\oplus V_{(k+1)}|X_{k-1},Y_{k-1},V_{k})
\end{align}
\begin{align}
&\leq H(U_{k}|V_{k})+ H(L_{k}\oplus V_{k+1}) \\
&\leq \max\left(0,n-2m\right)+m\label{midterm},
\end{align}
where (\ref{eqnb1}) is due to the fact that $V_{k}$ can be determined from $(X_{k-1},Y_{k-1})$.

Finally we upper bound the last term in (\ref{eqna4}) as follows:
\begin{align}
&H(Y_{K}|X_{1},Y_{1},X_{K-1},Y_{K-1})\nonumber\\
&= H(Y_{K}|V_{1},X_{1},Y_{1},X_{K-1},Y_{K-1},V_{K})\\
&= H(U_{K},L_{K}\oplus V_{1}|V_{1},V_{K},X_{1},Y_{1},X_{K-1},Y_{K-1})\\
&\leq H(U_{K},L_{K}|V_{K})\\
&= H(X_{K}|V_{K})\\
&\leq (n-m).\label{lastterm}
\end{align}

Using (\ref{midterm}) and (\ref{lastterm}), we can further upper bound (\ref{eqna4}) to obtain
\begin{align}
&\mathcal{C}_{\mathrm{sum},\mathrm{LD}}^{\mathrm{FB}}(K)\nonumber\\
&\leq n +  (K-2)\Big[\max\left(0,n-2m\right)+m\Big] + (n-m).
\end{align}
Therefore, the normalized symmetric feedback capacity is upper bounded as follows:
\begin{align}
\mathcal{C}^{\mathrm{FB}}_{\mathrm{sym},\mathrm{LD}}(\alpha,K)&\leq \max(\alpha,1-\alpha) + \frac{\min(\alpha,2-3\alpha)}{K},
\end{align}
which can also be written as
\begin{align}
\mathcal{C}^{\mathrm{FB}}_{\mathrm{sym},\mathrm{LD}}(\alpha,K)\leq
\begin{cases}
(1-\alpha)+\frac{\alpha}{K}, &0\leq\alpha\leq 1/2\\
\alpha+ \frac{(2-3\alpha)}{K}, &1/2\leq \alpha\leq 2/3.
\end{cases}
\end{align}

Note that the upper bound alone shows that in the limit $K\rightarrow \infty$ the
upper bound converges to the no-feedback symmetric capacity.
This implies that in the limit of large $K$, the feedback gain vanishes.

\subsection{Very strong interference: $\alpha \geq 2$}
In this regime, we select the type-II upper bound corresponding to the following permutation order:
\begin{align}
\pi&= (1,K,K-1,K-2,\ldots,3,2).
\end{align}
Theorem \ref{theoremSUMUB} yields the following upper bound on the
sum-capacity:
\begin{align}
&C_{\mathrm{sum},\mathrm{LD}}^{\mathrm{FB}}(K)\nonumber\\
&\leq
\max_{p(x_{1},\ldots,x_{K})}\Big[ H(Y_{1}) +
 H(Y_{K}|X_{1},Y_{1})+\ldots\nonumber\\&\hspace{2.1cm}\ldots+H(Y_{2}|X_{1},X_{3},\ldots,X_{K},Y_{1},Y_{3},\ldots,Y_{K})\Big]\\
&\leq \max_{p(x_{1},\ldots,x_{K})}\Big[ H(Y_{1}) + H(Y_{K}|X_{1},Y_{1})\nonumber\\
&\hspace{1.2cm}  + \sum_{k=3}^{K-1}H(Y_{k}|X_{k+1},Y_{k+1})+ H(Y_{2}|X_{1},Y_{1},X_{3},Y_{3})\Big]\label{eqnSa}.
\end{align}
To further upper bound (\ref{eqnSa}), we recall the notation used for $n< m$  in (\ref{notationstrong}):
\begin{align}
U_{k}&: \mbox{ top-most } (m-n) \mbox{ bits of } X_{k}\nonumber\\
V_{k}&: \mbox{ top-most } n \mbox{ bits of } X_{k}\nonumber\\
L_{k}&: \mbox{ lower-most } n \mbox{ bits of } X_{k}\nonumber.
\end{align}
We now upper bound the terms in (\ref{eqnSa}) as follows. We first have the trivial upper bound $H(Y_{1})\leq \max(m,n)=m$.
We then bound the second term in (\ref{eqnSa}) as follows:
\begin{align}
H(Y_{K}|X_{1},Y_{1})&= H(U_{1},L_{1}\oplus V_{K}|X_{1},Y_{1})\\
&= H(L_{1}\oplus V_{K}|X_{1},Y_{1})\\
&\leq n\label{eqnSb}.
\end{align}
Next, for any $3\leq k\leq (K-1)$, we have
\begin{align}
H(Y_{k}|X_{k+1},Y_{k+1})&= H(U_{k+1},L_{k+1}\oplus V_{k}|X_{k+1},Y_{k+1})\\
&= H(L_{k+1}\oplus V_{k}|X_{k+1},Y_{k+1})\\
&\leq n,
\end{align}
which implies that
\begin{align}
\sum_{k=3}^{K-1}H(Y_{k}|X_{k+1},Y_{k+1})&\leq (K-3)n\label{eqnSc}.
\end{align}
Finally, we have
\begin{align}
H(Y_{2}|X_{1},Y_{1},X_{3},Y_{3})
&= H(U_{3},L_{3}\oplus V_{2}|X_{1},Y_{1},X_{3},Y_{3})\\
&= H(V_{2}|X_{1},Y_{1},X_{3},Y_{3})\\
&= H(V_{2}|U_{2},X_{1},Y_{2},X_{3},Y_{3})\\
&= 0\label{eqnSd},
\end{align}
where (\ref{eqnSd}) follows from the fact that $\alpha\geq 2$ corresponds to the case in which $m-n\geq n$
and therefore $V_{2}$ is completely determined by $U_{2}$.

Using (\ref{eqnSb}), (\ref{eqnSc}) and (\ref{eqnSd}), we have the following upper bound from (\ref{eqnSa}):
\begin{align}
&\mathcal{C}_{\mathrm{sum},\mathrm{LD}}^{\mathrm{FB}}(K)\nonumber\\
&\leq H(Y_{1}) + H(Y_{K}|X_{1},Y_{1}) + \sum_{k=3}^{K-1}H(Y_{k}|X_{k+1},Y_{k+1}) \nonumber\\&\hspace{0.5cm}+ H(Y_{2}|X_{1},Y_{1},X_{3},Y_{3})\\
&\leq m + (K-2)n.
\end{align}
Normalizing this upper bound by $nK$, we obtain
\begin{align}
\mathcal{C}^{\mathrm{FB}}_{\mathrm{sym},\mathrm{LD}}(\alpha,K)&= \frac{\mathcal{C}_{\mathrm{sum},\mathrm{LD}}^{\mathrm{FB}}(K)}{nK}\\
&\leq \frac{m+(K-2)n}{nK}\\
&= 1 + \frac{(\alpha-2)}{K},
\end{align}
which is the desired upper bound on the normalized symmetric feedback capacity.

\section{Gaussian $K$-user CZIC with Feedback}\label{GaussianSection}
In this section, we consider the $K$-user Gaussian CZIC with feedback.
The signal transmitted by user $k$ is denoted by $X_k$.
We impose an average unit power constraint at each user; that is $\E[X_k^2]\leq 1$. The signal observed at receiver $k$ is obtained by
\begin{align}
Y_k= \sqrt{\SNR} X_k+ \sqrt{\INR} X_{k+1} +Z_k, \qquad k=1,2,\dots,K,
\label{eq:G-model}
\end{align}
where we define $X_{K+1}=X_1$ for consistency.

In the following, we study five different regimes depending on the parameter $\alpha$ (again,
defined for this model as $\alpha= \log(\INR)/\log(\SNR)$, and propose upper bounds and feedback coding schemes for each one.
We analyze the performance of the proposed schemes, and derive symmetric achievable rates for them. In the rest of this section, we use bold symbols to denote blocks of length $T$, e.g.,
\[
\bx_k[j]\hspace{-2pt}
=\hspace{-2pt}\left (X_k((j-1)T\hspace{-2pt}+\hspace{-2pt}1),\hspace{-1pt} (X_k((j-1)T\hspace{-1.2pt}+\hspace{-1.2pt}2), \dots, (X_k(jT) \right).
\]
The encoding schemes that we propose for each regime involve message splitting and power/rate allocation to the resulting sub-messages, motivated by the analysis of the linear deterministic model. The message splitting at the encoders is similar to that we have seen for the LD model. The powers allocated to the sub-messages at the $k$th transmitter are chosen so that they are received at the $k$th and $(k-1)$th receiver at a proper power. For sake of clarity, we explain the relationship between the coding scheme for the LD model and the Gaussian model for one of the regimes (the very weak interference regime) in full detail. However, we avoid repeating the same argument for other regimes as they follow similarly.

Moreover, for each regime of parameters the power/rate allocations proposed in the following subsections are only meaningful under a certain underlying assumption on the values of $\SNR$, $\INR$, and their proportion. More precisely, in our analysis we have excluded some marginal ranges of $\SNR$ and $\INR$ for which the rates are constant. It is worth mentioning that the $K$-user cyclic Z-Interference channel studied in this work can be approximated by simple and easily analyzable models for the excluded range of parameters. We present analysis of the excluded ranges for the very weak interference regime in Appendix~\ref{app: excluded-range} for completeness, and avoid repeating a similar
argument for other regimes for the sake of brevity.

We state our upper bounds in terms of following expressions:
\begin{align}
A&\triangleq \frac{1}{2}\log\left(1+\SNR+\INR+2\sqrt{\SNR\cdot\INR}\right)\\
B&\triangleq \frac{1}{2}\log\left(1+\SNR+2\INR+\INR^{2}+2\sqrt{\SNR\cdot\INR}\right)\\
C&\triangleq \frac{1}{2}\log\left(1+\SNR+\INR\right)\\
D&\triangleq \frac{1}{2}\log(1+\SNR)\\
E&\triangleq \frac{1}{2}\log(1+\INR).
\end{align}

\subsection{Very Weak Interference: $0\leq \alpha\leq 1/2$}
\label{seubsec:G-VW}
\subsubsection{Coding Scheme}
Recall the coding strategy proposed for the LD model under the very weak interference regime in Section~\ref{sec:LD-VW}, where a total of $K(n-m)+m$ information symbols were transmitted to each receiver over $K$ channel uses. There each transmit signal is split into three set of mutually exclusive levels, namely $X_{k}(t)=(X_{k,1}(t),X_{k,2}(t),X_{k,3}(t))$, and user $k$ sends its information symbols in $X_{k,1}(1)$ (including $m$ bits), $\{X_{k,2}(1),\dots,X_{k,2}(K)\}$ (including $(n-2m)K$ bits), and $\{X_{k,3}(1),\dots,X_{k,3}(K)\}$ (including $mK$ bits).
Moreover, $X_{k,1}(t)$ includes those levels that are heard at the $(k-1)$th receiver as interference, and the received power of $(X_{k,2},X_{k,3}(t))$ at the $(k-1)$th receiver is at the noise level. 

Motivated by the summary above, the encoding scheme we propose here also takes $K$ blocks, each of length $T$.
We split the message of user $k$ into a total of $(2K+1)$ messages, namely \[
\left( M_k^{(h)},M_k^{(1,m)},M_k^{(2,m)}\dots, M_k^{(K,m)}, M_k^{(1,l)}, M_k^{(2,l)}, M_k^{(K,l)} \right),
 \]
 that it wishes to send to its respective receiver over $K$ transmission blocks. $M_k^{(h)}$, $M_k^{(t,m)}$ and $M_k^{(t,l)}$ in this message splitting are respectively the counterparts of $X_{k,1}(1)$, $X_{k,2}(t)$ and $X_{k,3}(t)$.\\
 Moreover, we set the size of these message sets with rates given by
\begin{align}
\begin{split}
\frac{\log |\cM_k^{(h)}|}{n}&= R_1=\cgfp{\INRP+1}{3}, \\&\qquad k=1,\dots,K,  \\
\frac{\log |\cM_k^{(j,m)}|}{n}&= R_2=\cgfp{\frac{\SNRP}{\INRP}+1}{2\INRP+1}, \\&\qquad k=1,\dots,K,\ j=1,\dots,K,\\
\frac{\log |\cM_k^{(j,l)}|}{n}&= R_3= \cgfp{\INRP+1}{2}, \\&\qquad k=1,\dots,K, \ j=1,\dots,K,
\end{split}
\label{eq:rates-vw}
\end{align}
where $\log^+(x)=\max(\log(x),0)$.
The messages are encoded using  individual Gaussian codebooks with unit average power, to obtain
\[
\left(\bs_k^{(h)},\bs_k^{(1,m)},\dots, \bs_k^{(K,m)},  \bs_k^{(1,l)}, \bs_k^{(K,l)} \right).
\]
The signals transmitted by user $k$ in block $k$ are formed as combinations of such Gaussian codewords through a proper message splitting, i.e.,
\[
\bx_k[j]= \beta_h \bx_{k,h}[j] + \beta_m \bx_{k,m}[j] + \beta_l \bx_{k,l}[j] ,
\]
where the power factors $\beta_h$, $\beta_m$ and $\beta_l$ can be chosen such that $\beta_h^2+\beta_m^2+\beta_l^2\leq 1$.
Here, we have $\bx_{k,m}[j]=\bs_{k}^{(j,m)}$ and $\bx_{k,l}=\bs_k^{(j,l)}$. Moreover, in the first block, we set $\bx_{k,h}[1]=\bs_{k}^{(h)}$.
For subsequent blocks, the high power part of the signal consists of the high power codeword of the neighboring transmitter sent over the last block, that is
$\bx_{k,h}[j]=\bx_{k+1,h}[j-1]$. Of course, this is only possible if transmitter $k$ can decode $\bx_{k+1,h}[j-1]$ from $\by_k[j-1]$ received over the feedback link.

It remains to set the fraction of power allocated to each part of the transmit signal. In the rest of this section we assume that $\INR\geq 2$, and $\SNR \geq 2\INR^2$, which are conditions that ensure that all the rates in \eqref{eq:rates-vw} are positive. We will separately analyze the excluded range of $\SNR$ and $\INR$ in Appendix~\ref{app: excluded-range}. Having the fact that $(\bx_{k,m}[j], \bx_{k,l}[j] )$ are the counterparts of $(X_{k,2}(j),X_{k,3}(j))$ in the LD model, we allocate them a fraction of power so that they are received at the $(k-1)$th receiver at the noise level. This allows us to safely treat them as noise when decoding at the $(k-1)$th receiver. Note from \eqref{eq:G-model} that the power of $(\bx_{k,m}[j], \bx_{k,l}[j] )$ would be magnified by $\INR$ at the $(k-1)$th receiver, and we wish the result to be at the noise power. Hence we allocate a total of $1/\INR$ of the available power at the $k$th transmitter to these codewords, and the remaining $(\INR-1)/\INR$ to the high power part, i.e.,
\[
\beta_h=\sqrt{\frac{\INR-1}{\INR}}.
\]
On the other hand, $\bx_{k,l}[j]$ is the counterpart of $X_{k,3}(j)$ which is corrupted by interference when received at its intended receiver. Remember from  Section~\ref{sec:LD-VW} that we can only decode $X_{k,3}(j)$ once the interfering signal is decoded and removed in the next channel use. The power of the interference at the $k$th receiver is $1/\INR$. Hence, we choose the power of $\bx_{k,l}[j]$ so that after magnification by $\SNR$ over the channel to the $k$th receiver, it gets the same power as the interference, i.e., we choose
\[
\beta_l=\sqrt{\frac{\INR}{\SNR}},
\]
and the remaining power will be allocated to $\bx_{k,m}[j]$, that is
\[
\beta_m=\sqrt{\frac{\SNR-\INR^2}{\SNR\cdot\INR}}.
\]
Therefore, the transmit signal from the $k$th transmitter over the $j$th block can be written as
\begin{align}
\bx_k[j]&= \sqrt{\frac{\INRP-1}{\INRP}} \bx_{k,h}[j] + \sqrt{\frac{\SNRP-\INRP^2}{\SNRP \cdot \INRP}}\bx_{k,m}[j] \nonumber\\&\quad+ \sqrt{\frac{\INRP}{\SNRP}}\bx_{k,l}[j].\nonumber
\end{align}

\paragraph{Decoding the feedback signal at encoder}
Upon receiving $\by_k[j-1]$, transmitter $k$ removes its own signal to obtain
\begin{align}
&\by_k[j-1]-\sqrt{\SNRP} \bx_k[j-1] \nonumber\\
&= \sqrt{\INRP} \bx_{k+1}[j-1] +\bz_k[j-1]\nonumber\\
&=\sqrt{\INRP-1} \bx_{k+1,h}[j-1] +\sqrt{\frac{\SNRP-\INRP^2}{\SNRP}} \bx_{k+1,m}[j-1] \nonumber\\&\qquad+ \sqrt{\frac{\INRP^2}{\SNRP}} \bx_{k+1,l}[j-1] +\bz_k[j-1].
\end{align}
It then decodes $\bx_{k+1,h}[j-1]$, up to rate
\begin{align}
R_1 \leq \cgf{\INRP+1}{2}.\label{const:VW-1}
\end{align}

\paragraph{Decoding process at the receiver}
The receiver node $k$ has to decode $\bx_{k,h}[j]$ and $\bx_{k,m}[j]$. This is done using a sequential decode-and-remove scheme, which can support rates that satisfy
\begin{align}
R_1 \leq \cgf{\SNRP+\INRP+1}{\frac{\SNRP}{\INRP}+\INRP+1}\label{const:VW-2}\\
R_2 \leq \cgf{\frac{\SNRP}{\INRP}+\INRP+1}{2\INRP +1}.\label{const:VW-3}
\end{align}
The receiver also stores the remaining part of its received signal,
\begin{align}
&\tilde{\by}_k[j] \nonumber\\
&= \sqrt{\INRP} \bx_{k,l}[j] + \sqrt{\INRP-1} \bx_{k+1,h}[j] \nonumber\\
&\quad+ \sqrt{\frac{\SNRP-\INRP^2}{\SNRP}} \bx_{k+1,m}[j] + \sqrt{\frac{\INRP^2}{\SNRP}} \bx_{k+1,l}[j] + \bz_k[j],
\end{align}
for further processing. In the next block, upon decoding $\bx_{k,h}[j+1]$, it can use it to remove a part of the interference in $\tilde{\by}_k[j]$. Recall that $\bx_{k,h}[j+1]=\bx_{k+1,h}[j]$. Therefore, by removing $\bx_{k+1,h}[j]$ from $\tilde{\by}_k[j]$, it can decode $\bx_{k,l}[j]$, provided that its rate satisfies
\begin{align}
R_3 \leq \cgf{\INRP+2}{2}.\label{const:VW-4}
\end{align}
The total achievable rate would be
\begin{align}
R_{\mathrm{sym}}&=\frac{ R_1 + K R_2 + K R_3}{K}\\
&\geq \cg{\frac{\SNRP}{\INRP}} +\frac{1}{2K} \log \left(1+\INRP\right) \nonumber\\
&\qquad-\frac{1}{2}\log\left(\frac{2(2\INR+1)}{\INR+1}\right)-\frac{\log 3}{2K}\\
&\geq (D-E)+ \frac{E}{K}-1-\frac{\log 3}{2K}.
\end{align}

\subsubsection{Upper Bound}
In this regime, we use the following upper bound from Theorem \ref{theoremSUMUB} on the feedback sum-capacity:
\begin{align}
&\mathcal{C}_{\mathrm{sum},\mathrm{G}}^{\mathrm{FB}}(K)\nonumber\\
&\leq \max_{p(x_{1},\ldots,x_{K})} \Big[h(Y_{1})+h(Y_{2}|X_{1},Y_{1})+\ldots\nonumber\\
&\hspace{2.1cm}+h(Y_{K}|X_{1},Y_{1},\ldots,X_{K-1},Y_{K-1})\nonumber\\
&\hspace{2.1cm}- h(Y_{1},\ldots,Y_{K}|X_{1},\ldots,X_{K})\Big].\label{WeakGaussian}
\end{align}
We first note the following:
\begin{align}
h(Y_{1})&\leq \frac{1}{2}\log\left(1+\SNR+\INR+2\sqrt{\SNR\cdot\INR}\right) + c\\
&=  A + c,
\end{align}
where $c=1/2\log(2\pi\mbox{e})$.

For $2\leq k\leq (K-1)$, we have
\begin{align}
&h(Y_{k}|X_{1},Y_{1},\ldots,X_{k-1},Y_{k-1})\nonumber\\
&\leq h(Y_{k}|X_{k-1},Y_{k-1})\\
&= h(\sqrt{\SNR}X_{k}+\sqrt{\INR}X_{k+1}+Z_{k}|\sqrt{\INR}X_{k}+Z_{k-1},X_{k-1})\\
&\leq h(\sqrt{\SNR}X_{k}+\sqrt{\INR}X_{k+1}+Z_{k}|\sqrt{\INR}X_{k}+Z_{k-1})\\
&\leq  \frac{1}{2}\log\left(\frac{1+\SNR+2\INR+\INR^{2}+2\sqrt{\SNR\cdot\INR}}{1+\INR}\right)  + c\\
&= B-E+c.
\end{align}
Similarly, we have
\begin{align}
&h(Y_{K}|X_{1},Y_{1},\ldots,X_{K-1},Y_{K-1})\nonumber\\
&\leq h(Y_{K}|X_{1},Y_{1},X_{K-1},Y_{K-1})\\
&\leq h(\sqrt{\SNR}X_{K}+Z_{K}|\sqrt{\INR}X_{K}+Z_{K-1})\\
&\leq \frac{1}{2}\log\left(\frac{1+\SNR+\INR}{1+\INR}\right)+c\\
&= C-E+c.
\end{align}
Finally, we have
\begin{align}
h(Y_{1},\ldots,Y_{K}|X_{1},\ldots,X_{K})&= h(Z_{1},\ldots,Z_{K})\\
&= Kc.
\end{align}
Hence, from (\ref{WeakGaussian}), we have
\begin{align}
\mathcal{C}_{\mathrm{sum},\mathrm{G}}^{\mathrm{FB}}(K)&\leq A + (K-2)(B-E) + C-E\\
&= K(B-E)+(A+C+E-2B),\label{Twousages}
\end{align}
which implies that
\begin{align}
\mathcal{C}_{\mathrm{sym},\mathrm{G}}^{\mathrm{FB}}(K)&\leq (B-E) +\frac{A+C+E-2B}{K}\\
&= (B-E) + \frac{E}{K} + \frac{(A+C-2B)}{K}\\
&\leq (B-E) + \frac{E}{K},
\label{eq:UB:G-VW}
\end{align}
where we have used the fact that
\begin{align}
A&\leq B\\
C&\leq B,
\end{align}
which implies that $(A+C-2B)\leq 0$.

We also note that
\begin{align}
2B&= \log(1+ \SNR+2\INR+ \INR^{2}+2\sqrt{\SNR\cdot\INR})\\
&\leq \log(1+ 6\SNR)\\
&\leq \log(6)+ \log(1+\SNR)\\
&= \log(6)+2D.
\end{align}
%and
%\begin{align}
%2A&= \log(1+ \SNR+\INR+ 2\sqrt{\SNR\cdot\INR})\\
%&\leq \log(1+4\SNR)\\
%&\leq 2+2D,
%\end{align}
%and
%\begin{align}
%2C&= \log(1+ \SNR+\INR)\\
%&\leq \log(1+2\SNR)\\
%&\leq 1+2D,
%\end{align}
%and finally
%\begin{align}
%B&\geq D
%\end{align}
%to note that
%\begin{align}
%A+C-2B&\leq 1+D+\frac{1}{2}+D-2D\\
%&= \frac{3}{2}.
%\end{align}
Collecting all the bounds, we have
\begin{align}
\mathcal{C}_{\mathrm{sym},\mathrm{G}}^{\mathrm{FB}}(K)&\leq(B-E) + \frac{E}{K} \\
&\leq (D-E)+\frac{E}{K} + \frac{\log(6)}{2}.
\end{align}
Hence, the symmetric feedback capacity satisfies
\begin{align}
&\Bigg[(D-E)+ \frac{E}{K}\Bigg]-1-\frac{\log 3}{2K}\nonumber\\
&\hspace{2cm} \leq \mathcal{C}_{\mathrm{sym},\mathrm{G}}^{\mathrm{FB}}(K)\nonumber\\
&\hspace{2cm}\leq \Bigg[(D-E)+\frac{E}{K}\Bigg] + \frac{\log(6)}{2}
\end{align}
so that the gap is given as
\begin{align}
\Delta&= \frac{\log(6)}{2}+1+ \frac{  \log(3)}{2K}\\
&< \frac{3}{2}+1+ \frac{2}{2K}\\
&\leq \frac{11}{4},
\end{align}
and the degrees of freedom are given as
\begin{align}
\DOF^{\mathrm{FB}}(\alpha,K)&= (1-\alpha)+\frac{\alpha}{K}, \quad 0\leq \alpha\leq 1/2.
\end{align}

\subsection{Weak Interference: $1/2\leq \alpha\leq 2/3$}
\subsubsection{Coding Scheme}
In this regime we have $\SNRP^{1/2} \leq \INRP \leq \SNRP^{2/3}$. The encoding scheme for this regime takes advantage of the feedback link. We create a cycle of length $K$ consisting of all the interfering and feedback links. A part of the message of each user is conveyed  through this cycle.

The encoding scheme is performed over $K$ blocks. Assume each user has $2K+1$ messages, namely
\[
\left( M_k^{(1,h)},M_k^{(2,h)},\dots, M_k^{(K,h)}, M_k^{(m)}, M_k^{(1,l)}, M_k^{(2,l)}, M_k^{(K,l)} \right),
 \]
 The rates of the messages are given by\footnote{Here we assume $\SNR\geq \INR \geq 1$ to guarantee positive rates. If this does not hold, similarly to the analysis in Appendix~\ref{app: excluded-range}, the bounded gap result can be established.}
\begin{align}
\begin{split}
\frac{\log |\cM_k^{(j,h)}|}{n}&= R_1=\cgfp{\INRP+1}{2\frac{\SNRP}{\INRP}+1}, \\&\qquad k=1,\dots,K, \ j=1,\dots,K,\\
\frac{\log |\cM_k^{(m)}|}{n}&= R_2= \frac{1}{2}\log^+\left(\frac{1+\SNR^{2}}{1+2\INR^{3}}\right),\\&\quad k=1,\dots,K,\\
\frac{\log |\cM_k^{(j,l)}|}{n}&= R_3=\cgfp{\frac{\SNRP}{\INRP} + 2}{2}, \\&\qquad k=1,\dots,K, \ j=1,\dots,K.
\end{split}
\label{eq:rates-w}
\end{align}

The $k$th transmitter encodes its message using a individual Gaussian codebook with unit average power, to obtain

\[
\left(\bs_k^{(1,h)},\dots, \bs_k^{(K,h)}, \bs_k^{(m)},  \bs_k^{(1,l)}, \bs_k^{(K,l)} \right).
\]
The transmitting signal in block $k$ is formed as
\begin{align}
\bx_k[j]&= \sqrt{\frac{\INRP^2-\SNRP}{\INRP^2}} \bx_{k,h}[j] + \sqrt{\frac{\SNRP-\INRP}{\INRP^2}}\bx_{k,m}[j]\nonumber\\&\qquad + \sqrt{\frac{1}{\INRP}}\bx_{k,l}[j] ,
\end{align}
where $\bx_{k,h}[j]=\bs_k^{(j,h)}$ and $\bx_{k,l}[j]=\bs_k^{(j,l)}$. The moderate power codeword transmitted during the first block is the codeword corresponding to $M_k^{(m)}$. However, in the next block, $\bx_{k,m}$ would be the moderate power codeword of the neighbor sent during the past block. More precisely,
\begin{align}
\bx_{k,m}[j]&=\bx_{k+1,m}[j-1]= \bx_{k+2,m}[j-2] \cdots = \bx_{k+j-1,m}[1]\nonumber\\&=\bs_{k+j-1}^{(m)}.\nonumber
\end{align}
Note that $\bx_{k+1,m}[j-1]$ has to be decoded from the signal sent to the transmitter $k$ over the feedback link at the end of block $(j-1)$.

\paragraph{Decoding the feedback signal at encoder}
The signal received at receiver $k$ in block $j$ is forwarded to its respective transmitter at the end of the block. The transmitter will use it for forming its transmitting signal in the next block.  The transmitter $k$ removes the part of the signal sent by it to obtain
\begin{align}
&\by_k[j]-\sqrt{\SNRP} \bx_k[j]\nonumber\\
&=\sqrt{\INRP}\bx_{k+1}[j] + \bz_{k}[j]\nonumber\\
&= \sqrt{\frac{\INRP^2-\SNRP}{\INRP}} \bx_{k+1,h}[j] + \sqrt{\frac{\SNRP-\INRP}{\INRP}}\bx_{k+1,m}[j] \nonumber\\&\qquad + \bx_{k+1,l}[j] + \bz_{k}[j].
\end{align}
The transmitter needs the moderate power codeword for the next transmission. In order to decode $\bx_{k+1,m}[j]$, it first decodes and removes the high power codeword, and then decodes the moderate power one.  This can be done provided
 \begin{align}
R_1 &\leq  \cgf{\INRP+1}{\frac{\SNRP}{\INRP}+1}\\
R_2 &\leq  \cgf{\frac{\SNRP}{\INRP}+1}{2}.
\end{align}
It is easy to check that the rates in \eqref{eq:rates-w} satisfy both constraints.

\paragraph{Decoding process at the receiver}
The decoding procedure at decoder $k$ is as follows. Upon receiving
\begin{align}
\by_k[j]&=\sqrt{\frac{\SNRP}{\INRP^2} (\INRP^2-\SNRP)} \bx_{k,h}[j] \nonumber\\
&\quad+ \sqrt{\frac{\SNRP}{\INRP^2}(\SNRP-\INRP)}\bx_{k,m}[j] \nonumber\\
&\quad
+ \sqrt{\frac{\INRP^2-\SNRP}{\INRP}} \bx_{k+1,h}[j] \nonumber\\
&\quad+ \sqrt{\frac{\SNRP}{\INRP}}\bx_{k,l}[j] + \sqrt{\frac{\SNRP-\INRP}{\INRP}}\bx_{k+1,m}[j] \nonumber\\
&\quad+ \bx_{k+1,l}[j]+\bz_k[k],
\end{align}
it decodes $\bx_{k,h}[j]$, $\bx_{k,m}[j]$, and $\bx_{k+1,h}$ sequentially; that is in each step it treats everything else as noise, decodes the codeword, and removes the corresponding part from the received signal. This can be done as long as
\begin{align}
R_1&\leq \cgf{\SNRP+\INRP+1}{\frac{\SNRP^2}{\INRP^2}+\INRP+1}\\
R_2&\leq \cgf{\frac{\SNRP^2}{\INRP^2}+\INRP+1}{\frac{\SNRP}{\INRP}+\INRP+1}\\
R_1&\leq \cgf{\frac{\SNRP}{\INRP}+\INRP+1}{2\frac{\SNRP}{\INRP}+1},
\end{align}
which are all satisfied with the rates in \eqref{eq:rates-w}. The remaining part of the signal would be
\begin{align}
\tilde{\by}_k[j]&=  \sqrt{\frac{\SNRP}{\INRP}}\bx_{k,l}[j] +\sqrt{\frac{\SNRP-\INRP}{\INRP}}\bx_{k+1,m}[j] \nonumber\\
&\quad+ \bx_{k+1lm}[j]+\bz_k[k],
\end{align}
which will be stored for further processing and for decoding $\bx_{k,l}[j]$ later. In the next block, once $\bx_{k,m}[j+1]$ is decoded, the decoder again recalls $\tilde{\by}_k[j]$ and subtracts from it the part corresponding to $\bx_{k,m}[j+1]=\bx_{k+1,m}[j]$. Therefore, it obtains
\begin{align}
&\tilde{\by}_k[j]-\sqrt{\frac{\SNRP-\INRP}{\INRP}}\bx_{k+1,m}[j] \nonumber\\
&\quad= \sqrt{\frac{\SNRP}{\INRP}}\bx_{k,l}[j]+ \bx_{k+1,l}[j]+\bz_k[k],
\end{align}
from which $\bx_{k,l}[j]$ can be decoded as long as
\begin{align}
R_3 \leq \cgf{\frac{\SNRP}{\INRP}+2}{2},
\end{align}
which clearly holds with $R_3$ in \eqref{eq:rates-w}.
Therefore, in each block, the receiver can decode the low power codeword of the previous block after removing the moderate power interfering signal. However, this process does not have to continue forever, since in the $K$-th block, the moderate power interfering signal at receiver $k$ would be
\[
\bx_{k+1,m}[K]=\bx_{k+2,m}[K-1]=\cdots=\bx_{k+K,m}[1]=\bx_{k,m}[1],
\]
which was already decoded in the first block.

In summary, the total rate can be achieved per user per block would be
\begin{align}
&R_{\mathrm{sym}}\nonumber\\
&=\frac{K R_1 + R_2 + K R_3}{K}\\
&= \cg{\INRP} +\frac{1}{2}\log\left(\frac{\frac{\SNR}{\INR} +2}{2\frac{\SNR}{\INR}+1}\right)   \nonumber\\&\qquad+\frac{1}{K} \left[\frac{1}{2}\log(1+\SNR^2)-\frac{1}{2}\log(1+\INR^3)\right]\\
&\geq \cg{\INRP} +\frac{1}{2}\log\frac{1}{2}\nonumber\\&\qquad+\frac{1}{K} \left[\frac{1}{2}\log(1+\SNR)^2-\frac{1}{2}-\frac{1}{2}\log(1+\INR)^3\right] \\
%&= E +\frac{1}{2K}\log\left(\frac{(1+\SNR)^{2}}{2(1+\INR)^{3}}\right)-1\\
&= E+ \frac{(2D-3E)}{K} -\frac{1}{2} -\frac{1}{2K}.
\end{align}

\subsubsection{Upper Bound}
In this regime, we use the same upper bound as in the case of very weak interference regime.
In particular, from (\ref{Twousages}), we have
\begin{align}
\mathcal{C}_{\mathrm{sym},\mathrm{G}}^{\mathrm{FB}}(K)&\leq (B-E)+ \frac{(A+C+E-2B)}{K}\\
&\leq 1+ E + \frac{(A+C+E-4E)}{K}\\
&\leq 1+ E + \frac{(1+D+1/2+D+E-4E)}{K}\label{Weakmiddle}\\
&= E+\frac{(2D-3E)}{K} + 1+ \frac{3}{2K}.
\end{align}
Here, in (\ref{Weakmiddle}), we have used the fact that for the weak interference regime, we have $2E\leq B\leq 1+2E$, $A\leq 1+D$, and $C\leq 1/2 + D$.
Therefore, we have
\begin{align}
&\Bigg[E+ \frac{(2D-3E)}{K}\Bigg] -\frac{1}{2} -\frac{1}{2K}\nonumber\\
&\hspace{1cm}\leq    \mathcal{C}_{\mathrm{sym},\mathrm{G}}^{\mathrm{FB}}(K)\nonumber\\
&\hspace{1cm}\leq \Bigg[E+\frac{(2D-3E)}{K}\Bigg] + 1+ \frac{3}{2K},
\end{align}
which implies that the gap is bounded as follows:
\begin{align}
\Delta&\leq \frac{3}{2}+ \frac{2}{K}
\end{align}
which is at most $3$ bits per user-pair and we have
\begin{align}
\DOF^{\mathrm{FB}}(\alpha,K)&= \alpha+ \frac{(2-3\alpha)}{K}, \hspace{3pt} 1/2\leq \alpha\leq 2/3.
\end{align}

\subsection{Moderate Interference: $2/3\leq \alpha\leq 1$}
\subsubsection{Coding Scheme}
In this regime we use the private and common message for the encoding scheme. Assume each transmitter has two messages, namely the high power (common) message $M_k^{(h)}$, and the low power (private) message $M_k^{(l)}$. The following transmission scheme aims to convey the common message $M_{k}^{(h)}$ to both receivers $k$ and $k+1$. However, the private message $M_{k}^{(l)}$ can be decoded only by the respective receiver.

We assume that the high power messages of all users have the same rate. Similarly, the rate of the low power messages for all users are the same; that is
\begin{align}
R_1&= \frac{\log |\cM_k^{(h)}|}{n}, \qquad k=1,\dots,K,\nonumber\\
R_2&= \frac{\log |\cM_k^{(l)}|}{n}, \qquad k=1,\dots,K.
\end{align}

The encoder first maps its messages to Gaussian codewords with unit average power, $\bx_{k,h}$ and $\bx_{k,l}$, and sends
\[
\bx_k= \sqrt{\frac{\INRP-1}{\INRP}} \bx_{k,h}  + \sqrt{\frac{1}{\INRP}} \bx_{k,l}.
\]
The receiver node $k$, upon receiving $\by_k$, with
\begin{align}
&\by_k[j] \nonumber\\
&=  \sqrt{\SNRP} \bx_k[j]+ \sqrt{\INRP} \bx_{k+1}[j] +\bz_k[j]\\
&=   \sqrt{ \frac{\SNRP}{\INRP} (\INRP-1)} \bx_{k,h} [j] +  \sqrt{ \INRP-1} \bx_{k+1,h}[j]\nonumber\\
&\qquad+\sqrt{ \frac{\SNRP}{\INRP} } \bx_{k,l} [j]
		+  \bx_{k+1,l}[j] + \bz_k[j],
\end{align}
first jointly decodes the codewords $\bx_{k,h}^{(h)}$ and $\bx_{k+1,h}^{(h)}$ treating all the rest as noise. Here we deal with a multiple access channel, whose achievable rate is characterized by
\begin{align}
R_1 &\leq \cgf{\SNRP+2}{\frac{\SNRP}{\INRP} +2},\\
R_1 &\leq \cgf{\INRP+ \frac{\SNRP}{\INRP} +1}{\frac{\SNRP}{\INRP} +2},\\
2R_1 &\leq \cgf{\SNRP+\INRP+ 1}{\frac{\SNRP}{\INRP} +2}.
\end{align}
In particular, it is easy to show that\footnote{We assume $\INR\geq 1$. If this condition does not hold, then a bounded gap result can be shown in a similar fashion to Appendix~\ref{app: excluded-range}.}
\begin{align}
R_1=\frac{1}{4}\log^+\left(\frac{\SNRP+\INRP+ 1}{\frac{\SNRP}{\INRP} +2}\right)
\end{align}
satisfies the above constraints. After decoding the high power codewords, and removing them from the received signal, receiver $k$ decodes its own low power message by treating the other private codeword as noise. This private message can be reliably decoded provided that
\begin{align}
R_2 &\leq \cgf{\frac{\SNRP}{\INRP} +2}{2},
\end{align}
which yields an achievable total rate of
\begin{align}
R_{\mathrm{sym}}&=R_1+R_2\\
&=\frac{1}{4}\log\left(\SNRP+\INRP+ 1\right) + \frac{1}{4}\log\left(2+\frac{\SNR}{\INR}\right)-\frac{1}{2}\\
&\geq \frac{1}{2}\log(1+\SNR)-\frac{1}{4}\log(1+\INR) -\frac{1}{2}\\
&= D-\frac{E}{2}-\frac{1}{2}.
\end{align}
\subsubsection{Upper Bound}
In this regime, we will develop a different upper bound that is analogous
to the type-I upper bound obtained for the linear deterministic channel model.
We have the following bound on the feedback sum capacity:
\begin{align}
\mathcal{C}_{\mathrm{sum},\mathrm{G}}^{\mathrm{FB}}(K)&\leq \frac{K}{2}(A+C-E).\label{GaussianTypeIBound}
\end{align}
The proof of (\ref{GaussianTypeIBound}) is given in the appendix.
Hence, (\ref{GaussianTypeIBound}) implies that the symmetric feedback capacity is upper bounded as
\begin{align}
\mathcal{C}_{\mathrm{sym},\mathrm{G}}^{\mathrm{FB}}(K)&\leq \frac{A+C-E}{2}\\
&= \frac{A+C}{2}- \frac{E}{2}.
\end{align}
Note that in this regime, we have
\begin{align}
2A&= \log(1+\SNR+\INR+2\sqrt{\SNR\cdot\INR})\\
&\leq \log(1+4\SNR)\\
&\leq \log(4)+\log(1+\SNR)\\
&= 2+ 2D,
\end{align}
and similarly,
\begin{align}
2C&= \log(1+\SNR+\INR)\\
&\leq \log(1+2\SNR)\\
&\leq \log(2)+\log(1+\SNR)\\
&= 1+ 2D,
\end{align}
which implies that
\begin{align}
\frac{A+C}{2}&\leq D+\frac{3}{4}.
\end{align}
Hence, we have
\begin{align}
\Bigg[D-\frac{E}{2}\Bigg]-\frac{1}{2}\leq \mathcal{C}_{\mathrm{sym},\mathrm{G}}^{\mathrm{FB}}(K)\leq \Bigg[D-\frac{E}{2}\Bigg]+\frac{3}{4},
\end{align}
so that the gap between the upper and lower bounds is at most $5/4$ bits and we have
\begin{align}
\DOF^{\mathrm{FB}}(\alpha,K)&= 1-\frac{\alpha}{2}, \quad 2/3\leq \alpha\leq 1.
\end{align}

\subsection{Strong Interference: $1\leq \alpha\leq 2$}
\subsubsection{Coding Scheme}
In this regime, we have $\SNRP \leq \INRP \leq \SNRP^2$.
The encoding scheme for this interference regime is simple, and the desired degrees of freedom can be achieved in one block. Denote the message of user $k$ by $M_k\in \cM_k$, where all the message sets have the same size which results in a symmetric rate  of $R$.  Each user takes a random Gaussian codebook with rate $R$ and unit average power. Then it randomly maps its message to $\bx_k$ and sends it over the channel. The $k$-th receiver observes $\by_k$ through a multiple access channel from the $k$-th and $(k+1)$-th  transmitters, in which it has to decode both messages. The achievable rate of the MAC is well-known as \cite{Cover:book}
\begin{align}
R &\leq \cg{\SNRP},\\
R &\leq \cg{\INRP},\\
2R &\leq \cg{\INRP+\SNRP}.
\end{align}
Hence, it is clear that by choosing
\begin{align}
R_{\mathrm{sym}}&=\frac{1}{4} \log\left(1+\INRP+\SNRP\right)\\
&= \frac{C}{2},
\end{align}
all constraints are satisfied and a symmetric rate of $R_{\mathrm{sym}}$ is therefore achievable.

\subsubsection{Upper bound}
For this regime, we use the same upper bound developed in the previous section:
\begin{align}
\mathcal{C}_{\mathrm{sym},\mathrm{G}}^{\mathrm{FB}}(K)&\leq \frac{A+C-E}{2}.
\end{align}
Therefore, the symmetric feedback capacity satisfies
\begin{align}
\frac{C}{2}&\leq \mathcal{C}_{\mathrm{sym},\mathrm{G}}^{\mathrm{FB}}(K)\leq \frac{C}{2}+\frac{A-E}{2}\label{BoundS},
\end{align}
and the gap between the bounds is
\begin{align}
\Delta&= \frac{A-E}{2}\\
&= \frac{\log(1+\SNR+\INR+2\sqrt{\SNR\cdot\INR})-\log(1+\INR)}{4}\\
&\leq \frac{\log(1+4\INR)-\log(1+\INR)}{4}\\
&\leq \frac{\log(4)+\log(1+\INR)-\log(1+\INR)}{4}\\
&=\frac{1}{2}.
\end{align}
Moreover, from (\ref{BoundS}), it is straightforward to show that
\begin{align}
\DOF^{\mathrm{FB}}(\alpha,K)&= \frac{\alpha}{2}, \quad 1\leq \alpha\leq 2.
\end{align}

\subsection{Very-Strong Interference: $\alpha\geq 2$}
\subsubsection{Coding Scheme}
The encoding scheme we propose here takes $K$ blocks, each of length $T$. We assume each user $k$ has a total of $K+1$ messages, namely $\left( M_k^{(l)}, M_k^{(1,h)}, \dots, M_k^{(K,h)} \right)$, that it wishes to send to its respective receiver over $K$ transmission blocks. We assume that  $M_k^{(l)}\in \cM_k^{(l)}$ and $M_k^{(j,h)} \in \cM_k^{(j,h)}$, where $\cM$'s are the  message sets. Moreover, we set the size of these message sets so that\footnote{Here we assume $\SNR\geq 2$ to make sure that $R_2$ is positive. Note that $R_1$ is positive since $\INR\geq \SNR^2$. A similar analysis as in Appendix~\ref{app: excluded-range} can be done if $\SNR<2$.}
\begin{align}
\begin{split}
\frac{\log |\cM_k^{(l)}| }{n} &= R_1 = \cgfp{\frac{\INRP}{\SNRP^2}+1}{2},
%{\frac{1}{\SNR}+1}
\\&\qquad k=1,\dots,K \\
\frac{\log |\cM_k^{(j,h)}| }{n} &= R_2=\cgfp{\SNRP+1}{3},
% {2+\frac{1}{\SNR}},
\\&\qquad k=1,\dots,K, \ j=1,\dots,K.
\end{split}
\label{eq:rates-VS}
\end{align}

That is, all the first messages of all the users have the same rate. Furthermore, the rates of all the remaining messages are also identical.
Each message is encoded by a capacity achieving Gaussian codebook with unit variance. Hence user $k$ has $K+1$ Gaussian codewords,
$\bs_k^{(l)}, \bs_k^{(1,h)}, \dots, \bs_k^{(K,h)}$, each of length $n$.

The signal sent by  transmitter $k$ in block $j$ is composed of two parts, the high power part $\bx_{k,h}[j]$ and low power $\bx_{k,l}[j]$:
\begin{align}
\bx_k[j]=  \sqrt{ \frac{\SNRP -1}{\SNRP} } \bx_{k,h}[j] + \sqrt{ \frac{1}{\SNRP} }  \bx_{k,l}[j].
\end{align}
 In all blocks, the high power part is the codeword corresponding to a fresh message. In the first block, since the nodes have not yet received any feedback, their low power codewords also describe fresh messages. However, for all blocks $j\geq 2$, the low level codeword used to form the transmitting signal is the low level codeword of their neighbor sent on the previous block. We will show that this message can be decoded from the signal received over the feedback link at the end of the last block. More precisely,
 \begin{align}
 \bx_{k,h}[j]&= \bs_k^{(j,h)}, \quad k=1,2,\dots,K,\ j=1,2,\dots,K,\\
 \bx_{k,l}[1]&= \bs_k^{(l)}, \quad k=1,2,\dots,K,\\
  \bx_{k,l}[j]&= \bx_{k+1,l}[j-1], \quad k=1,2,\dots,K,\ j=2,\dots,K.
 \end{align}
Therefore we have the following recursive relationship between the low power codewords:
\begin{align}
 \bx_{k,l}[j]&= \bx_{k+1,l}[j-1]=\bx_{k+2,l}[j-2] =\cdots = \bx_{k+j-1,l}[1]\nonumber\\
 &=\bs_{k+j-1}^{(l)}\label{relationVS},
\end{align}
where all the user and block indicators are modulo $K$, e.g., $\bx_{k+j-1}[1]=\bx_{(k+j-1) \mod K}[1]$.

\paragraph{Decoding the feedback signal at encoder}
As stated above, in order to form the transmitting signal in block $j$, transmitter $k$ uses the low power codeword sent by user $k+1$ in block $j-1$. We first show that this codeword can be decoded based on the signal it receives over the feedback link at the end of block $j-1$.

Once $\by_{k}[j-1]$ is received, transmitter $k$ first removes its own signal, $\bx_k[j-1]$, to obtain
\begin{align}
&\by_k[j-1]-\sqrt{\SNRP}\bx_k[j-1]\nonumber\\
&=\sqrt{\INRP} \bx_{k+1}[j-1]+  \bz_k[j-1]\\
&=\sqrt{ \frac{\INRP}{\SNRP} (\SNRP-1)} \bx_{k+1,h}[j-1] +  \sqrt{ \frac{\INRP}{\SNRP} } \bx_{k+1,l}[j-1]\nonumber\\&\qquad+\bz_k[j-1],
\end{align}
from which it has to decode both $\bx_{k+1,h}[j-1]$ and $\bx_{k+1,l}[j-1]$. It first decodes $\bx_{k+1,h}[j-1]$, treating everything else as noise. Then, it removes $\bx_{k+1,h}[j-1]$ from the signal and decodes $\bx_{k+1,l}[j-1]$ in a similar manner. This is possible as long as
\begin{align}
R_2 &\leq \cgf{\INRP+1}{\frac{\INRP}{\SNRP} +1}\\
R_1 &\leq \cg{\frac{\INRP}{\SNRP}},
\end{align}
which are clearly satisfied by the rates chosen in \eqref{eq:rates-VS}. Therefore, the transmitter $k$ has access to $\bx_{k+1,l}[j-1]$, which will be used as its low power codeword for block $j$.

\paragraph{Decoding process at the receiver}

The signal sent by user $k$ over the $j$-th block is given by
\begin{align}
\bx_k[j]=  \sqrt{ \frac{\SNRP -1}{\SNRP} } \bx_{k,h}[j] + \sqrt{ \frac{1}{\SNRP} }  \bx_{k,l}[j],
\end{align}
which results in
\begin{align}
\by_k[j] &=  \sqrt{\SNRP} \bx_k[j]+ \sqrt{\INRP} \bx_{k+1}[j] +\bz_k[j]\\
&=   \sqrt{ \frac{\INRP}{\SNRP} (\SNRP-1)} \bx_{k+1,h}[j] +  \sqrt{ \frac{\INRP}{\SNRP} } \bx_{k+1,l}[j]\nonumber\\
&	\qquad	+  \sqrt{\SNRP-1}\bx_{k,h}[j] +\bx_{k,l}[j]+ \bz_k[j].
\end{align}
At the end of the $j$-th block, user $k$  sequentially decodes the codewords $\bx_{k+1,h}[j]$, $\bx_{k+1,l}[j]$, and $\bx_{k,h}[j]$. At each step, it decodes the corresponding codewords, treating all the remaining parts as noise. Once one codeword is decoded, it removes it from its received signal, and proceeds with the next codeword. This can be done provided that
\begin{align}
R_2 &\leq \cgf{\INRP+\SNRP+1}{\frac{\INRP}{\SNRP} + \SNRP +1},\\
R_1 &\leq \cgf{\frac{\INRP}{\SNRP} + \SNRP +1}{\SNRP +1},\\
R_2 &\leq \cgf{\SNRP +1}{2}.
\end{align}
It is easy to check that all constraints are satisfied by the choice of $R_1$ and $R_2$ in \eqref{eq:rates-VS}.

At the end of each block, each receiver can decode its respective high power codeword, as well as some high power and low power codewords from other users which  it is not intended to decode. However, from (\ref{relationVS}), the low power codeword decoded at receiver $k$ at the very last block would be
\[
\bx_{k+1,l}[K]=\bx_{k+2,l}[K-1]=\cdots=\bx_{k+K,l}[1] \stackrel{(*)}{=} \bx_{k,l}[1]=\bs_k^{(l)},
\]
where $(*)$ holds since $k+K=k\ \mod K$. Therefore all the intended messages  for receiver $k$, can be decoded using this scheme in $K$ blocks.
The total rate of communication would be
\begin{align}
R_{\mathrm{sym}} &=\frac{R_1+ K R_2}{K} = \cg{\SNRP} \nonumber\\&\qquad+ \frac{1}{2K} \log \left(\frac{\INRP}{\SNRP^2}+1\right) -\frac{K\log 3+1}{2K}\\
&= D+  \frac{1}{2K} \log \left(\frac{\INRP}{\SNRP^2}+1\right) -\frac{K\log 3+1}{2K}\\
&\geq D+  \frac{1}{2K} \log \left(\frac{1+\INRP}{(1+\SNRP)^2}\right) -\frac{K\log 3+1}{2K}\\
&= D+  \frac{(E-2D)}{K}- \frac{(K\log 3+1)}{2K}.\label{lowerVSA}
\end{align}
\subsubsection{Upper Bound}
For this regime we use the following upper bound from Theorem \ref{theoremSUMUB}, similar to the LD case:
\begin{align}
&\mathcal{C}_{\mathrm{sum},\mathrm{G}}^{\mathrm{FB}}(K)\nonumber\\
&\leq \max_{p(x_{1},\ldots,x_{K})} \Big[h(Y_{K})+h(Y_{K-1}|X_{K},Y_{K})+\ldots\nonumber\\
&\hspace{2.2cm}+h(Y_{1}|X_{2},Y_{2},\ldots,X_{K},Y_{K})\nonumber\\
&\hspace{2.2cm}- h(Y_{1},\ldots,Y_{K}|X_{1},\ldots,X_{K})\Big]\label{VSGaussian}.
\end{align}
We upper bound the first term in (\ref{VSGaussian}) as
\begin{align}
h(Y_{K})&\leq \frac{1}{2}\log\left(1+\SNR+\INR+2\sqrt{\SNR\cdot\INR}\right) + c\\
&= A+c,
\end{align}
where $c=(1/2)\log(2\pi\mbox{e})$.

For any $2\leq k \leq (K-1)$, we bound
\begin{align}
h(Y_{k}|X_{k+1},Y_{k+1},\ldots,X_{K},Y_{K})&\leq h(Y_{k}|X_{k+1})\\
&= h(\sqrt{\SNR}X_{k}+Z_{k}|X_{k+1})\\
&\leq h(\sqrt{\SNR}X_{k}+Z_{k})\\
&\leq \frac{1}{2}\log\left(1+\SNR\right)+c\\
&= D+c.
\end{align}
Finally, we bound the penultimate term in (\ref{VSGaussian}) as follows:
\begin{align}
&h(Y_{1}|X_{2},Y_{2},\ldots,X_{K},Y_{K})\nonumber\\
&\leq  h(Y_{1}|X_{2},Y_{2},X_{K},Y_{K})\\
&= h(\sqrt{\SNR}X_{1}+Z_{1}|X_{2},Y_{2},X_{K},Y_{K})\\
&\leq h(\sqrt{\SNR}X_{1}+Z_{1}|\sqrt{\INR}X_{1}+Z_{K})\label{VSmiddleB}\\
&\leq \frac{1}{2}\log\left(\frac{1+\SNR+\INR}{1+\INR}\right)+c\\
&= C-E+c,
\end{align}
where in (\ref{VSmiddleB}), we used the fact that $Y_{K}=\sqrt{\SNR}X_{K}+\sqrt{\INR}X_{1}+Z_{K}$ and the fact that conditioning reduces differential entropy.
Finally, we note that
\begin{align}
h(Y_{1},\ldots,Y_{K}|X_{1},\ldots,X_{K})&= h(Z_{1},\ldots,Z_{K}|X_{1},\ldots,X_{K})\\
&= h(Z_{1},\ldots,Z_{K})\\
&= \sum_{k=1}^{K}h(Z_{k})\\
&= Kc.
\end{align}
Hence, the feedback sum capacity is upper bounded as follows:
\begin{align}
\mathcal{C}_{\mathrm{sum},\mathrm{G}}^{\mathrm{FB}}(K)&\leq A+ (K-2)D + C-E,
\end{align}
which implies that the symmetric feedback capacity satisfies
\begin{align}
\mathcal{C}_{\mathrm{sym},\mathrm{G}}^{\mathrm{FB}}(K)&\leq D+ \frac{(A+C-2D-E)}{K}\label{upperVSA}.
\end{align}
We now simplify this upper bound to compare it with the lower bound obtained in (\ref{lowerVSA}).

We note that
\begin{align}
2A&=\log(1+\SNR+\INR+2\sqrt{\SNR\cdot\INR})\\
&\leq \log(1+4\INR)\\
&\leq \log(4)+\log(1+\INR)\\
&= 2 + 2E,
\end{align}
and
\begin{align}
2C&=\log(1+\SNR+\INR)\\
&\leq \log(1+2\INR)\\
&\leq \log(2)+\log(1+\INR)\\
&= 1 + 2E,
\end{align}
which together imply that
\begin{align}
A+C-2D-E&\leq (3/2)+2E-2D-E\\
&= (E-2D)+ 3/2.
\end{align}
Hence, from (\ref{lowerVSA}) and (\ref{upperVSA}), the symmetric feedback capacity satisfies
\begin{align}
&\Bigg[D+  \frac{(E-2D)}{K}\Bigg]- \frac{(K\log(3)+1)}{2K}\nonumber\\
&\hspace{1.5cm}\leq \mathcal{C}_{\mathrm{sym},\mathrm{G}}^{\mathrm{FB}}(K)\nonumber\\
&\hspace{1.5cm}\leq \Bigg[D+  \frac{(E-2D)}{K}\Bigg] + \frac{3}{2K}\label{BoundVS}
\end{align}
which implies that the gap is given as
\begin{align}
\Delta&= \frac{4+K\log(3)}{2K}\\
&\leq \frac{2+K}{K}
\end{align}
which is at most $2$-bits. We note here that the gap of $2$-bits can be reduced further to $1$-bit by modifying the power allocation
in our coding scheme. The resulting gap analysis is however complicated and is not pursued here.

Moreover, from (\ref{BoundVS}), we have
\begin{align}
\DOF^{\mathrm{FB}}(\alpha,K)&= 1+\frac{(\alpha-2)}{K}, \quad \alpha\geq 2.
\end{align}
\section{Conclusions}\label{CONCLUSION}
In this paper, we have considered the $K$-user cyclic Z-interference channel with noiseless feedback.
The symmetric feedback capacity of the linear deterministic CZIC has been completely characterized for all
interference regimes. Using insights from the linear model, the symmetric feedback capacity for the
Gaussian CZIC has been characterized within a constant number of bits for all interference regimes.
As a consequence of the constant bit gap result, the symmetric feedback degrees of freedom for the Gaussian CZIC
have also been characterized.
It has been shown that the capacity gain obtained via feedback decreases as the number of users increases.
The resulting $\DOF^{\mathrm{FB}}(\alpha,K)$ for $K>2$ users is a skewed $V$-curve, as a function of the interference parameter $\alpha$.
Moreover as $K\rightarrow \infty$, the resulting skewed $V$-curve converges to the well known $W$-curve
corresponding to the no-feedback $\DOF$. As part of future work, we believe that  the characterization of the approximate feedback capacity region of the $K$-user Gaussian CZIC is an interesting problem. Moreover, we believe that new coding schemes and novel upper bounds would be required to achieve this goal.

\section{Appendix}
\subsection{Proof of Theorem \ref{theoremK2}}
We show that the normalized symmetric feedback capacity is upper bounded as follows:
\begin{align}
\mathcal{C}^{\mathrm{FB}}_{\mathrm{sym},\mathrm{LD}}(\alpha,K)&\leq \max\left(1-\frac{\alpha}{2},\frac{\alpha}{2}\right)\label{UBK2}.
\end{align}
To prove (\ref{UBK2}), we first prove the following upper bound on the sum of the rates of users $1$ and $2$:
\begin{align}
&T(R_{1}+R_{2})\nonumber\\
&= H(W_{1})+H(W_{2})\\
&= H(W_{1}|W_{3},\ldots,W_{K}) + H(W_{2}|W_{1},W_{3},\ldots,W_{K})\label{T2a}\\
&\leq I(W_{1};Y_{1}^{T}|W_{3},\ldots,W_{K}) \nonumber\\
&\qquad+ I(W_{2};Y_{2}^{T},Y_{1}^{T}|W_{1},W_{3},\ldots,W_{K})+\epsilon_{T}\label{T2b}\\
&= I(W_{1};Y_{1}^{T}|W_{3},\ldots,W_{K}) \nonumber\\
&\qquad+ H(Y_{2}^{T},Y_{1}^{T}|W_{1},W_{3},\ldots,W_{K})+\epsilon_{T}\label{T2c}
\end{align}
\begin{align}
&= H(Y_{1}^{T}|W_{3},\ldots,W_{K}) \nonumber\\
&\qquad+ H(Y_{2}^{T}|Y_{1}^{T},W_{1},W_{3},\ldots,W_{K})+\epsilon_{T}\\
&\leq H(Y_{1}^{T}) + H(Y_{2}^{T}|Y_{1}^{T},W_{1},W_{3},\ldots,W_{K})+\epsilon_{T}\label{T2d}\\
&\leq T\max(m,n) + H(Y_{2}^{T}|Y_{1}^{T},W_{1},W_{3},\ldots,W_{K})+\epsilon_{T}\label{T2e}\\
&\leq T\max(m,n) + T(n-m)^{+}+\epsilon_{T}\label{T2f},
\end{align}
where (\ref{T2a}) follows from the fact that the messages $(W_{1},\ldots,W_{K})$ are all mutually independent, (\ref{T2b}) follows from Fano's inequality \cite{Cover:book}, (\ref{T2c}) follows from the deterministic nature of the channel model and (\ref{T2d}) follows from the fact that conditioning reduces entropy.

Before proving (\ref{T2f}) we first prove the following claim:
\begin{claim}\label{claim1}
$(X_{1t},X_{3t},\ldots,X_{Kt})$ is a deterministic function of
$(Y_{1}^{t-1},W_{1},W_{3},\ldots,W_{K})$.
\end{claim}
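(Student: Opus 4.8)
The plan is to prove this by induction on $t$. The intuition is clean: with feedback, each transmitter $k$ can reconstruct whatever signal arrives at its own receiver, and in the cyclic Z-channel the interference at receiver $k$ comes only from transmitter $k+1$. Since we are given the messages $W_1, W_3, \ldots, W_K$ and the output sequence $Y_1^{t-1}$, we should be able to bootstrap our way around the cycle and recover all the relevant channel inputs up to time $t$.

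First I would set up the base case. At $t=1$, each encoding function $f_{k,1}$ depends only on the message $W_k$ (there is no past output to feed back). Hence $X_{k1} = f_{k,1}(W_k)$ is a deterministic function of $W_k$ for every $k$, and in particular $(X_{11}, X_{31}, \ldots, X_{K1})$ is determined by $(W_1, W_3, \ldots, W_K)$ alone --- which is certainly a function of $(Y_1^{0}, W_1, W_3, \ldots, W_K)$ (the conditioning on $Y_1^0$ being vacuous). For the inductive step, assume the claim holds for all time indices up to $t-1$; I want to show $(X_{1t}, X_{3t}, \ldots, X_{Kt})$ is a function of $(Y_1^{t-1}, W_1, W_3, \ldots, W_K)$. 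The encoding functions give $X_{kt} = f_{k,t}(W_k, Y_k^{t-1})$, so it suffices to show that for each $k \in \{1,3,4,\ldots,K\}$, the feedback sequence $Y_k^{t-1}$ is itself a deterministic function of $(Y_1^{t-1}, W_1, W_3, \ldots, W_K)$. This is trivially true for $k=1$. For $k \ge 3$, recall that in the LD-CZIC the output $Y_{k,\tau}$ at any time $\tau \le t-1$ is a deterministic function of $X_{k,\tau}$ and $X_{k+1,\tau}$. By the inductive hypothesis applied at each time $\tau \le t-1$, the inputs $X_{k,\tau}$ and $X_{k+1,\tau}$ are functions of $(Y_1^{\tau-1}, W_1, W_3, \ldots, W_K)$ --- here I use that for $3 \le k \le K-1$ both $k$ and $k+1$ lie in $\{1,3,\ldots,K\}$, and for $k=K$ the neighbor is $X_{K+1} = X_1$, which also lies in the set. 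Hence $Y_{k,\tau}$ is a function of $(Y_1^{t-1}, W_1, W_3, \ldots, W_K)$ for all $\tau \le t-1$, so $Y_k^{t-1}$ is, and therefore so is $X_{kt} = f_{k,t}(W_k, Y_k^{t-1})$.

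The only subtlety --- and the step I would be most careful about --- is the indexing at the two ``ends'' of the chain $3, 4, \ldots, K$: we must confirm that the neighbor appearing in $Y_{k}$ never forces us to reconstruct $X_{2,\tau}$, which is \emph{not} in our conditioning set. The receiver whose output would involve $X_2$ is receiver $1$ (since $Y_1$ sees interference from $X_2$), but we are \emph{given} $Y_1^{t-1}$ outright, so we never need to reconstruct it from scratch. For every other $k$ in $\{3,\ldots,K\}$ the neighbor $X_{k+1}$ is $X_4,\ldots,X_K,X_1$ respectively, all of which are in $\{X_1,X_3,\ldots,X_K\}$, so the induction closes. I would also remark that the same argument works verbatim in the $n<m$ regime, where $Y_k = (U_{k+1}, L_{k+1}\oplus V_k)$ is still a deterministic function of $(X_k, X_{k+1})$; only the explicit form of the deterministic map changes. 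With the claim in hand, the bound $H(Y_2^T \mid Y_1^T, W_1, W_3,\ldots,W_K) \le T(n-m)^+$ claimed in \eqref{T2f} follows because, conditioned on these variables, $X_{2t}$ is the only remaining unknown feeding $Y_{2t}$, and once $X_{2t}$'s interference-carrying top $m$ levels $V_{2t}$ are pinned down by $Y_{1t}$ (which is in the conditioning), the residual uncertainty in $Y_{2t}$ is at most the $(n-m)^+$ non-interfering levels.
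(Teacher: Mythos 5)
Your proof is correct and matches the paper's argument in substance: both reconstruct the inputs of users $1,3,\ldots,K$ by exploiting that $Y_k$ is a deterministic function of $(X_k,X_{k+1})$ and that this index set is closed under $k\mapsto k+1$ once user $2$ is excluded (receiver $1$'s output, the only one involving $X_2$, being supplied outright in the conditioning). The paper organizes the same unrolling as a recursion over user indices $1\to K\to K-1\to\cdots\to 3$ rather than your induction on $t$, but this is only a difference in bookkeeping.
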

\begin{proof}
First note that from (\ref{encodingfunction}), we have
\begin{align}
X_{1t}=f_{1t}\left(W_{1},Y_{1}^{t-1}\right),
\end{align}
and
\begin{align}
X_{Kt}&=f_{Kt}\left(W_{K},Y_{K}^{t-1}\right)\\
&= f_{Kt}\left(W_{K},X_{1}^{t-1},X_{K}^{t-1}\right),
\end{align}
which together imply that
\begin{align}
(X_{1t},X_{Kt},X_{1}^{t-1},X_{K}^{t-1})=f\left(W_{1},W_{K},Y_{1}^{t-1}\right).
\end{align}
Repeating this argument for $k=K-1,\ldots,3$, the proof of the claim is straightforward.
\end{proof}
We now bound the second term in (\ref{T2e}) as follows:
\begin{align}
&H(Y_{2}^{T}|Y_{1}^{T},W_{1},W_{3},\ldots,W_{K})\nonumber\\
&\leq \sum_{t=1}^{T}H(Y_{2t}|Y_{1t},W_{1},W_{3},\ldots,W_{K},Y_{1}^{t-1})\\
&= \sum_{t=1}^{T}H(Y_{2t}|Y_{1t},X_{1t},W_{1},X_{3t},W_{3},\ldots,X_{Kt},W_{K},Y_{1}^{t-1})\label{T2main}\\
&\leq \sum_{t=1}^{T}H(Y_{2t}|Y_{1t},X_{1t},X_{3t})\\
&\leq \sum_{t=1}^{T}H(X_{2t}|Y_{1t},X_{1t},X_{3t})\\
&\leq T(n-m)^{+}\label{T2main2},
\end{align}
where (\ref{T2main}) follows from Claim \ref{claim1} and (\ref{T2main2}) follows from the fact that $(X_{1t},Y_{1t})$ completely determine
at least $m$ levels of $X_{2t}$. This completes the proof of (\ref{T2f}).
Dividing (\ref{T2f}) by $nT$ and taking the limit $T\rightarrow \infty$, we have $\epsilon_{T}\rightarrow 0$, which yields
\begin{align}
\frac{R_{1}+R_{2}}{n}&\leq \max\left(\frac{m}{n},1\right)+ \bigg(1-\frac{m}{n}\bigg)^{+}\\
&= \max(\alpha,1)+ (1-\alpha)^{+}\\
&= \max\left(2-\alpha,\alpha\right).
\end{align}
In a similar manner it can be shown that for any $1\leq j\leq K$,
\begin{align}
\frac{R_{j}+R_{(j+1)\mbox{mod}(K)}}{n}&\leq  \max\left(2-\alpha,\alpha\right).
\end{align}
Adding all such $K$ upper bounds, we obtain
\begin{align}
\frac{2(R_{1}+\ldots+R_{K})}{n}&\leq K\max\left(2-\alpha,\alpha\right),
\end{align}
and hence,
\begin{align}
\mathcal{C}^{\mathrm{FB}}_{\mathrm{sym},\mathrm{LD}}(\alpha,K)&\leq \max\left(1-\frac{\alpha}{2},\frac{\alpha}{2}\right).
\end{align}
This upper bound on the normalized symmetric feedback capacity is independent of $K$ and is
the same as the normalized symmetric capacity \emph{without feedback} when $\alpha\in [2/3,2]$.
Hence, for this interference regime, feedback does not increase the symmetric capacity.
Also note that the range of $\alpha$ in deriving these bounds is immaterial and hence from a
symmetric feedback capacity point of view, the feedback capacity for $K=2$ users always serves as an upper bound for any $K>2$.

\subsection{Proof of Theorem \ref{theoremSUMUB}}
In this section we provide the proof for Theorem \ref{theoremSUMUB}
for the special case in which $K=3$ and $\pi=(1,2,3)$ (identity
permutation). The generalization to arbitrary $(K, \pi)$ is
straightforward.

For the $3$-user interference channel with local feedback, we have the
following upper bound on the sum-rate:
\begin{align}
&T\left(R_{1}+R_{2}+R_{3}\right)\nonumber\\
&= H(W_{1})+H(W_{2})+H(W_{3})\\
&= H(W_{1})+ H(W_{2}|W_{1})+ H(W_{3}|W_{1},W_{2})\label{T3aK3}\\
&\leq I(W_{1};Y_{1}^{T})+ I(W_{2};Y_{2}^{T},Y_{1}^{T}|W_{1})\nonumber\\&\quad+
I(W_{3};Y_{3}^{T},Y_{2}^{T},Y_{1}^{T}|W_{1},W_{2}) + \epsilon_{T}\label{T3bK3}\\
&= H(Y_{1}^{T}) + H(Y_{2}^{T},Y_{1}^{T}|W_{1})+ H(Y_{3}^{T},Y_{2}^{T},Y_{1}^{T}|W_{1},W_{2})\nonumber\\
&\quad -H(Y_{1}^{T}|W_{1}) - H(Y_{2}^{T},Y_{1}^{T}|W_{1},W_{2})\nonumber\\&\quad-
H(Y_{3}^{T},Y_{2}^{T},Y_{1}^{T}|W_{1},W_{2},W_{3})+ \epsilon_{T}\\
&= H(Y_{1}^{T}) + H(Y_{2}^{T}|Y_{1}^{T},W_{1})+ H(Y_{3}^{T}|Y_{2}^{T},Y_{1}^{T},W_{1},W_{2})\nonumber\\
&\quad - H(Y_{3}^{T},Y_{2}^{T},Y_{1}^{T}|W_{1},W_{2},W_{3})+ \epsilon_{T}\label{T3cK3}\\
&\leq \sum_{t=1}^{T}\Big[H(Y_{1t}|Y_{1}^{t-1}) + H(Y_{2t}|Y_{1t},Y_{2}^{t-1},Y_{1}^{t-1},W_{1})\nonumber\\&\hspace{0.5cm}+ H(Y_{3t}|Y_{2t},Y_{1t},Y_{3}^{t-1},Y_{2}^{t-1},Y_{1}^{t-1},W_{1},W_{2})\nonumber\\
& \hspace{0.5cm}-
H(Y_{3t},Y_{2t},Y_{1t}|W_{1},W_{2},W_{3},Y_{1}^{t-1},Y_{2}^{t-1},Y_{3}^{t-1})\Big] + \epsilon_{T}\\
&\leq \sum_{t=1}^{T}\Big[H(Y_{1t}) + H(Y_{2t}|Y_{1t},X_{1t})\nonumber\\&\hspace{1.3cm}+ H(Y_{3t}|Y_{2t},Y_{1t},X_{2t},X_{1t})\nonumber\\&\hspace{1.3cm}-
H(Y_{3t},Y_{2t},Y_{1t}|X_{1t},X_{2t},X_{3t})\Big]+ \epsilon_{T}\\
&\leq T\max_{p(x_{1},x_{2},x_{3})}\Big[H(Y_{1}) + H(Y_{2}|Y_{1},X_{1})\nonumber\\&\hspace{2.5cm}+ H(Y_{3}|Y_{2},Y_{1},X_{2},X_{1})
\nonumber\\&\hspace{2.5cm}-H(Y_{3},Y_{2},Y_{1}|X_{1},X_{2},X_{3})\Big]+ \epsilon_{T},\label{T3dK3}
\end{align}
where (\ref{T3aK3}) follows from the independence of the messages, (\ref{T3bK3}) follows from Fano's inequality \cite{Cover:book}, and
(\ref{T3cK3}) follows from the fact that the negative term
corresponding to the $k$th mutual information is canceled by a part of the positive term
in the $(k+1)$th mutual information, for $k=1,\ldots,(K-1)$. Finally, dividing (\ref{T3dK3}) by $T$ and letting $T\rightarrow \infty$, we have the proof of Theorem \ref{theoremSUMUB}.

\subsection{Proof of (\ref{GaussianTypeIBound})}
We first obtain a bound on the sum of the rates of users $1$ and $2$:
\begin{align}
&T(R_{1}+R_{2})\nonumber\\
&= H(W_{1})+H(W_{2})\\
&= H(W_{1}|W_{3},\ldots,W_{K}) + H(W_{2}|W_{1},W_{3},\ldots,W_{K})\label{T2Ga}\\
&\leq I(W_{1};Y_{1}^{T},Z_{3}^{T},\ldots,Z_{K}^{T}|W_{3},\ldots,W_{K}) \nonumber\\&\quad+ I(W_{2};Y_{2}^{T},Y_{1}^{T},Z_{3}^{T},\ldots,Z_{K}^{T}|W_{1},W_{3},\ldots,W_{K})+\epsilon_{T}\label{T2Gb}\\
&= h(Y_{1}^{T},Z_{3}^{T},\ldots,Z_{K}^{T}|W_{3},\ldots,W_{K}) \nonumber\\&\quad+ h(Y_{2}^{T}|Y_{1}^{T},Z_{3}^{T},\ldots,Z_{K}^{T},W_{1},W_{3},\ldots,W_{K})\nonumber\\
&\quad -h(Y_{2}^{T},Y_{1}^{T},Z_{3}^{T},\ldots,Z_{K}^{T}|W_{1},W_{2},W_{3}\ldots,W_{K})+\epsilon_{T}\label{T2Gc}\\
&= h(Y_{1}^{T},Z_{3}^{T},\ldots,Z_{K}^{T}|W_{3},\ldots,W_{K}) \nonumber\\&\quad+ h(Y_{2}^{T}|Y_{1}^{T},Z_{3}^{T},\ldots,Z_{K}^{T},W_{1},W_{3},\ldots,W_{K})\nonumber\\
&\quad -\sum_{t=1}^{T}h(Y_{2t},Y_{1t},Z_{3t},\ldots,Z_{Kt}|W_{1},\ldots,W_{K},Y_{2}^{t-1},Y_{1}^{t-1},\nonumber\\&\hspace{5cm}Z_{3}^{t-1},\ldots,Z_{K}^{t-1})+\epsilon_{T}\label{T2Gcc}\\
&= h(Y_{1}^{T},Z_{3}^{T},\ldots,Z_{K}^{T}|W_{3},\ldots,W_{K}) +\epsilon_{T}\nonumber\\&\quad+ h(Y_{2}^{T}|Y_{1}^{T},Z_{3}^{T},\ldots,Z_{K}^{T},W_{1},W_{3},\ldots,W_{K})\nonumber\\
&\quad -\sum_{t=1}^{T}h(Y_{2t},Y_{1t},Z_{3t},\ldots,Z_{Kt}|X_{1t},X_{2t},X_{3t},\ldots,X_{Kt})\label{T2Gd}\\
&= h(Y_{1}^{T},Z_{3}^{T},\ldots,Z_{K}^{T}|W_{3},\ldots,W_{K})\nonumber\\&\quad + h(Y_{2}^{T}|Y_{1}^{T},Z_{3}^{T},\ldots,Z_{K}^{T},W_{1},W_{3},\ldots,W_{K})\nonumber\\
&\quad -\sum_{t=1}^{T}h(Z_{1t},Z_{2t},Z_{3t},\ldots,Z_{Kt})+\epsilon_{T}\label{T2Ge}\\
&\leq h(Y_{1}^{T},Z_{3}^{T},\ldots,Z_{K}^{T}) \nonumber\\&\quad+ h(Y_{2}^{T}|Y_{1}^{T},Z_{3}^{T},\ldots,Z_{K}^{T},W_{1},W_{3},\ldots,W_{K})\nonumber\\
&\quad -\sum_{t=1}^{T}h(Z_{1t},Z_{2t},Z_{3t},\ldots,Z_{Kt})+\epsilon_{T}\label{T2Gf}\\
&\leq h(Y_{1}^{T})+h(Z_{3}^{T},\ldots,Z_{K}^{T}) \nonumber\\&\quad+ h(Y_{2}^{T}|Y_{1}^{T},Z_{3}^{T},\ldots,Z_{K}^{T},W_{1},W_{3},\ldots,W_{K})\nonumber\\
&\quad -\sum_{t=1}^{T}h(Z_{1t},Z_{2t},Z_{3t},\ldots,Z_{Kt})+\epsilon_{T}\label{T2Gg}\\
&\leq TA+\sum_{t=1}^{T}h(Z_{3t},\ldots,Z_{Kt}) \nonumber\\&\quad+ h(Y_{2}^{T}|Y_{1}^{T},Z_{3}^{T},\ldots,Z_{K}^{T},W_{1},W_{3},\ldots,W_{K})\nonumber\\
&\quad -\sum_{t=1}^{T}h(Z_{1t},Z_{2t},Z_{3t},\ldots,Z_{Kt})+\epsilon_{T}\label{T2Gh}
\end{align}
\begin{align}
&\leq TA + \sum_{t=1}^{T}h(Y_{2t}|X_{1t},Y_{1t},X_{3t}) -\sum_{t=1}^{T}h(Z_{2t})+\epsilon_{T}\label{T2Gi}\\
&= TA\nonumber\\&\quad +\sum_{t=1}^{T}h(\sqrt{\SNR}X_{2t}+Z_{2t}|X_{1t},\sqrt{\INR}X_{2t}+Z_{1t},X_{3t})\nonumber\\&\quad-\sum_{t=1}^{T}h(Z_{2t})+\epsilon_{T}\label{T2Gi2}\\
&\leq TA+ \sum_{t=1}^{T}h(\sqrt{\SNR}X_{2t}+Z_{2t}|\sqrt{\INR}X_{2t}+Z_{1t})\nonumber\\&\quad-\sum_{t=1}^{T}h(Z_{2t})
+\epsilon_{T}\label{T2Gi3}\\
&\leq TA+ T(C-E)+\epsilon_{T}\label{T2Gfa}\\
&= T(A+ C-E)+\epsilon_{T}\label{T2Gfb},
\end{align}
where (\ref{T2Ga}) follows from the independence of the messages, (\ref{T2Gb}) follows from Fano's inequality, (\ref{T2Gcc}) follows from the chain rule,
and (\ref{T2Gd}) follows from the following argument:
\begin{align}
&X_{1t} \mbox{ is a function of } (W_{1},Y_{1}^{t-1})\nonumber\\
&X_{2t} \mbox{ is a function of } (W_{2},Y_{2}^{t-1})\nonumber,\\
&X_{Kt} \mbox{ is a function of } (W_{K},X_{1}^{t-1},Z_{K}^{t-1})\nonumber,\\
&X_{(K-1)t} \mbox{ is a function of } (W_{K},X_{K}^{t-1},Z_{K-1}^{t-1})\nonumber,\\
&\hspace{0.1cm}\vdots\\
&X_{4t} \mbox{ is a function of } (W_{4},X_{5}^{t-1},Z_{4}^{t-1})\nonumber,\\
&X_{3t} \mbox{ is a function of } (W_{3},X_{4}^{t-1},Z_{3}^{t-1})\nonumber.
\end{align}
This argument allows us to write $(X_{1t},X_{2t},\ldots,X_{Kt})$ in the conditioning in the last term in (\ref{T2Gd}) and then use the memoryless
property of the channel to arrive at (\ref{T2Ge}).

The same argument also allows us to write $(Y_{1t},X_{1t},X_{3t})$ in the conditioning of the third term in (\ref{T2Gi}) and subsequently
drop all the remaining random variables from the conditioning. We remark here that this argument is similar to Claim \ref{claim1} used
in the proof of Theorem \ref{theoremK2} for the linear deterministic model.

Finally, normalizing (\ref{T2Gfb}) by $T$ and taking the limit $T\rightarrow \infty$, so that $\epsilon_{T}\rightarrow 0$, we have
\begin{align}
R_{1}+R_{2}&\leq A+C-E.
\end{align}
In a similar manner, it can be shown that for any $1\leq j\leq K$, we have
\begin{align}
R_{j}+R_{(j+1)}&\leq A+C-E.
\end{align}
Adding all such $K$ bounds, we obtain
\begin{align}
2(R_{1}+\ldots+R_{K})&\leq K(A+C-E),
\end{align}
which yields
\begin{align}
\mathcal{C}_{\mathrm{sum},\mathrm{G}}^{\mathrm{FB}}(K)&\leq \frac{K}{2}(A+C-E).
\end{align}
Hence, we have proved the analog of the type-I upper bound for the $K$-user Gaussian CZIC.

\subsection{Marginal Range of Parameters Excluded in Section~\ref{GaussianSection}}
\label{app: excluded-range}

In the coding scheme and performance analysis presented for various regimes of parameter in Section~\ref{GaussianSection}, we have inherently always assumed that $\SNRP$ and $\INRP$ (and possibly their ratio) are greater than certain constants, so that the desired rates are non-negative. Although this is a valid assumption for the range of parameters of primary interest, we prove the bounded gap from capacity result for arbitrary parameters for completeness. In this section we focus on the range of parameters excluded from the discussions in Section~\ref{GaussianSection}, and show the bounded gap result. In sake of brevity, we present this analysis only for the very weak interference regime ($0\leq \alpha \leq 1/2$). The analysis for other ranges of $\alpha$ is very similar, and is omitted.

\paragraph{Very Weak Interference $0\leq \alpha\leq 1/2$}
In this regime we have $\INRP^2\leq \SNRP$. Recall the rate allocation presented in \eqref{eq:rates-vw},
\begin{align}
\begin{split}
 R'_1&=\cgfp{\INRP+1}{3}, \quad k=1,\dots,K,  \\
R'_2&=\cgfp{\frac{\SNRP}{\INRP}+1}{2\INRP+1}, \quad k=1,\dots,K,\ j=1,\dots,K,\\
R'_3&= \cgfp{\INRP+1}{2}, \quad k=1,\dots,K, \ j=1,\dots,K.
\end{split}
\label{eq:rates-vw-p}
\end{align}
Consider the following four cases:
\begin{center}
  \begin{tabular}{| l || c | c | }
    \hline
    I &  $\INRP\geq 2$ &  $\SNRP\geq 2\INRP^2$ \\ \hline
II &  $\INRP\geq 2$ &  $\INRP^2 \leq \SNRP< 2\INRP^2$ \\ \hline
    III &  $\INRP< 2$ &  $\SNRP\geq 2\INRP^2$ \\ \hline
IV &  $\INRP< 2$ &  $\INRP^2 \leq \SNRP< 2\INRP^2$ \\ \hline
  \end{tabular}
\end{center}
\noindent \underline{\textbf{Case I.}} The conditions in the first case guarantee that all the rates in \eqref{eq:rates-vw} are positive, and so the analysis in Section~\ref{seubsec:G-VW} is valid.

In the following we analyze the remaining three cases which were excluded in Section~\ref{seubsec:G-VW}.

\noindent \underline{\textbf{Case II.}} In case II, $R'_2=0$, and hence, the total achievable rate would be
\begin{align}
R_{\mathrm{sym}}&=\frac{ R_1 +  K R_3}{K}\nonumber\\
&=  \frac{1}{2}\log(1+\INRP) +\frac{1}{2K}\log(1+\INRP) -\frac{K+\log 3}{2K}.
\end{align}
However, note that under the conditions of case II, from \eqref{eq:UB:G-VW} we have
\begin{align}
&\mathcal{C}_{\mathrm{sym},\mathrm{G}}^{\mathrm{FB}}(K)\nonumber\\
&\leq  (B-E) + \frac{E}{K}\nonumber\\
&\leq \frac{1}{2}\log \left(1+3\INRP^2+2\INRP+2\sqrt{2\INRP^3}\right)\nonumber\\&\quad- \frac{1}{2}\log(1+\INRP)+\frac{1}{2K}\log(1+\INRP)\nonumber\\
&\leq \frac{1}{2}\log 4\left(1+\INRP\right)^2 - \frac{1}{2}\log(1+\INRP)+\frac{1}{2K}\log(1+\INRP)\nonumber\\
&\leq \frac{1}{2}\log(1+\INRP) +\frac{1}{2K}\log(1+\INRP)+1.
\end{align}
Therefore, the gap between the upper bound and the achievable rate can be upper bounded as
\begin{align}
\Delta \leq 1+\frac{K+\log 3}{2K} <2.
\end{align}

\noindent \underline{\textbf{Case III.}} Next, we should examine the conditions in case III. In this case $R'_1=0$ and $R'_3$ is upper bounded by a constant. So we have
\begin{align}
R_{\mathrm{sym}}>R_2&= \cgf{\frac{\SNRP}{\INRP}+1}{2\INRP+1}\nonumber\\
&\geq \frac{1}{2}\log(1+\SNRP)-\frac{1}{2}\log 10.
\end{align}
Under this condition the upper bound in \eqref{eq:UB:G-VW} reduces to
\begin{align}
&\mathcal{C}_{\mathrm{sym},\mathrm{G}}^{\mathrm{FB}}(K)\nonumber\\
&\leq  (B-E) + \frac{E}{K}\nonumber\\
&\leq \frac{1}{2}\log\left(1+\INR+\frac{\SNRP+2\sqrt{\INR\cdot\SNRP}}{1+\INR}\right) + \frac{1}{6} \log 3\nonumber\\
&\leq \frac{1}{2}\log(1+\SNRP) +\frac{1}{2} \log 3 + + \frac{1}{6} \log 3,
\end{align}
where we used the facts that $K\geq 3$ and $\INR<2$ in the second inequality. Therefore,
\begin{align}
\Delta\leq \frac{1}{2}\log 10 + \frac{2}{3}\log 3 \leq 3.
\end{align}

\noindent \underline{\textbf{Case IV.}} Finally, in the last case $R'_1=R'_2=0$, and $R'_3$ is a constant, and we do not claim any positive rate  based on the proposed coding scheme. However, under this condition, the upper bound in \eqref{eq:UB:G-VW} would be
\begin{align}
\mathcal{C}_{\mathrm{sym},\mathrm{G}}^{\mathrm{FB}}(K)&\leq  (B-E) + \frac{E}{K}\nonumber\\
&\leq \frac{1}{2}\log \left(1+3\INRP^2+2\INRP+2\sqrt{2\INRP^3}\right)\nonumber\\
&\leq \frac{1}{2} \log 25 < \frac{5}{2},
\end{align}
and hence, the gap is bounded by $5/2$.

\subsection{Verification of Rate Bounds in Section~\ref{GaussianSection}}
\label{app: verify-bounds}
In this section we verify the constraints on the rates allocated to sub-messages in the coding scheme used for the Gaussian network. These constraints are due to decodability of the messages at different terminals based on the proposed 
decoding strategies. We verify the constraints for very weak interference regime, $0 \leq \alpha \leq 1/2$, and omit the details for other cases for sake of brevity. More precisely, will will show that the rates proposed in \eqref{eq:rates-vw} satisfy the inequalities in \eqref{const:VW-1}--\eqref{const:VW-4}, provided that $\INR\geq 2$ and $\SNR \geq 2\INR^2$.

It is trivial to see that $R_1$ in \eqref{eq:rates-vw} satisfies \eqref{const:VW-1}. Comparing $R_2$ in Ê\eqref{eq:rates-vw} to the right-hand side (RHS) of \eqref{const:VW-3} reveals that both expressions have identical denominators, while the nominator of \eqref{const:VW-3} has an extra additive $\INR$ term, which makes it larger than the proposed rate. Similarly, $R_3$ in Ê\eqref{eq:rates-vw} is always smaller than the RHS of \eqref{const:VW-4}.

It remains to verify \eqref{const:VW-2}. To this end, it suffices to show that 
\begin{align*}
\frac{\INR+1}{3} \leq \frac{\SNR+\INR+1}{\frac{\SNR}{\INR}+\INR+1},
\end{align*}
or equivalently, 
\begin{align*}
3(\SNR+\INR+1)- \left( \SNR+\INR^2+2\INR + \frac{\SNR}{\INR}+1\right)\geq 0,
\end{align*}
which can be further simplified to 
\begin{align*}
(\SNR-\INR^2) + \left(\SNR-\frac{\SNR}{\INR}\right)+ \INR+2\geq 0.
\end{align*}
Note that the latter is obvious due to the regime assumptions for the values of Ê$\SNR$ and $\INR$.
\section*{Acknowledgement}
We are grateful to the Associate Editor and the reviewers for their careful reading 
of the manuscript and helpful suggestions.
\bibliographystyle{IEEEtran}
\bibliography{refravi}

% Generated by IEEEtran.bst, version: 1.13 (2008/09/30)
\begin{thebibliography}{10}
\providecommand{\url}[1]{#1}
\csname url@samestyle\endcsname
\providecommand{\newblock}{\relax}
\providecommand{\bibinfo}[2]{#2}
\providecommand{\BIBentrySTDinterwordspacing}{\spaceskip=0pt\relax}
\providecommand{\BIBentryALTinterwordstretchfactor}{4}
\providecommand{\BIBentryALTinterwordspacing}{\spaceskip=\fontdimen2\font plus
\BIBentryALTinterwordstretchfactor\fontdimen3\font minus
  \fontdimen4\font\relax}
\providecommand{\BIBforeignlanguage}[2]{{%
\expandafter\ifx\csname l@#1\endcsname\relax
\typeout{** WARNING: IEEEtran.bst: No hyphenation pattern has been}%
\typeout{** loaded for the language `#1'. Using the pattern for}%
\typeout{** the default language instead.}%
\else
\language=\csname l@#1\endcsname
\fi
#2}}
\providecommand{\BIBdecl}{\relax}
\BIBdecl

\bibitem{SomekhWyner2007}
O.~Somekh, B.~M. Zaidel, and S.~S. (Shitz), ``Sum rate characterization of
  joint multiple cell-site processing,'' \emph{IEEE Trans. Inf. Theory},
  vol.~53, no.~12, pp. 4473--4497, Dec. 2007.

\bibitem{WynerModel1994}
A.~D. Wyner, ``Shannon-theoretic approach to a {G}aussian cellular
  multiple-access channel,'' \emph{IEEE Trans. Inf. Theory}, vol.~40, no.~6,
  pp. 1713--1727, Nov. 1994.

\bibitem{WeiYuCIC}
L.~Zhou and W.~Yu, ``On the capacity of the {K}-user cyclic {G}aussian
  interference channel,'' \emph{IEEE Trans. Inf. Theory, \emph{to appear}}.

\bibitem{ETW2008}
R.~Etkin, D.~Tse, and H.~Wang, ``Gaussian interference channel capacity to
  within one bit,'' \emph{IEEE Trans. Inf. Theory}, vol.~54, no.~12, pp.
  5534--5562, Dec. 2008.

\bibitem{ErkipCIC}
Y.~Liu and E.~Erkip, ``On the sum capacity of {K}-user cascade {G}aussian
  {Z}-interference channel,'' in \emph{Proc. IEEE International Symposium on
  Information Theory (ISIT)}, St. Petersburg, Russia, 2011, pp. 1382--1386.

\bibitem{Han:1981}
T.~Han and K.~Kobayashi, ``A new achievable rate region for the interference
  channel,'' \emph{IEEE Trans. Inf. Theory}, vol.~27, no.~1, pp. 49--60,
  January 1981.

\bibitem{KramerIC:2002}
G.~Kramer, ``Feedback strategies for white {G}aussian interference networks,''
  \emph{IEEE Trans. Inf. Theory}, vol.~48, no.~6, pp. 1423--1438, June 2002.

\bibitem{GK3:2006}
M.~Gastpar and G.~Kramer, ``On noisy feedback for interference channels,'' in
  \emph{Proc. Asilomar Conf. on Signals, Systems, and Computers, \emph{Pacific
  Grove, CA, USA}}, Oct. 29-Nov. 1 2006, pp. 216--220.

\bibitem{TuninettiOuter}
D.~Tuninetti, ``K-user interference channels: General outer bound and
  sum-capacity for certain {G}aussian channels,'' in \emph{Proc. IEEE
  International Symposium on Information Theory (ISIT)}, St. Petersburg,
  Russia, 2011, pp. 1166--1170.

\bibitem{TandonUlukus:2011}
R.~Tandon and S.~Ulukus, ``Dependence balance based outer bounds for {G}aussian
  networks with cooperation and feedback,'' \emph{IEEE Trans. Inf. Theory},
  vol.~57, no.~7, pp. 4063--4086, July 2011.

\bibitem{SuhTseIT}
C.~Suh and D.~Tse, ``Feedback capacity of the {G}aussian interference channel
  to within two bits,'' \emph{IEEE Trans. Inf. Theory}, vol.~57, no.~5, pp.
  2667--2685, May 2011.

\bibitem{Salman2011}
A.~S. Avestimehr, S.~N. Diggavi, and D.~N.~C. Tse, ``Wireless network
  information flow: A deterministic approach,'' \emph{IEEE Trans. Inf. Theory},
  vol.~57, no.~4, pp. 1872--1905, April 2011.

\bibitem{TP:Allerton2011}
R.~Tandon and H.~V. Poor, ``On the feedback capacity of {K}-user cyclic
  interference channel,'' in \emph{Proc. 49th Annual Allerton Conference on
  Communications, Control and Computing}, Monticello, IL, September 2011, pp.
  855--862.

\bibitem{MTP:ITJournal}
S.~Mohajer, R.~Tandon, and H.~V. Poor, ``On the feedback capacity of the fully
  connected {K}-user interference channel,'' \emph{IEEE Trans. Inf. Theory,
  \emph{to appear}}.

\bibitem{Cover:book}
T.~M. Cover and J.~A. Thomas, \emph{Elements of Information Theory}.\hskip 1em
  plus 0.5em minus 0.4em\relax New York: Wiley, 1991.

\end{thebibliography}
\begin{biographynophoto}{Ravi Tandon} (S03, M09) received the B.Tech degree
in electrical engineering from the Indian Institute
of Technology (IIT), Kanpur in 2004 and
the Ph.D. degree in electrical and computer engineering
from the University of Maryland, College
Park in 2010. From 2010 until 2012, he was
a post-doctoral research associate with Princeton
University. In 2012, he joined Virginia Polytechnic
Institute and State University (Virginia Tech)
at Blacksburg, where he is currently a Research
Assistant Professor in the Department of Electrical
and Computer Engineering. His research interests are in the areas of network information
theory, communication theory for wireless networks and information theoretic
security.

Dr. Tandon is a recipient of the Best Paper Award at the Communication
Theory symposium at the 2011 IEEE Global Communications Conference.
\end{biographynophoto}

\begin{biographynophoto}
{Soheil Mohajer} received the B.Sc. degree in electrical
engineering from the Sharif University of
Technology, Tehran, Iran, in 2004, and the M.Sc.
and Ph.D. degrees in communication systems both
from Ecole Polytechnique F\'{e}d\'{e}rale de Lausanne
(EPFL), Lausanne, Switzerland, in 2005 and 2010,
respectively. He then joined Princeton University,
New Jersey, as a post-doctoral research associate. Dr.
Mohajer has been a post-doctoral researcher at the
University of California at Berkeley, since October
2011.

His research interests include network information theory, data compression, wireless communication,  and bioinformatics.
\end{biographynophoto}

\begin{biographynophoto}
{H. Vincent Poor} (S72, M77, SM82, F87) received
the Ph.D. degree in electrical engineering and computer
science from Princeton University in 1977.
From 1977 until 1990, he was on the faculty of the
University of Illinois at Urbana-Champaign. Since
1990 he has been on the faculty at Princeton, where
he is the Dean of Engineering and Applied Science,
and the Michael Henry Strater University Professor
of Electrical Engineering. Dr. Poor's research interests
are in the areas of stochastic analysis, statistical
signal processing and information theory, and their
applications in wireless networks and related fields including social networks
and smart grid. Among his publications in these areas are 
{\it Smart Grid Communications and Networking} 
(Cambridge University Press, 2012) and {\it Principles of 
Cognitive Radio} (Cambridge University Press, 2013).

Dr. Poor is a member of the National Academy of Engineering and the
National Academy of Sciences, a Fellow of the American Academy of
Arts and Sciences, and an International Fellow of the Royal Academy of
Engineering (U. K.). He is also a Fellow of the Institute of Mathematical
Statistics, the Optical Society of America, and other organizations. In 1990,
he served as President of the IEEE Information Theory Society, in 2004-07
as the Editor-in-Chief of these {\it Transactions}, and in 2009 as General Co-chair
of the IEEE International Symposium on Information Theory, held in Seoul,
South Korea. He received a Guggenheim Fellowship in 2002 and the IEEE
Education Medal in 2005. Recent recognition of his work includes the 2010
IET Ambrose Fleming Medal for Achievement in Communications, the 2011
IEEE Eric E. Sumner Award, the 2011 IEEE Information Theory Paper Award,
and honorary doctorates from Aalborg University, the Hong Kong University
of Science and Technology, and the University of Edinburgh.
\end{biographynophoto}

\end{document}